\pgfplotsset{compat=1.12}
\DeclareMathOperator{\sign}{sign} %for sign
\DeclareMathOperator*{\argmax}{arg\,max} %for argmax
\theoremstyle{plain}
\newtheorem{theorem}{Theorem}
\newtheorem{lemma}{Lemma}
\newtheorem{corollary}[lemma]{Corollary}
\newtheorem{proposition}[lemma]{Proposition}
\newtheorem{assumption}{Assumption}
\theoremstyle{remark}
\newtheorem{definition}{Definition}
\newtheorem{remark}{Remark}
\newtheorem{example}{Example}
\def\keywordname{{\bf Keywords:}}
\def\jelname{{\bf JEL:}}
\providecommand{\keywords}[1]{\def\and{{\textperiodcentered} }%
\par\addvspace\baselineskip
\noindent\keywordname\enspace\ignorespaces#1}%
\providecommand{\jel}[1]{\def\and{{\textperiodcentered} }%
\par
\noindent\jelname\enspace\ignorespaces#1}%
\title{Continuous Social Networks\thanks{We thank the WINE 2024 reviewers, Tristan Tomala, Frederic Koessler, and seminar participants for their comments. Juli\'an Chitiva gratefully acknowledges the support of Hi!PARIS and the ANR (``Investissements d'Avenir programme'') under grant ANR-18-EURE-0005/EUR DATA EFM. This work was partially done when Juli\'an Chitiva was visiting LUISS University in Rome. Xavier Venel's work was partially supported by the MIUR PRIN projects  ``Learning in Markets and Society'' (2022EKNE5K) and ``Supply Chain Disruptions, Financial losses and their prevention'' (PNRR P2022XT8C8). Xavier Venel is a member of Gnampa-Indam.}}
\author{Julián Chitiva\thanks{HEC Paris -- \href{mailto:julian.chitiva@hec.edu}{julian.chitiva@hec.edu}} and Xavier Venel\thanks{Luiss University -- \href{mailto:xvenel@luiss.it}{xvenel@luiss.it}}}
\date{}%\monthyeardate\today}
\begin{document}
\maketitle
\begin{abstract}
    We develop an extension of the classical model of \cite{degroot1974} to a continuum of agents when they interact among them according to a DiKernel. 
    We show that, under some regularity assumptions, the continuous model is the limit case of the discrete one. 
    Additionally, we establish sufficient conditions for the emergence of consensus.
    We provide some applications of these results. 
    First, we establish a canonical way to reduce the dimensionality of matrices by comparing matrices of different dimensions in the space of DiKernels.
    Then, we develop a model of Lobby Competition where two lobbies compete to bias the opinion of a continuum of agents.
    We give sufficient conditions for the existence of a Nash Equilibrium and study their relation with the equilibria of discretizations of the game.
    Finally, we characterize the equilibrium for a particular case of DiKernels.  
\end{abstract}
\keywords{Continuous DeGroot \and Dynamic of Opinions \and Lobby Competition \and Dimensionality Reduction \and DiKernels \and Graphons}
\jel{D85 \and C72}
\section{Introduction}

The dynamics of opinions play a crucial role in shaping economic outcomes, decision-making processes, and policy choices.
Recent political events, such as Brexit or the 2016 U.S. presidential election, have surprised political scientists and economists.
The increasing popularity of social network platforms like Facebook  contributed significantly to these outcomes.
For example, Facebook has shown a sustained growth in Daily Active Users (DAUs) jumping from 483 million in 2011 to 2.06 billion in 2023 \citep{meta2012,meta2023}.
Economic Theory needs a comprehensive theoretical framework to understand the complexities of opinion dynamics in large economic systems.
This paper synthesizes the theoretical frameworks of the DeGroot model, the models of lobby competition, and DiKernel Theory, into an integrated approach that captures comprehensively the dynamics of opinion formation under lobby influence. 
Each of these frameworks offers valuable insights into distinct aspects of opinion formation, and their integration promises a more holistic understanding of the complex dynamics that shape public opinion.
Furthermore, integrating DiKernel theory offers valuable mathematical tools to analyze the complex structure of large economic networks.
DiKernels allow us to characterize and understand the relationships between economic actors in large societies, which, in turn, impact the spread of opinions, the formation of economic consensus, and the emergence of polarization.

Originally developed in \cite{degroot1974} for social psychology, the DeGroot model provides a fundamental approach to studying opinion dynamics within social networks.
In this model, the social structure is described by a weighted directed network in which agents hold opinions about some topic of interest (e.g. politics, brands, etc.).
It assumes a discrete-time updating process in which individuals' new opinion is the average of their neighbors' opinions.
While the DeGroot model offers a solid foundation for social learning, it faces computational restrictions when taking large networks into consideration, which are important for capturing real-world opinion dynamics.
To address these limitations, we develop a novel extension of the model of DeGroot to networks with a continuum of agents.
The ``continuous DeGroot model" constitutes the main contribution of this paper and our first result. This model captures the dynamics of opinion evolution in the continuous case. 
A natural question is whether consensus arises. 
\cite{golub2010naive} show that under certain conditions on the adjacency matrix, this updating process leads to a consensus in the opinions over time.
To study this question for the continuous setup, we associate the DiKernel with a Markov chain on a continuous space state. 
By analyzing the ergodic properties of these Markov chains, we establish sufficient conditions for consensus emergence, constituting our second result.
Additionally, we study the relationship between the continuous and discrete DeGroot models. 
We begin by drawing the parallel between matrices, that represent weighted directed networks, and their extension to a continuum of agents, called DiKernels. 
Then, we show that under Lipschitz\footnote{The Lipschitz condition is imposing some regularity to the function that rule social interactions.} conditions on the DiKernel, the continuous DeGroot model is the limit case of the discrete one as the number of agents goes to infinity. This constitutes our third result.  

This work contributes to the growing strand of literature in economics that uses Graphons (and in general DiKernels) to study the interactions between non-atomic agents in large economic environments (see \cite{erol2023contagion,aurell2022finite,shah2019reducing,borgs2017graphons}) and characterizes the equilibrium sets of these games (see \cite{pariseOzdaglar_graphonGames,rokade2023graphon}).
Moreover, this extension allows us to study economic scenarios that involve interactions between a large number of agents through a continuous approximation. 
We propose some applications of our model both in non-strategic and strategic setups. 
On the non-strategic side, we propose a method to reduce the dimensionality of discrete DeGroot problems by quantifying the error one is making when  considering aggregation of agents in groups.
This process of dimensionality reduction constitute our fourth result.
Then, we propose an application to strategic setups, which drives the economic implications of our model. 
We consider the problem of competition between lobbies that aim at influencing economic decision-making in a society, through strategic campaigns and resource mobilization.  
This framework acknowledges that opinions are not solely determined by information exchange but also by persuasion, cognitive biases, and social influence. 
Moreover, it provides a more realistic representation of opinion formation in economic and political contexts. 
Understanding the mechanics of lobby competition is essential to comprehend how interest groups shape public opinion and affect economic policies. 
There have been different approaches to model this competition, particularly in the discrete case.
For example, \cite{grabisch2018strategic}
model the competition of lobbies with opposing interests by modifying the structure of the graph that represents the society in a static fashion. 
\cite{mandel_venel} provide a more dynamic approach to this problem. 
We propose a one-shot game between two opposing lobbies who want to bias the opinion of a continuum of \emph{non-strategic} agents in their favor.
In this game, each lobby has a budget constraint, and chooses a function that determines the level of influence the lobby is exerting on each agent.
Then, based on the influence of each lobby, agents change their initial opinion into a new one before updating it \textit{\`a la} DeGroot.
As in \cite{venel2021regularity}, the lobbies’ utility function is a discounted sum of the opinions of the agents, following the intuition that lobbies would prefer that a large mass of agents get to their ideal opinion as soon as possible.
Our fifth result shows the existence of the Nash Equilibrium under some sufficient conditions on the strategy Sets and the ``Competition Operator''.
Finally, our sixth result studies conditions under which a Nash Equilibrium of the game is an $\varepsilon$\nobreakdash-Nash Equilibrium of the discretization of the game.

The remainder of the paper is organized as follows. 
In Section~\ref{sec:continuous_social_dynamics}, we establish the connection between the discrete and continuous versions of the DeGroot model. 
In particular, we give sufficient conditions for the emergence of consensus, we study the relationship between the discrete and continuous DeGroot models, and we study the problem of dimensionality reduction. 
In Section~\ref{sec:competition}, we provide a game-theoretical model of lobby competition to influence a continuum of agents, prove the existence of the Nash Equilibria of the game, and give the characterization of the equilibrium strategies for particular setups of the game. 
Moreover, we give conditions under which a Nash Equilibrium of the continuous case are also an $\varepsilon$\nobreakdash-Nash Equilibrium of the discretization of the game.  
Section~\ref{sec:conclusion} concludes. 
All proofs are in the appendix.

\subsection{Related Literature}

The competition between lobbies to influence a set of agents is of great economic interest and has been widely studied. 
First, economists reduced the question to the problem of one lobby that wants to influence a set of agents and needs to identify the key player. 
This notion of key-player was first studied by \cite{bonacich1972factoring} and \cite{freeman1977set}, and more recently in works as \cite{ballester2006s} or \cite{banerjee2013diffusion}.
Some natural directions for extending this one\nobreakdash-player analysis is, to increase the number of lobbies, or how lobbies influence the agents across time. 
These questions have been approached in several articles when the set of agents is discrete. 
For example, \cite{bimpikis2016competitive} study two firms who allocate their marketing budget at the beginning of the game, to influence a set of agents that exchange some piece of information in each period.
\cite{GOYAL201958} choose a contagion framework,  where each lobby chooses a seed from which their opinion propagates in the network.
The closest models to our own are those studied by \cite{lever2010strategic}, \cite{grabisch2018strategic}, \cite{mandel_venel}, and \cite{venel2021regularity}. 
They are based on the model of \cite{degroot1974}, which proposes a na\"ive (non-Bayesian) information transmission model between players and represents a non-strategic way of modeling dynamics of opinions inside a network. 
Opinions are represented by numbers between 0 and 1, and agents update them in discrete time as a weighted average of the opinions at the previous stage of their neighbors. 
\cite{grabisch2018strategic} models the competition of lobbies with opposing interests by modifying the structure of the graph that represents the society. 
Each lobby forms a link with a member of society, in order to change the dynamics of opinions that evolve \emph{\`a la} DeGroot according to this new network structure. 
In this model, the lobbies only act at the beginning of the game in a static manner. 
In \cite{mandel_venel}, the authors depart from the static influence and study the dynamic competition between lobbies who change their target during the game.
Although the lobbies' influence affects the weights that the target agent uses in his updating process, the network structure does not change, as in \cite{grabisch2018strategic}.
\cite{venel2021regularity} proposes a class of stochastic games called dynamic opinion games. These games propose a general model in which lobbies can exert unrestricted influence on the network, relaxing the assumption that lobbies perturb an existing network as in \cite{grabisch2018strategic}.
 
So far, the literature has modeled influence competition for a discrete set of agents.
Nevertheless, social networks have increased exponentially in size, posing a computational problem for analyzing network models.
Therefore, we need to model social networks when the number of agents diverges. 
\cite{lovaszSzegedy_limits} gave the mathematical foundations of Graphon Theory, which studies the family of bounded symmetric measurable functions $W:[0,1]^2\to [0,1]$, known as Graphons. 
These define the limit of a sequence of graphs when the number of nodes tends to infinity. 
Extensions of these objects, were studied by \cite{lovasz2012large}, \cite{borgs2008convergent,borgs2012convergent}, and \cite{lovasz2013non}, when the codomain is extended (Kernels) or when the symmetry condition is dropped (DiGraphons and DiKernels). 
Furthermore, \cite{avella2018centrality} explored the connection between discrete graphs and graphons by studying measures of centrality of these continuous objects. 

Graphons have provided different tools to model interactions between non-atomic agents.
\cite{caines2019graphon} suggest that graphons are an extension of mean-field games to heterogeneous settings. 
In the works by \cite{pariseOzdaglar_graphonGames} and \cite{rokade2023graphon}, the authors consider a class of games in which a continuum of agents interacts according to a Graphon in a static manner. 
Moreover, they provide conditions for the existence of equilibrium strategies and convergence between these strategies and their discretization. 
Concerning multi-stage analysis, \cite{erol2023contagion} use  Graphons as a statistical model to generate random graph to study threshold contagion dynamics and the associated optimal seeding problems. 
In this case, the opinions are discrete different from our approach.
The Graphon structure has been successfully applied in other areas of economic interest such as epidemics as in \cite{aurell2022finite}, crowdsourcing as in \cite{shah2019reducing}, estimation of social and information networks as in \cite{borgs2017graphons} and community detection as in \cite{eldridge2016graphons} to study the interaction among large populations.

Dynamics of opinions in Graphons has been previously studied in \cite{aletti2022opinion}, which makes it the closest to our model.
The authors focus only on piecewise constant graphons and model the opinions of agents using mean-field theory. 
Our model differs from theirs on several dimensions. 
First, we extend the dynamics of opinions to a more strategic setting that has economic applications.
Second, we consider DiKernels instead of Graphons, which allows analyzing more general setups, including those in which agents do not behave in a reciprocal and symmetric manner. 
Additionally, our model considers a broader class of DiKernels beyond the piecewise constant case, though this specific case helps link continuous and discrete setups.

Finally, dimensionality reduction has been studied via stochastic network modeling. In \cite{golub2012homophily,golub2012homophily2} the authors propose the Multi-type Random Networks model where society is represented by a symmetric adjacency matrix. The authors achieve the reduction of dimensionality by studying the matrix of expected fractions of links between the different groups. This matrix simplifies the analysis by working with groups instead of individuals. In this paper we propose a similar approach, since Graphons could be seen as a stochastic network formation model, as established in \cite{pariseOzdaglar_graphonGames}. Nevertheless, our work extends this analysis to non-symmetric graphs and proposes a canonical way to compare matrices of different dimensions.

\subsection{Notation} 
\begin{enumerate*}[label=(\roman*)]
\item We use lowercase letters to denote scalars and scalar-valued functions (e.g. $f$ and $f(x)$). 
\item We use lowercase letters with a hat ( $\hat{}$ ) to denote vectors (e.g. $\hat{f}$).
\item We denote uppercase letters with a hat ( $\hat{}$ ) to denote matrices (e.g. $\widehat{W}$).
\item We denote by $[n]$ the set $\{1,2,\dots,n\}$.
\end{enumerate*}

\section{Continuous Social Dynamics}\label{sec:continuous_social_dynamics}

In this section, we introduce the continuous DeGroot dynamic and analyze its relation to the discrete version. We provide a canonical way to transition between discrete and continuous versions of the model. Finally, we state a theorem of convergence between these two dynamics.

\subsection{Model}

Consider the classical model of \cite{degroot1974} in which a set $[n]$ of \emph{non-strategic} agents interact in a social network. Each agent $i\in [n]$ holds an initial opinion on a subject represented by a number ${f_0}(i)\in[-1,1]$. 
The social network is represented by an $n\times n$ non-negative matrix $\widehat{W}=[w_{ij}]$, such that for all $i\in[n]$, $\sum_{j=1}^n w_{ij}=1$. 
Such matrices will be called row-stochastic. 
The opinions are updated at discrete steps according to $\hat{f}_t = \widehat{W} \hat{f}_{t-1}={\widehat{W}}^t \hat{f}_0$, where $\hat{f}_t={({f_{t}}_i)}_{i\in [n]}$ is the opinion vector at time $t$. 
We can interpret each coefficient $w_{ij}$ as the weight that agent $i$ places on the opinion of agent $j$, and the updating process as agents taking a weighted average of their neighbors' opinions in the previous period. 

Now consider a continuum of \emph{non-strategic} agents indexed by $x \in [0,1]$ that interact in a social network. Each agent holds an initial opinion $f_0(x)\in[-1,1]$ and updates it at discrete time following a continuous version of the classical DeGroot model.\footnote{Without loss of generality the model can be extended to consider opinions in $[D,D^\prime]$ by adapting all subsequent definitions.}
To establish the continuous dynamic, we need a structure to determine the interactions between the agents and a rule to update their opinions. 

To model the interaction, we consider a bounded measurable function $W:[0,1]^2 \to [0,M]$, where $M>1$. This kind of functions is known as DiKernel, and they extend directed graphs as the number of agents diverges ($n \to \infty$). 
Here, $W(x,y)$ represents the weight that agent $x$ places on agent $y$'s opinion. To establish the parallel between the discrete and continuous versions, we focus on DiKernels $W$ such that for every $x$ it holds $\int_0^1 W(x,y)\mathrm{d}y=1$. 
Such DiKernels will be called row-stochastic DiKernels. Throughout this paper, we will assume all DiKernels to be row-stochastic.
The mathematical foundations of these objects started with \cite{lovaszSzegedy_limits} who established the Theory of Graphons, which are bounded symmetric measurable functions $W:[0,1]^2\mapsto [0,1]$. The works of \cite{lovasz2012large}, \cite{borgs2008convergent,borgs2012convergent} and \cite{lovasz2013non} study extensions of these objects where the codomain is $\mathbb{R}$ (known as Kernels) and when the symmetry conditions are dropped (known as DiGraphon and DiKernels). 

As in the classical DeGroot model, agents need a rule to update their function of opinions  $f:[0,1]\mapsto [-1,1]$. For this purpose, we define the linear operator $\operatorname{T}(W)$ for a row-stochastic DiKernel $W$ as the integral operator $\operatorname{T}(W):L^1\left(\left[0,1\right]\right) \rightarrow L^1\left(\left[0,1\right]\right)$ as $(\operatorname{T}(W)f)(x) = \int_0^1 W(x,y) f(y) \mathrm{d}y$.
The space $L^1\left(\left[0,1\right]\right)$ is the set of functions whose absolute values have a finite integral over $[0,1]$.
This operator averages out the opinions of the agents according to the weights given by $W$. 
It is worth noticing that, $(\operatorname{T}(W)f)(x)$ is a function of opinions resulting from the weighted average of a function of opinions $f(x)$ using the DiKernel $W$ as weights.
Since this is a new function of opinions we can apply the operator $\operatorname{T}(W)$ to it, and it would give us the composition of the operators, hence we can state the recursive property as in Equation~(\ref{eq:degroot_cont_recursive}).

\begin{definition}[Continuous DeGroot Model]\label{def:degroot_cont}
Consider a continuum of \emph{non-strategic} agents indexed by $x \in [0,1]$. Let $W$ be the row-stochastic DiKernel that rule the social interactions between them.
Each agent has an initial opinion on a subject represented by a number $f_0(x)\in[-1,1]$ and updates his opinion at discrete time following Equation~(\ref{eq:degroot_cont_recursive}). We will refer to the sequence $(f_t)_{t\in\mathbb{N}}$ as the dynamics of opinions under $W$ from $f_0$.
\begin{equation}
    f_{t}(x)=\left(\operatorname{T}(W)f_{t-1}\right)(x)=\left(\operatorname{T}^t(W)f_0\right)(x) \label{eq:degroot_cont_recursive}
\end{equation}
\end{definition}

\begin{remark}
An equivalent approach is by defining the power DiKernel $W^t$. To do this, we define the product between DiKernels $W$ and $V$ as $(W\ast V)(x,y) = \int_0^1 W(x,u) V(u,y) \mathrm{d}u$, which can be interpreted as an extension to the continuous case of matrix multiplication.
These two approaches are conceptually different, since the second changes how agents interact with each other. 
Nevertheless, they can be shown to be equivalent using the product between DiKernels and Fubini's theorem to prove that $\operatorname{T}^t(W)=\operatorname{T}(W^t)$.
\end{remark}

\subsection{Sufficient condition for emergence of consensus}\label{sec:conditions_consensus}

One of the key questions in the literature on propagation of opinions in networks is the question of emergence of consensus. This question has been studied extensively in the mathematical literature and applied to economics in \cite{golub2010naive} for the discrete case. We now provide a sufficient condition in the context of the continuous DeGroot model by associating the DiKernel to a Markov Chain on a continuous state space. The emergence of consensus is connected with the existence of the stationary distribution. This condition offers only a partial characterization.  Determining whether these Markov chains have a stationary distribution, and whether they converge to it, remains challenging.

\begin{assumption}\label{assump:toto_mixing}
We say that the DiKernel $W$ is $\gamma$-mixing if for all $x,y\in[0,1]$, $W(x,y)\geq \gamma$.
\end{assumption}

\begin{proposition}\label{prop:existence_unitype}
Let $\gamma>0$ and $W$ be a DiKernel that satisfies Assumption~\ref{assump:toto_mixing}. 
Then, there exists $h:[0,1] \rightarrow \mathbb{R}^*_+.$
\begin{itemize}
    \item $h$ is the density of a probability distribution over $[0,1]$,
    \item the opinions $f_t$ under $W$ starting from $f_0$ converges to the constant function $f^*$ defined by
    \[
    f^*=\int\limits_0^1 h(y)f_0(y)dy,
    \]
    \item there exist $\rho$ and $\alpha$ such that
    \[
    \sup_{x \in [0,1]} |f_t(x)-f^*| \leq \alpha \rho^t.
    \]
    \end{itemize}
\end{proposition}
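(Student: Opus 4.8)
The plan is to treat the row-stochastic DiKernel $W$ as the transition kernel of a Markov chain on $[0,1]$ and to exploit the $\gamma$-mixing assumption as a Doeblin-type minorization. The whole argument rests on the fact that $\operatorname{T}(W)$ contracts the \emph{oscillation} seminorm $\operatorname{osc}(f) := \sup_{x} f(x) - \inf_{x} f(x)$. Writing $g_{x,x'}(y) = W(x,y) - W(x',y)$, note that $\int_0^1 g_{x,x'}(y)\,dy = 0$ because both rows integrate to $1$; splitting $g_{x,x'}$ into positive and negative parts, whose common mass equals the total variation distance $a(x,x')$ between the two rows, yields
\[
(\operatorname{T}(W)f)(x) - (\operatorname{T}(W)f)(x') = \int_0^1 g_{x,x'}(y) f(y)\,dy \leq a(x,x')\operatorname{osc}(f).
\]
The minorization $W \geq \gamma$ together with row-stochasticity forces $a(x,x') \leq 1-\gamma$ for all $x,x'$ (write $W = \gamma + (W-\gamma)$, the non-negative remainder having row mass $1-\gamma$). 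Hence $\operatorname{osc}(\operatorname{T}(W)f) \leq (1-\gamma)\operatorname{osc}(f)$, and iterating gives $\operatorname{osc}(f_t) \leq (1-\gamma)^t \operatorname{osc}(f_0) \leq 2(1-\gamma)^t$, where $1-\gamma \in [0,1)$ since integrating $W(x,\cdot)\ge\gamma$ forces $\gamma \le 1$.

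Next I would extract the limit and the rate. Because $\operatorname{T}(W)$ is a positive operator fixing constants ($\operatorname{T}(W)\mathbf{1} = \mathbf{1}$ by row-stochasticity), the sequences $m_t := \inf_x f_t(x)$ and $M_t := \sup_x f_t(x)$ are respectively non-decreasing and non-increasing and bounded, hence convergent; since $M_t - m_t = \operatorname{osc}(f_t) \to 0$ they share a common limit $f^*$. The squeeze $m_t \le f_t(x) \le M_t$ then gives $\sup_x |f_t(x) - f^*| \le M_t - m_t \le \operatorname{osc}(f_0)(1-\gamma)^t$, which is exactly the third bullet with $\rho = 1-\gamma$ and $\alpha = \operatorname{osc}(f_0)$.

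It remains to produce the density $h$ and to identify $f^*$. For this I would run the contraction on the dual side: the adjoint acts on (signed) measures by $\mu \mapsto \int \mu(dx)\,W(x,\cdot)$, and the same decomposition $W = \gamma + (W-\gamma)$ shows it is a $(1-\gamma)$-contraction in total variation on the complete metric space of probability measures on $[0,1]$. Banach's fixed point theorem then yields a unique stationary probability $\pi$, and the fixed-point equation exhibits it as absolutely continuous with density $h(y) = \int_0^1 W(x,y)\,\pi(dx)$; the minorization immediately gives $h(y) \ge \gamma > 0$, so $h$ is a strictly positive probability density, as required by the first bullet. Finally, to match $f^*$ with $\int_0^1 h f_0$, I would use $h$ as a conserved quantity: stationarity gives $\int_0^1 h(y) f_{t+1}(y)\,dy = \int_0^1 h(z) f_t(z)\,dz$, so $\int_0^1 h\, f_t = \int_0^1 h\, f_0$ for every $t$; letting $t\to\infty$ and using uniform convergence $f_t \to f^*$ together with $\int_0^1 h = 1$ gives $f^* = \int_0^1 h(y) f_0(y)\,dy$.

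The main obstacle is the third step: the oscillation contraction only shows that opinions collapse to \emph{some} constant, and does not by itself identify that constant as an average against a fixed, everywhere-positive density. Establishing existence and uniqueness of the stationary $\pi$ — and verifying that it is genuinely absolutely continuous with a density bounded below, rather than merely a probability measure — is where the $\gamma$-mixing hypothesis does the real work; the dual contraction and the fixed-point equation $h(y) = \int_0^1 W(x,y)\,\pi(dx)$ are what convert the qualitative Doeblin condition into the quantitative, strictly positive density claimed in the statement.
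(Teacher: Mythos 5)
Your proof is correct and complete, and it shares the paper's conceptual core --- interpret $W$ as the transition kernel of a Markov chain on $[0,1]$ and read Assumption~\ref{assump:toto_mixing} as a Doeblin minorization --- but the execution is genuinely different. The paper's proof is essentially a pointer to the classical literature: it writes down the associated chain and the minorization, then invokes Doob's theorem \citep{Doob_1953}, in the modern form of Theorem~5.4 of \cite{Cappe_2005}, to obtain geometric convergence to a unique invariant density $h$, from which consensus and the rate are said to follow ``immediately.'' You instead prove the needed ergodic theorem from scratch: the decomposition $W=\gamma+(W-\gamma)$ gives the Dobrushin-type bound that any two rows of $W$ are within $1-\gamma$ in total variation, hence $\operatorname{osc}(\operatorname{T}(W)f)\le(1-\gamma)\operatorname{osc}(f)$, and this alone delivers the second and third bullets with explicit constants $\rho=1-\gamma$ and $\alpha=\operatorname{osc}(f_0)\le 2$, without ever mentioning $h$; the invariant density is then produced separately by the dual total-variation contraction plus Banach's fixed point theorem, and your conservation law $\int_0^1 h\,f_t=\int_0^1 h\,f_0$ makes precise exactly the step the paper compresses into its final sentence. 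What your route buys is a self-contained argument with explicit rate and constants and a clean division of labor (consensus from the oscillation contraction, identification of the limit from stationarity); what the paper's route buys is brevity and access to the general theory, which also covers weaker, non-uniform minorization conditions. The only point worth spelling out in your write-up is that $m_t\le f^*\le M_t$ for every $t$ (immediate from the monotonicity of $m_t$ and $M_t$ that you established), since this is what justifies the squeeze $\sup_x|f_t(x)-f^*|\le M_t-m_t$.
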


\noindent One can associate the DiKernel $W$ to a Markov chain defined as follows:
\[
\forall y\in X, \text{ and } A \text{ Borel set of $[0,1]$},\ K(y,A)=\int_{x\in A} W(x,y)dx
\]
and given an initial distribution $p_0$ over $[0,1]$, the sequence of probability distribution $(p_t)_{t\geq 0}$ generated by $K$. The condition that $W$ is $\gamma$-mixing implies that the Markov chain $K$ admits a Doeblin minorization:
\[
\forall x\in [0,1],\ \int_{y \in A} w(x,y)dy\geq \gamma \lambda(A),
\]
where $\lambda$ is the Lebesgue measure. This type of condition was investigated by Doeblin, Kolmogorov, and Doob in the early 20th century and synthesized in Doob's \cite{Doob_1953}. It implies that the sequence $(p_t)_{t\geq 1}$ convergence geometrically toward a unique invariant distribution for the Markov Chain $K$ (see Theorem 5.4 \cite{Cappe_2005} for a modern proof). Denote by $h$ the limit, it implies immediately the emergence of consensus and the geometric rate of convergence of the opinions.

\subsection{From Discrete to Continuous}\label{sec:disc_cont}

There is a natural way to express any discrete DeGroot model as a continuous one by using a ``block-constant'' DiKernel. This concept involves splitting the continuum of agents into a discrete number of communities where there is a fixed weight for agents inside each community and a possibly different weight for agents in the other communities. Before transitioning from the discrete to the continuous framework, it is important to clarify that this transition is not as straightforward as one might assume for two reasons. First, there is no \emph{a priori} uniqueness of the associated block-constant DiKernel. Second, discrete and continuous objects represent different concepts. In the discrete DeGroot model, row-stochastic matrices represent probabilities (i.e., the weights in the DeGroot model). On the other hand, in the continuous model, due to the non-atomic nature of agents, DiKernels represent densities with respect to the uniform measure. 

The key to connecting these two concepts is to fix a partition of $[0,1]$ with size equal to the dimension of the DeGroot matrix. 
Naturally, the Lebesgue measures of the elements of the partition induce a vector of weights $\hat{p}$. 
We use this vector of weights to rescale the values of the probability matrix from the discrete DeGroot model and transform it into one that represents densities with respect to the measure induced by $\hat{p}$, which will represent our DiKernel. 
This procedure is described formally in Definition~\ref{def:discrete_cont_Dikernel} and exemplified in Examples~\ref{ex:disc_cont_dikernel} and~\ref{ex:disc_cont_weight_degroot}.
Using this equivalence, we show in Proposition~\ref{prop:discrete_opinion_dynamics} that there exists a block-constant DiKernel in the continuous framework that mimics the discrete DeGroot dynamics. 
The proof of this result can be derived directly from the construction of these objects.
Moreover, the average opinion in the continuous model is related to the weighted average of the discrete model. 
We exemplify this result in Examples~\ref{ex:disc_cont_degroot_dynamics} and~\ref{ex:disc_cont_weighted_degroot_dynamics}.

\begin{definition}[Discrete to Continuous]\label{def:discrete_cont_Dikernel}
    Let $\widehat{W}=[w_{ij}]$ be a $n\times n$ row-stochastic matrix, $\hat{f_0}$ be an initial vector of opinion and $\mathcal{V}=(V^i)_{i \in [n]}$ be an interval partition of $[0,1]$. We define for every $i \in [n]$, $\hat{p}_i=\lambda(V^i)$ and the block-constant DiKernel $W_{\mathcal{V}}$ by
    \[
    W_{\mathcal{V}}(x,y)= \frac{w_{ij}}{\hat{p}_j}=w^\prime_{ij} \text{ for } (x,y)\in V^i\times V^j
    \]
    and the initial function of opinions
    \[ {f_\mathcal{V}}_0(x)=\hat{f_0}_i\text{ for } x\in V^i. 
    \]
\end{definition}

\begin{proposition}\label{prop:discrete_opinion_dynamics} 
 Let $\widehat{W}$ be a row-stochastic matrix, $\hat{f_0}$ be an initial vector of opinion and $\mathcal{V}=\{\mathcal{V}^i\}_{i \in I}$ be an interval partition of $[0,1]$.
\begin{itemize}
    \item The (continuous) dynamic of opinions under $W_{\mathcal{V}}$ from $f_{\mathcal{V},0}$ coincides with the (discrete) dynamic of opinions under $\widehat{W}$ from $\hat{f_0}$.
    \item The average opinion at stage $t$ in the continuous DeGroot is the $\hat{p}$-weighted average opinion in the discrete DeGroot.
\end{itemize}
\end{proposition}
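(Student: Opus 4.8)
The plan is to prove both bullets by a single induction on the time step $t$, showing that the continuous opinion function $f_{\mathcal{V},t}$ remains block-constant on the partition $\mathcal{V}$ and that its value on each block $V^i$ equals the $i$-th coordinate of the discrete iterate $\widehat{W}^t\hat{f_0}$. Before starting the induction, I would first record that $W_{\mathcal{V}}$ is indeed a row-stochastic DiKernel, since this is exactly what makes the continuous dynamic well-defined: for $x\in V^i$,
\[
\int_0^1 W_{\mathcal{V}}(x,y)\,\mathrm{d}y=\sum_{j=1}^n\int_{V^j}\frac{w_{ij}}{\hat{p}_j}\,\mathrm{d}y=\sum_{j=1}^n\frac{w_{ij}}{\hat{p}_j}\,\lambda(V^j)=\sum_{j=1}^n w_{ij}=1,
\]
using $\lambda(V^j)=\hat{p}_j$ and the row-stochasticity of $\widehat{W}$.

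For the first bullet, the base case $t=0$ is immediate from the definition of $f_{\mathcal{V},0}$ in Definition~\ref{def:discrete_cont_Dikernel}. For the inductive step, assume $f_{\mathcal{V},t-1}$ is constant on each $V^j$ with value $(\widehat{W}^{t-1}\hat{f_0})_j$. Applying the operator $\operatorname{T}(W_{\mathcal{V}})$ and splitting the integral over the blocks of the partition, for $x\in V^i$ I would compute
\[
f_{\mathcal{V},t}(x)=\int_0^1 W_{\mathcal{V}}(x,y)f_{\mathcal{V},t-1}(y)\,\mathrm{d}y=\sum_{j=1}^n\frac{w_{ij}}{\hat{p}_j}\,(\widehat{W}^{t-1}\hat{f_0})_j\,\lambda(V^j)=\sum_{j=1}^n w_{ij}\,(\widehat{W}^{t-1}\hat{f_0})_j,
\]
which is exactly $(\widehat{W}^t\hat{f_0})_i$. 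The crucial point --- and the only place that uses the particular normalization of Definition~\ref{def:discrete_cont_Dikernel} --- is the cancellation $\tfrac{w_{ij}}{\hat{p}_j}\cdot\lambda(V^j)=w_{ij}$: the rescaling by $\hat{p}_j$ is built precisely so that integrating the density $W_{\mathcal{V}}$ against a block-constant function reproduces matrix--vector multiplication by $\widehat{W}$. Since the right-hand side no longer depends on the particular $x\in V^i$, the function $f_{\mathcal{V},t}$ is again block-constant, which closes the induction and establishes that the two dynamics coincide.

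For the second bullet, I would simply integrate the block-constant conclusion over $[0,1]$:
\[
\int_0^1 f_{\mathcal{V},t}(x)\,\mathrm{d}x=\sum_{i=1}^n (\widehat{W}^t\hat{f_0})_i\,\lambda(V^i)=\sum_{i=1}^n \hat{p}_i\,(\widehat{W}^t\hat{f_0})_i,
\]
which is precisely the $\hat{p}$-weighted average of the discrete opinion vector at stage $t$.

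I do not expect a genuine obstacle here: as the authors already note, the result follows directly from the construction, and the argument is essentially bookkeeping. The one place to stay careful is to keep the two roles of the weights $\hat{p}_j$ straight --- they appear both in the denominator of $W_{\mathcal{V}}$ (turning probabilities into densities) and as the Lebesgue measure $\lambda(V^j)$ of the blocks (the measure against which those densities integrate) --- and to confirm that these two appearances cancel. Verifying the row-stochasticity of $W_{\mathcal{V}}$ at the outset is the cleanest way to make this cancellation transparent before invoking it inside the inductive step.
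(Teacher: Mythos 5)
Your proof is correct and is essentially the argument the paper has in mind: the paper omits a formal proof, stating only that the result ``can be derived directly from the construction'' (illustrating it in Examples~\ref{ex:disc_cont_degroot_dynamics} and~\ref{ex:disc_cont_weighted_degroot_dynamics}), and your induction with the cancellation $\frac{w_{ij}}{\hat{p}_j}\lambda(V^j)=w_{ij}$ is exactly that direct verification, carried out carefully.
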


\begin{example}[Discrete to Continuous: Uniform Partition]\label{ex:disc_cont_dikernel}
Consider the matrix $\widehat{W}=\left(\begin{smallmatrix}0 && 1/2 && 1/2 \\ 1 && 0 && 0 \\ 0 && 1 && 0 \\ \end{smallmatrix}\right)$ and the uniform partition $\mathcal{P}_3=\{[0,\sfrac{1}{3}), [\sfrac{1}{3},\sfrac{2}{3}),[\sfrac{2}{3},1]\}$ of $[0,1]$.
This partition induces the vector of weights $\hat{p}=\left(\sfrac{1}{3},\sfrac{1}{3},\sfrac{1}{3}\right)$. 
Since $\widehat{W}$ represents probabilities we need to transform it in one that represents densities with respect to the measure induced by $\hat{p}$. 
Since the weights are uniform, we define the corresponding density matrix $\widehat{W}^\prime=[w^\prime_{ij}]$ by dividing the matrix $\widehat{W}$ by $\frac{1}{3}$. Therefore, $\widehat{W}^\prime=3\widehat{W}=\left(\begin{smallmatrix}0 && 3/2 && 3/2 \\ 3 && 0 && 0 \\ 0 && 3 && 0 \\ \end{smallmatrix}\right)$. Finally, define the DiKernel $W$ as $W_{\mathcal{P}_3}(x,y)=w^\prime_{ij}$ for $(x,y)\in {\mathcal{P}_3}^i\times {\mathcal{P}_3}^j$ with graphical representation is shown in Figure~\ref{fig:disc_cont_weight_degroot_P}.
\end{example}

\begin{example}[Discrete to Continuous: General Partition]\label{ex:disc_cont_weight_degroot}
Consider the matrix $\widehat{W}=\left(\begin{smallmatrix}0 && 1/2 && 1/2 \\ 1 && 0 && 0 \\ 0 && 1 && 0 \\ \end{smallmatrix}\right)$ and the partition $\mathcal{V}=\{[0,\sfrac{1}{6}),[\sfrac{1}{6},\sfrac{1}{2}),[\sfrac{1}{2},1]\}$ of $[0,1]$. This partition induces the vector of weights $\hat{p}=\left(\frac{1}{6},\frac{1}{3},\frac{1}{2}\right)$.
Since $\widehat{W}$ represents probabilities we need to transform it in one that represents densities with respect to the measure induced by $\hat{p}$. Define $\widehat{W}^\prime=[w^\prime_{ij}]$ by component-wise division of each row of $\widehat{W}$ by $\hat{p}$.  
Therefore, $\widehat{W}^\prime=\left(\begin{smallmatrix}0/\frac{1}{6} && 1/\frac{1}{3} && 0/\frac{1}{2} \\ 0/\frac{1}{6} && \frac{1}{2}/\frac{1}{3} && \frac{1}{2}/\frac{1}{2} \\ 1/\frac{1}{6} && 0/\frac{1}{3} && 0/\frac{1}{2} \\ \end{smallmatrix}\right)=\left(\begin{smallmatrix}0 && 3 && 0 \\ 0 && 3/2 && 1 \\ 6 && 0 && 0 \\ \end{smallmatrix}\right)$. 
%Now, consider the partition $\mathcal{V}=\left\{\left[0,\frac{1}{6}\right),\left[\frac{1}{6},\frac{1}{2}\right),\left[\frac{1}{2},1\right]\right\}$ induced by $\hat{p}$.
Finally, define the DiKernel as $W_\mathcal{V}(x,y)=w^\prime_{ij}$ if $(x,y)\in {\mathcal{V}}^i\times \mathcal{V}^j$ as shown in Figure~\ref{fig:disc_cont_weight_degroot_V}. Note that the procedure of defining the intermediate density matrix $\widehat{W}^\prime$ is equivalent to define $W_\mathcal{V}$ following Definition~\ref{def:discrete_cont_Dikernel}.
\end{example}

\begin{figure}[h!]
    \centering
    \begin{subfigure}{0.5\linewidth}
    \begin{tikzpicture}
        \begin{axis}[width=0.95\linewidth,grid=both,axis lines=box,y dir = reverse, xlabel={$x$}, ylabel={$y$}, xlabel style={below right}, ylabel style={above left}, ytick={0,0.33,0.66,1}, yticklabels = {$0$,$\frac{1}{3}$,$\frac{2}{3}$ ,$1$}, xtick={0,0.33,0.66,1}, xticklabels = {$0$,$\frac{1}{3}$,$\frac{2}{3}$ ,$1$}, ymin=0, ymax=1, xmin=0, xmax=1]
        
        \addplot[fill=black!90, draw opacity=0, fill opacity=0.5]table{
        0 0.33
        0 1
        0.33 1
        0.33 0.33
        0 0.33
        };

        \addplot[fill=black!95, draw opacity=0, fill opacity=0.8]table{
        0.33 0
        0.33 0.33
        0.66 0.33
        0.66 0
        0.33 0
        };

        \addplot[fill=black!95, draw opacity=0, fill opacity=0.8]table{
        0.66 0.33
        0.66 0.66
        1 0.66
        1 0.33
        0.66 0.33
        };
        
        \node at (0.165,0.165) {\Large 0};
        \node at (0.495,0.165) {\Large  \color{white} 3};
        \node at (0.825,0.165) {\Large 0};

        \node at (0.165,0.495) {\Large $\frac{3}{2}$};
        \node at (0.495,0.495) {\Large 0};
        \node at (0.825,0.495) {\Large  \color{white} 3};

        \node at (0.165,0.825) {\Large $\frac{3}{2}$};
        \node at (0.495,0.825) {\Large 0};
        \node at (0.825,0.825) {\Large 0};
        \end{axis}
    \end{tikzpicture}
    \caption{Using partition $\mathcal{P}_3$}
    \label{fig:disc_cont_weight_degroot_P}
    \end{subfigure}\begin{subfigure}{0.5\linewidth}
    \begin{tikzpicture}
        \begin{axis}[width=0.95\linewidth,grid=both,axis lines=box,y dir = reverse, xlabel={$x$}, ylabel={$y$}, xlabel style={below right}, ylabel style={above left}, ytick={0,0.16,0.5,1}, yticklabels = {$0$,$\sfrac{1}{6}$,$\sfrac{1}{2}$ ,$1$}, xtick={0,0.16,0.5,1}, xticklabels = {$0$,$\sfrac{1}{6}$,$\sfrac{1}{2}$,$1$}, ymin=0, ymax=1, xmin=0, xmax=1]
        
        \addplot[fill=black!90, draw opacity=0, fill opacity=1]table{
        0.5 0
        0.5 0.16
        0.16 0.16
        0.16 0
        0.5 0
        };

        \addplot[fill=black!95, draw opacity=0, fill opacity=0.8]table{
        0.5 0.16
        1 0.16
        1 0.5
        0.5 0.5
        0.5 0.16
        };

        \addplot[fill=black!90, draw opacity=0, fill opacity=0.6]table{
        0.16 0.16
        0.16 0.5
        0 0.5
        0 0.16
        0.16 0.16
        };

        \addplot[fill=black!90, draw opacity=0, fill opacity=0.5]table{    0.16 0.5
        0.16 1
        0 1
        0 0.5
        0.16 0.5
        0.16 1
        };
        
        \node at (0.08,0.08) {\Large 0};
        \node at (0.33,0.08) {\Large \color{white} 6};
        \node at (0.75,0.08) {\Large  0 };

        \node at (0.08,0.33) {\Large $\frac{3}{2}$};
        \node at (0.33,0.33) {\Large 0};
        \node at (0.75,0.33) {\Large \color{white} 3};

        \node at (0.08,0.75) {\Large 1};
        \node at (0.33,0.75) {\Large 0};
        \node at (0.75,0.75) {\Large 0};
        \end{axis}
    \end{tikzpicture}
    \caption{Using partition $\mathcal{V}$}
    \label{fig:disc_cont_weight_degroot_V}
    \end{subfigure}
    \caption{DiKernels defined from $\widehat{W}$ and a given partition.}
    \label{fig:disc_cont_weight_degroot}
\end{figure}

\begin{example}[Opinion Dynamics: Uniform Partition]\label{ex:disc_cont_degroot_dynamics} Let $\widehat{W}=\left(\begin{smallmatrix}0 && 1/2 && 1/2 \\ 1 && 0 && 0 \\ 0 && 1 && 0 \\ \end{smallmatrix}\right)$ be a row-stochastic matrix,  $\hat{f}_0=\left(0.5,0.3,0.8\right)^\prime$  the vector of initial opinions and $\mathcal{P}_3=\left\{\left[0,\sfrac{1}{3}\right),\left[\sfrac{1}{3},\sfrac{2}{3}\right),\left[\sfrac{2}{3},1\right]\right\}$ a partition of $[0,1]$. 
Let us define $W_{\mathcal{P}_3}$ as in Example~\ref{ex:disc_cont_dikernel} and $f_0(x) = {\hat{f}_{{0}_{i}}}$ if $x\in \mathcal{P}_3^i$, following Definition~\ref{def:discrete_cont_Dikernel}. A graphical representation of these functions is depicted in Figure~\ref{fig:disc_cont_weight_degroot_P} and~\ref{fig:disc_cont_function_opinions}, respectively. 

From the dynamics of the discrete DeGroot model we can calculate  $\hat{f}_1= \widehat{W}\hat{f}_0= \left(0.55,0.5,0.3\right)^\prime$. 
On the other hand, to update the function of opinions for the continuous case we compute $f_1(x)=\int_0^1W_{\mathcal{P}_3}(x,y)f_0(y)\mathrm{d}y= \int_{\mathcal{P}_3^1}W_{\mathcal{P}_3}(x,y)0.5\mathrm{d}y+
\int_{\mathcal{P}_3^2}W_{\mathcal{P}_3}(x,y)0.3\mathrm{d}y
+\int_{\mathcal{P}_3^3}W_{\mathcal{P}_3}(x,y)0.8\mathrm{d}y
= \left\{ 
\begin{smallmatrix}
0.55 & \text{if } x\in \mathcal{P}_3^1\\
0.5 &  \text{if } x\in \mathcal{P}_3^2\\
0.3 & \text{if } x\in \mathcal{P}_3^3\\
\end{smallmatrix}\right. $. Note that, $f_1(x)$ is the continuous version of $\hat{f}_1$ since $f_1(x)=\hat{f_1}_i \text{ if } x\in \mathcal{P}_3^i$. 
Note that since the dynamics are defined in a recursive way we can conclude that they will be the same at every stage.
\end{example}

\begin{figure}[h!]
    \centering
    \begin{subfigure}{0.5\linewidth}
    \centering
    \begin{tikzpicture}
        \begin{axis}[width=0.95\linewidth,axis lines=left, xlabel={$x$}, ylabel={$f_t$}, xlabel style={below right}, ylabel style={above left}, ytick={0,0.3,0.5, 0.8, 1}, yticklabels = {$0$,$0.3$,$0.5$, $0.8$, $1$}, xtick={0,0.33,0.66,1}, xticklabels = {$0$,$\frac{1}{3}$,$\frac{2}{3}$ ,$1$}, ymin=0, ymax=1, xmin=0, xmax=1]
        \addplot[color=black]table{
        0 0.5
        0.33 0.5
        0.33 0.3
        0.66 0.3 
        0.66 0.8 
        1 0.8
        };
        
        \addplot[color=black, dashed, thick]table{
        0 0.55
        0.33 0.55
        0.33 0.5
        0.66 0.5 
        0.66 0.3 
        1 0.3
        };

        \end{axis}
    \end{tikzpicture}
    \caption{$f_0$ and $f_1$ (dashed)  under $W_{\mathcal{P}_3}$.}
    \label{fig:disc_cont_function_opinions}
    \end{subfigure}\hfill\begin{subfigure}{0.5\linewidth}
    \centering
    \begin{tikzpicture}
        \begin{axis}[width=0.95\linewidth,axis lines=left, xlabel={$x$}, ylabel={$f_t$}, xlabel style={below right}, ylabel style={above left}, ytick={0,0.16,0.5,1}, yticklabels = {$0$,$\sfrac{1}{6}$,$\sfrac{1}{2}$ ,$1$}, xtick={0,0.16,0.5,1}, xticklabels = {$0$,$\sfrac{1}{6}$,$\sfrac{1}{2}$,$1$}, ymin=0, ymax=1, xmin=0, xmax=1]
        \addplot[color=black]table{
        0 0.5
        0.16 0.5
        0.16 0.3
        0.5 0.3 
        0.5 0.8 
        1 0.8
        };
        
        \addplot[color=black, dashed, thick]table{
        0 0.55
        0.16 0.55
        0.16 0.5
        0.5 0.5 
        0.5 0.3 
        1 0.3
        };

        \end{axis}
    \end{tikzpicture}
    \caption{$f_0$ and $f_1$ (dashed) under $W_{\mathcal{V}}$.}
    \label{fig:disc_cont_weighted_function_opinions}
    \end{subfigure}
    \caption{Dynamic of Opinions: From Discrete to Continuous.}
\end{figure}

\begin{example}[Opinion Dynamics: General Partition]\label{ex:disc_cont_weighted_degroot_dynamics} Let $\widehat{W}=\left(\begin{smallmatrix}0 && 1/2 && 1/2 \\ 1 && 0 && 0 \\ 0 && 1 && 0 \\ \end{smallmatrix}\right)$ be a row-stochastic matrix,  $\hat{f}_0=\left(0.5,0.3,0.8\right)^\prime$  the vector of initial opinions and $\mathcal{V}=\{[0,\sfrac{1}{6}),[\sfrac{1}{6},\sfrac{1}{2}),[\sfrac{1}{2},1]\}$ a partition of $[0,1]$. 
Let us define $W_{\mathcal{V}}$ as in Example~\ref{ex:disc_cont_weight_degroot} and $f_0(x) = {\hat{f}_{{0}_{i}}}$ if $x\in \mathcal{V}^i$, following Definition~\ref{def:discrete_cont_Dikernel}. A graphical representation of these functions is depicted in Figure~\ref{fig:disc_cont_weight_degroot_V} and~\ref{fig:disc_cont_weighted_function_opinions}, respectively. 
From the dynamics of the discrete DeGroot model we can calculate  $\hat{f}_1= \widehat{W}\hat{f}_0= \left(0.55,0.5,0.3\right)^\prime$. 
On the other hand, to update the function of opinions for the continuous case we compute $f_1(x)=\int_0^1W_\mathcal{V}(x,y)f(y)\mathrm{d}y= \int_{\mathcal{V}^1}W_\mathcal{V}(x,y)0.5\mathrm{d}y+
\int_{\mathcal{V}^2}W_\mathcal{V}(x,y)0.3\mathrm{d}y
+\int_{\mathcal{V}^3}W_\mathcal{V}(x,y)0.8\mathrm{d}y
= \left\{ 
\begin{smallmatrix}
0.55 & \text{if } x\in \mathcal{V}^1\\
0.5 &  \text{if } x\in \mathcal{V}^2\\
0.3 & \text{if } x\in \mathcal{V}^3\\
\end{smallmatrix}\right. $. Note that, $f_1(x)$ is the continuous version of $\hat{f}_1$ since $f_1(x)=\hat{f_1}_i \text{ if } x\in \mathcal{V}^i$. 
Note that since the dynamics are defined in a recursive way we can conclude that they will be the same at every stage.
\end{example}

Finally, we can focus on the inverse operation, which will takes us from the space of block-constant DiKernels\footnote{The Stochastic Block Models (SBM) is a particular case of block-constant DiKernel.} to the space of row-stochastic matrices of a DeGroot model augmented by a vector of weight. Let $W$ be a block-constant DiKernel. 
The block structure of $W$ allows us to define a partition $\mathcal{V}\times\mathcal{V}$ of $[0,1]^2$ into cells in which the DiKernel takes a constant value. 
This partition allows us to define a density matrix $\widehat{W}^\prime=[w^\prime_{ij}]$  by setting the $ij$-th entry to be equal to the value of $W$ in the block defined by the element $\mathcal{V}^i\times \mathcal{V}^j$ of the partition of $[0,1]^2$. 
Now, note that the partition $\mathcal{V}\times\mathcal{V}$ of $[0,1]^2$ induces an interval partition $\mathcal{V}$ of $[0,1]$. 
We define the weight vector $\hat{p}$ by letting the weight of agent $i$ to be the mass of agents in the $i$-th element of the partition $\mathcal{V}$, therefore $\hat{p}_i=\lambda(\mathcal{V}^i)$. Finally, we define $\widehat{W}$ by component-wise multiplication of each row of $\widehat{W}^\prime$ with $\hat{p}$.  
Similarly as how we illustrate the map from discrete to continuous in Example~\ref{ex:disc_cont_weight_degroot}, we illustrate the map from continuous to discrete in Example~\ref{ex:cont_disc_weight_degroot}. 

% which we will define as $\widehat{W}^\prime = \left({\frac{1}{n}}\right)^{-1}{\widehat{W}}=[w^\prime_{ij}]$.
% We define the DiKernel $W$ from this new density matrix $\widehat{W}^\prime$ by setting $W(x,y)=w^\prime_{ij}$ if $(x,y)\in {\mathcal{P}_n}^i\times {\mathcal{P}_n}^j$, where $\mathcal{P}_n=\left\{\left[\frac{i-1}{n},\frac{i}{n}\right): i<n\right\}\cup \left\{\left[\frac{n-1}{n},1\right]\right\}$ is the equipartition\footnote{We impose the elements of a general partition $\mathcal{V}=\{V_i\}_{i\in [n]}$ of $[0,1]$ to be intervals with a similar structure as the ones from $\mathcal{P}_n$. That is, there is a sequence of real numbers $0=a_1<a_2<\dots <a_{n-1}<a_n<1$ such that for all $i<n$, $V_i=[a_i,a_{i+1})$ and $V_n=[a_n,1]$. } of $[0,1]$ into $n$ sets of equal measure and $\mathcal{P}_n^i$ denotes the $i$-th element of the partition $\mathcal{P}_n$.  
% Moreover, we follow a similar approach to define the function of opinions from the vector of opinions as $f(x)=\hat{f}_i$ if $x\in \mathcal{P}_n^i$. 
% To better illustrate this equivalence and help with the intuition of the connection between the discrete and continuous versions of the model of DeGroot, we introduce Examples~\ref{ex:disc_cont_dikernel} and~\ref{ex:disc_cont_degroot}.

\begin{example}[Block-constant DiKernel to Weighted Discrete]\label{ex:cont_disc_weight_degroot}
Let $W$ be the DiKernel given by Figure~\ref{fig:disc_cont_weight_degroot} and let $\mathcal{V}=\left\{\left[0,\sfrac{1}{6}\right),\left[\sfrac{1}{6},\sfrac{1}{2}\right),\left[\sfrac{1}{2},1\right]\right\}$ be the partition of $[0,1]$ induced by $W$. To  compute a discrete DeGroot model, we need to construct a row-stochastic matrix $\widehat{W}$. First, compute the Lebesgue measure of each element of $\mathcal{V}$ to obtain vector of weights $\hat{p}=\left(\sfrac{1}{6},\sfrac{1}{3},\sfrac{1}{2}\right)$. 
Then, define $\widehat{W}^\prime=\left(\begin{smallmatrix}0 && 3 && 0 \\ 0 && \sfrac{3}{2} && 1 \\ 6 && 0 && 0 \\ \end{smallmatrix}\right)$ from the DiKernel $W$.
Finally, compute $\widehat{W}=\left(\begin{smallmatrix}0 && 1 && 0 \\ 0 && 1/2 && 1/2 \\ 1 && 0 && 0 \\ \end{smallmatrix}\right)$ by component-wise multiplication of each row of $\widehat{W}^\prime$ with $\hat{p}$.
\end{example}

\subsection{From Continuous to Discrete}\label{sec:cont_disc}

In the previous section, we have established a natural way to define a block-constant DiKernel from a matrix and vice versa. Now, we would like to define a matrix from a DiKernel which is not block-constant. 
We do it through a discretization process. This operation will allow us to interpret the convergence between a model with a $n\times n$ row-stochastic matrix (discrete DeGroot) and the limit case (continuous DeGroot).
The idea behind DiKernel discretization is to compute block-averages to induce a block-constant DiKernel that we can transform into a DeGroot matrix, as established in Section~\ref{sec:disc_cont}. Once again, there is no \emph{a priori} uniqueness of this discretization so we need to fix a partition $\mathcal{V}$ of $[0,1]$. This partition induces a block structure over $[0,1]^2$ under which we compute the average of the DiKernel $W$ to define the block-constant DiKernel $W_\mathcal{V}$.\footnote{We write $W_{\mathcal{P}_n}$ as $W_{(n)}$ when considering the uniform partition $\mathcal{P}_n$ of $[0,1]$.} We formalize this procedure in Definition~\ref{def:dikernel_discretization}.
Since DiKernels represent probability densities under the uniform measure, we need to multiply by the number of communities (represented by the number of elements in the partition) to define a row-stochastic matrix that represents the social dynamics of a discrete DeGroot model. To further exemplify this, we introduce Examples~\ref{ex:cont_disc_dikernel},~\ref{ex:cont_disc_dikernel_4}, and~\ref{ex:cont_disc_dikernel_gen}.

\begin{definition}[DiKernel Discretization - Continuous to Discrete]\label{def:dikernel_discretization}
    Let $W$ be a DiKernel and the interval partition $\mathcal{V}=\{V^i\}_{i=1}^n$ of $[0,1]$. We can define the block-constant DiKernel $W_{\mathcal{V}}$, a discretization of $W$, as follows. For every $i,j\in [n]$ and every $(x,y)\in V^i \times V^j$,
\[
W_{\mathcal{P}}(x,y)=\frac{1}{\lambda(V^i)\lambda(V^j)}\int_{V^i\times V^j} W(u,v)\mathrm{d}u \mathrm{d}v,
\]
where $\lambda$ represents the Lebesgue measure.
\end{definition}

\begin{example}[Continuous to Discrete: Uniform partition of size 2]\label{ex:cont_disc_dikernel}
Let $W$ be the DiKernel shown in Figure~\ref{fig:discretization_cont_disc_W} and $\widehat{W}_{(2)}$ its $2\times 2$ discretization into a matrix of a discrete DeGroot model.
To compute $\widehat{W}_{(2)}$, first, consider the partition $\mathcal{P}_2=\left\{\left[0,\sfrac{1}{2}\right),\left[\sfrac{1}{2},1\right]\right\}$ and compute $W_{(2)}$, which is shown in Figure~\ref{fig:discretization_cont_disc_W_V}. 
Then, we can transform this block-constant DiKernel $W_{(2)}$ into a row-stochastic matrix by multiplying by $\frac{1}{2}$. 
Finally, we obtain the desired matrix $\widehat{W}_{(2)}=\left(\begin{smallmatrix}\sfrac{3}{4}&\sfrac{1}{4}\\ \sfrac{1}{4}&\sfrac{3}{4}\end{smallmatrix}\right)$. 
\end{example}

\begin{example}[Continuous to Discrete: Uniform partition of size 4]\label{ex:cont_disc_dikernel_4}
Let $W$ be the DiKernel shown in Figure~\ref{fig:discretization_cont_disc_W} and $\widehat{W}_{(4)}$ its $4\times 4$ discretization into a matrix of a discrete DeGroot model.
To compute $\widehat{W}_{(4)}$, first, consider the partition $\mathcal{P}_2=\left\{\left[0,\sfrac{1}{4}\right),\left[\sfrac{1}{4},\sfrac{1}{2}\right),\left[\sfrac{1}{2},\sfrac{3}{4}\right),\left[\sfrac{3}{4},1\right]\right\}$ and compute $W_{(4)}$, which is shown in Figure~\ref{fig:discretization_cont_disc_W_V4}. 
Then, we can transform this block-constant DiKernel $W_{(4)}$ into a row-stochastic matrix by multiplying by $\frac{1}{4}$. 
Finally, we obtain the desired matrix $\widehat{W}_{(2)}=\left(\begin{smallmatrix}\sfrac{3}{4}&\sfrac{1}{4}\\ \sfrac{1}{4}&\sfrac{3}{4}\end{smallmatrix}\right)$. 
Finally, we obtain the desired matrix $\widehat{W}_{(4)}=\left(\begin{smallmatrix}\sfrac{1}{4}&0&\sfrac{1}{4}&\sfrac{1}{2}\\0& \sfrac{1}{4}&\sfrac{1}{2}&\sfrac{1}{4}\\ \sfrac{1}{4}&\sfrac{1}{2}&\sfrac{1}{4}&0\\ \sfrac{1}{2}&\sfrac{1}{4}&0&\sfrac{1}{4} \end{smallmatrix}\right)$. 
\end{example}

\begin{example}[Continuous to Discrete: General partition]\label{ex:cont_disc_dikernel_gen}
Let $W$ be the DiKernel shown in Figure~\ref{fig:discretization_cont_disc_W} and let $\mathcal{V}=\left\{\left[0,\sfrac{1}{4}\right),\left[\sfrac{1}{4},\sfrac{3}{4}\right),\left[\sfrac{3}{4},1\right]\right\}$ a partition of $[0,1]$.  We want to compute $\widehat{W}_{\mathcal{V}}$ which is $3\times 3$ discretization into a matrix of a weighted discrete DeGroot model.
To compute $\widehat{W}_{\mathcal{V}}$, first, we need to compute $W_{\mathcal{V}}$, which is shown in Figure~\ref{fig:discretization_cont_disc_W_V_gen}, and the vector of weights $\hat{p}=\left(\sfrac{1}{4},\sfrac{1}{2},\sfrac{1}{4}\right)$. 
Then, we can transform this block-constant DiKernel $W_{\mathcal{V}}$ into a row-stochastic matrix by component-wise multiplication by $\hat{p}$. 
Finally, we obtain the desired matrix $\widehat{W}_{\mathcal{V}}=\left(\begin{smallmatrix}\sfrac{1}{4}&\sfrac{1}{4}&\sfrac{1}{2}\\ \sfrac{1}{8}&\sfrac{3}{4}&\sfrac{1}{8}\\\sfrac{1}{2}&\sfrac{1}{4}&\sfrac{1}{4}\end{smallmatrix}\right)$. 
\end{example}

\begin{figure}[h!]
    \centering
    \begin{subfigure}{0.5\linewidth}
    \centering
        \begin{tikzpicture}
        \begin{axis}[width=0.95\linewidth,grid=both,axis lines=box,y dir = reverse, xlabel={$x$}, ylabel={$y$}, xlabel style={below right}, ylabel style={above left}, ytick={0,0.25,0.5,0.75,1}, yticklabels = {$0$,$\sfrac{1}{4}$,$\sfrac{1}{2}$,$\sfrac{3}{4}$ ,$1$}, xtick={0,0.25,0.5,0.75,1}, xticklabels = {$0$,$\sfrac{1}{4}$,$\sfrac{1}{2}$,$\sfrac{3}{4}$ ,$1$}, ymin=0, ymax=1, xmin=0, xmax=1]
        
        \addplot[fill=black!90, draw opacity=0, fill opacity=0.8]table{
        0 0
        0 0.25
        0.25 0
        0 0
        };

        \addplot[fill=black!90, draw opacity=0, fill opacity=0.8]table{
        1 0
        0.75 0
        0 0.75
        0 1
        0.25 1
        1 0.25
        1 0
        };

        \addplot[fill=black!90, draw opacity=0, fill opacity=0.8]table{
        1 1
        1 0.75
        0.75 1
        1 1
        };
        
        \node at (0.5,0.5) {\Large\color{white} 2};
        \node at (0.08,0.08) {\small\color{white} 2};
        \node at (0.92,0.92) {\small\color{white} 2};
        \node at (0.25,0.25) {\Large 0};
        \node at (0.75,0.75) {\Large 0};
        \end{axis}
        \end{tikzpicture}
        \caption{$W$}
        \label{fig:discretization_cont_disc_W}
    \end{subfigure}\hfill\begin{subfigure}{0.5\linewidth}
    \centering
        \begin{tikzpicture}
        \begin{axis}[width=0.95\linewidth,grid=both,axis lines=box,y dir = reverse, xlabel={$x$}, ylabel={$y$}, xlabel style={below right}, ylabel style={above left}, ytick={0,0.5,1}, yticklabels = {$0$,$\sfrac{1}{2}$,$1$}, xtick={0,0.5,1}, xticklabels = {$0$,$\sfrac{1}{2}$,$1$}, ymin=0, ymax=1, xmin=0, xmax=1]
        
        \addplot[fill=black!80, draw opacity=0, fill opacity=0.8]table{
        0 0.5
        0 1
        0.5 1
        0.5 0.5
        0 0.5
        };

        \addplot[fill=black!80, draw opacity=0, fill opacity=0.8]table{
        0.5 0
        0.5 0.5
        1 0.5
        1 0
        0.5 0
        };

        \addplot[fill=black!30, draw opacity=0, fill opacity=0.5]table{
        0.5 0
        0.5 0.5
        0 0.5
        0 0
        0.5 0
        };

        \addplot[fill=black!30, draw opacity=0, fill opacity=0.5]table{
        0.5 0.5
        0.5 1
        1 1
        1 0.5
        0.5 0.5
        };

        \node at (0.25,0.25) {\Large $\frac{1}{2}$};
        \node at (0.25,0.75) {\Large $\frac{3}{2}$};
        \node at (0.75,0.25) {\Large $\frac{3}{2}$};
        \node at (0.75,0.75) {\Large $\frac{1}{2}$};
        \end{axis}
        \end{tikzpicture}
        \caption{$W_{(2)}$}
        \label{fig:discretization_cont_disc_W_V}
    \end{subfigure}
    \begin{subfigure}{0.5\linewidth}
    \centering
        \begin{tikzpicture}
        \begin{axis}[width=0.95\linewidth,grid=both,axis lines=box,y dir = reverse, xlabel={$x$}, ylabel={$y$}, xlabel style={below}, ylabel style={above}, ytick={0,0.25,0.5,0.75,1}, yticklabels = {$0$,${\sfrac{1}{4}}$,${\sfrac{1}{2}}$,${\sfrac{3}{4}}$ ,$1$}, xtick={0,0.25,0.5,0.75,1}, xticklabels = {$0$,${\sfrac{1}{4}}$,${\sfrac{1}{2}}$,${\sfrac{3}{4}}$ ,$1$}, ymin=0, ymax=1, xmin=0, xmax=1]
        
        %Cuadrante Izq Abajo
        \addplot[fill=black!45, draw opacity=0, fill opacity=0.8]table{
        0 0.5
        0 0.75
        0.25 0.75
        0.25 0.5
        0 0.5
        };
        \addplot[fill=black!90, draw opacity=0, fill opacity=0.8]table{
        0.25 0.5
        0.25 0.75
        0.5 0.75
        0.5 0.5
        0 0.5
        };
        \addplot[fill=black!90, draw opacity=0, fill opacity=0.8]table{
        0 0.75
        0 1
        0.25 1
        0.25 0.75
        0 0.75
        };
        \addplot[fill=black!45, draw opacity=0, fill opacity=0.8]table{
        0.25 0.75
        0.25 1
        0.5 1
        0.5 0.75
        0.25 0.75
        };

        %Cuadrante IZQ arriba
        \addplot[fill=black!45, draw opacity=0, fill opacity=0.8]table{
        0 0
        0 0.25
        0.25 0.25
        0.25 0
        0 0
        };
        \addplot[fill=black!45, draw opacity=0, fill opacity=0.8]table{
        0.25 0.25
        0.25 0.5
        0.5 0.5
        0.5 0.25
        0.25 0.25
        };

        %Cuadrante DER Arriba
        \addplot[fill=black!45, draw opacity=0, fill opacity=0.8]table{
        0.5 0
        0.5 0.25
        0.75 0.25
        0.75 0
        0.5 0
        };
        \addplot[fill=black!90, draw opacity=0, fill opacity=0.8]table{
        0.5 0.25
        0.5 0.5
        0.75 0.5
        0.75 0.25
        0.5 0.25
        };
        \addplot[fill=black!45, draw opacity=0, fill opacity=0.8]table{
        0.75 0.25
        0.75 0.5
        1 0.5
        1 0.25
        0.75 0.25
        };
        \addplot[fill=black!90, draw opacity=0, fill opacity=0.8]table{
        0.75 0
        0.75 0.25
        1 0.25
        1 0
        0.75 0
        };

        %Cuadrante Der Abajo
        \addplot[fill=black!45, draw opacity=0, fill opacity=0.8]table{
        0.5 0.5
        0.5 0.75
        0.75 0.75
        0.75 0.5
        0.5 0.5
        };
        \addplot[fill=black!45, draw opacity=0, fill opacity=0.8]table{
        0.75 0.75
        0.75 1
        1 1
        1 0.75
        0.75 0.75
        };

        \node at (0.125,0.125) {\Large $1$};
        \node at (0.375,0.125) {\Large $0$};
        \node at (0.625,0.125) {\Large $1$};
        \node at (0.875,0.125) {\color{white}\Large $2$};

        \node at (0.125,0.375) {\Large $0$};
        \node at (0.375,0.375) {\Large $1$};
        \node at (0.625,0.375) {\color{white}\Large $2$};
        \node at (0.875,0.375) {\Large $1$};

        \node at (0.125,0.625) {\Large $1$};
        \node at (0.375,0.625) {\color{white}\Large $2$};
        \node at (0.625,0.625) {\Large $1$};
        \node at (0.875,0.625) {\Large $0$};

        \node at (0.125,0.875) {\color{white}\Large $2$};
        \node at (0.375,0.875) {\Large $1$};
        \node at (0.625,0.875) {\Large $0$};
        \node at (0.875,0.875) {\Large $1$};
        
        \end{axis}
        \end{tikzpicture}
        \caption{$W_{(4)}$}
        \label{fig:discretization_cont_disc_W_V4}
    \end{subfigure}\hfill\begin{subfigure}{0.5\linewidth}
    \centering
        \begin{tikzpicture}
        \begin{axis}[width=0.95\linewidth,grid=both,axis lines=box,y dir = reverse, xlabel={$x$}, ylabel={$y$}, xlabel style={below}, ylabel style={above}, ytick={0,0.25,0.75,1}, yticklabels = {$0$,${\sfrac{1}{4}}$,${\sfrac{3}{4}}$ ,$1$}, xtick={0,0.25,0.75,1}, xticklabels = {$0$,${\sfrac{1}{4}}$,${\sfrac{3}{4}}$ ,$1$}, ymin=0, ymax=1, xmin=0, xmax=1]
        
        %Column 1
        \addplot[fill=black!45, draw opacity=0, fill opacity=0.8]table{
        0 0
        0 0.25
        0.25 0.25
        0.25 0
        0 0
        };
        \addplot[fill=black!30, draw opacity=0, fill opacity=0.5]table{
        0 0.25
        0 0.75
        0.25 0.75
        0.25 0.25
        0 0.25
        };
        \addplot[fill=black!90, draw opacity=0, fill opacity=0.8]table{
        0 0.75
        0 1
        0.25 1
        0.25 0.75
        0 0.75
        };

        %Column 2
        \addplot[fill=black!30, draw opacity=0, fill opacity=0.5]table{
        0.25 0
        0.25 0.25
        0.75 0.25
        0.75 0
        0.25 0
        };
        \addplot[fill=black!80, draw opacity=0, fill opacity=0.8]table{
        0.25 0.25
        0.25 0.75
        0.75 0.75
        0.75 0.25
        0.25 0.25
        };
        \addplot[fill=black!30, draw opacity=0, fill opacity=0.5]table{
        0.25 0.75
        0.25 1
        0.75 1
        0.75 0.75
        0.25 0.75
        };

        %Column 3
        \addplot[fill=black!90, draw opacity=0, fill opacity=0.8]table{
        0.75 0
        0.75 0.25
        1 0.25
        1 0
        0.75 0
        };
        \addplot[fill=black!30, draw opacity=0, fill opacity=0.5]table{
        0.75 0.25
        0.75 0.75
        1 0.75
        1 0.25
        0.75 0.25
        };
        \addplot[fill=black!80, draw opacity=0, fill opacity=0.8]table{
        0.75 0.75
        0.75 1
        1 1
        1 0.75
        0.75 0.75
        };

        \node at (0.125,0.125) {\Large $1$};
        \node at (0.5,0.125) {\Large $\frac{1}{2}$};
        \node at (0.875,0.125) {\color{white}\Large $2$};

        \node at (0.125,0.5) {\Large $\frac{1}{2}$};
        \node at (0.5,0.5) {\Large $\frac{3}{2}$};
        \node at (0.875,0.5) {\Large $\frac{1}{2}$};

        \node at (0.125,0.875) {\color{white}\Large $2$};
        \node at (0.5,0.875) {\Large $\frac{1}{2}$};
        \node at (0.875,0.875) {\Large $1$};
        
        \end{axis}
        \end{tikzpicture}
        \caption{$W_{\mathcal{V}}$}
        \label{fig:discretization_cont_disc_W_V_gen}
    \end{subfigure}
    \caption{Discretization of DiKernel W.}
\label{fig:discretization_cont_disc}
\end{figure}

From the previous examples we see that for each partition of $[0,1]$ there is a block-constant DiKernel. Now, an important question that rises is how can we compare the opinion dynamics that these two objects generate? Moreover, is there a way to connect these two opinion dynamics? 
To answer this question lets consider a DiKernel $W$, a partition $\mathcal{V}=\{V^i\}_{j\in [J]}$ of $[0,1]$ and a function of opinions $f(x)$. Construct the DiKernel $W_\mathcal{V}$ as the discretization of $W$ under $\mathcal{V}$,  following Definition~\ref{def:dikernel_discretization}. 
Since $W$ and $W_\mathcal{V}$ are different, When we update $f$ under $W$ it differs from the update of $f$ under $W_\mathcal{V}$. We show that this difference is proportional to how different are the two DiKernels. 
Now, if we continue with this updating process these two dynamics will diverge because the errors are being accumulated.  
Nevertheless, since at each stage the error we are making in the update is proportional to the distance between $W$ and $W_\mathcal{V}$ the total error up to stage $t$ is linear in this distance. We formalize this result in Proposition~\ref{prop:dynamic_partition}. 
Note that the fact that the errors are accumulating is no surprise and it is a direct result of considering a dynamic evolution of opinions. 
Nevertheless, this accumulation is well behaved in the sense that it grows at a linear pace.

\begin{proposition}
\label{prop:dynamic_partition} Let $W$ a DiKernels that take values on $[0,M]$ and $\mathcal{V}=\{V^j\}_{j\in [J]}$ a partition of $[0,1]$. Then for each $t\in\mathbb{N}_+$ and for all functions of opinions $f$,
$${\left\|\left(\operatorname{T}^t(W)f\right)(x)-\left(\operatorname{T}^t(W_\mathcal{V})f\right)(x)\right\|}_1\leq 4t\|W-W_\mathcal{V}\|_\square$$
\end{proposition}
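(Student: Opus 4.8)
\section*{Proof proposal}

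The plan is to accumulate the one-step discrepancies through a linear error recursion. Writing $f_t=\operatorname{T}^t(W)f$ and $g_t=\operatorname{T}^t(W_\mathcal{V})f$ for the two dynamics issued from the common initial opinion $f$, set $e_t=f_t-g_t$, so that $e_0=0$. Adding and subtracting $\operatorname{T}(W)g_{t-1}$ and using linearity of the operator gives
\[
e_t=\operatorname{T}(W)\,e_{t-1}+\operatorname{T}(W-W_\mathcal{V})\,g_{t-1}.
\]
I would then establish two facts: that the \emph{fresh} error $\operatorname{T}(W-W_\mathcal{V})g_{t-1}$ has $L^1$-norm at most $4\|W-W_\mathcal{V}\|_\square$ at every stage, and that applying $\operatorname{T}(W)$ does not enlarge the error already accumulated, i.e. $\|\operatorname{T}(W)e_{t-1}\|_1\le\|e_{t-1}\|_1$. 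Granting these, a one-line induction gives $\|e_t\|_1\le \|e_{t-1}\|_1+4\|W-W_\mathcal{V}\|_\square$ and hence the claimed bound $4t\|W-W_\mathcal{V}\|_\square$.

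The first fact is where the constant $4$ and the cut norm enter. For any DiKernel $U$ and any $h$ with $\|h\|_\infty\le 1$ I claim $\|\operatorname{T}(U)h\|_1\le 4\|U\|_\square$. To see this, write $\|\operatorname{T}(U)h\|_1=\int_0^1 s(x)(\operatorname{T}(U)h)(x)\,\mathrm{d}x$ with $s=\sign(\operatorname{T}(U)h)$, and split $[0,1]$ into $S^{+}=\{s=1\}$ and $S^{-}=\{s=-1\}$. For a fixed measurable set $S$, optimizing over $\|h\|_\infty\le 1$ turns $\int_S\int_0^1 U(x,y)h(y)\,\mathrm{d}y\,\mathrm{d}x$ into $\int_0^1\bigl|\int_S U(x,y)\,\mathrm{d}x\bigr|\,\mathrm{d}y$, and splitting the $y$-integral according to the sign of $\int_S U(x,y)\,\mathrm{d}x$ bounds this by $2\|U\|_\square$; the two contributions coming from $S^{+}$ and $S^{-}$ then produce the factor $4$. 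I would apply this with $U=W-W_\mathcal{V}$ and $h=g_{t-1}$, which is legitimate because $W_\mathcal{V}$ is again row-stochastic and each continuous DeGroot update is a weighted average of values in $[-1,1]$, so $\operatorname{T}(W_\mathcal{V})$ is an $L^\infty$-contraction and $\|g_{t-1}\|_\infty\le 1$.

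The step I expect to be the main obstacle is the propagation bound $\|\operatorname{T}(W)e_{t-1}\|_1\le\|e_{t-1}\|_1$. Row-stochasticity makes $\operatorname{T}(W)$ a contraction in $L^\infty$ \emph{immediately}, but in $L^1$ one only gets $\|\operatorname{T}(W)h\|_1\le\int_0^1|h(y)|\bigl(\int_0^1 W(x,y)\,\mathrm{d}x\bigr)\mathrm{d}y$, and this inner column mass is constrained by the row condition only after integrating in the complementary variable, so taming it is the delicate point. Equivalently, using $\operatorname{T}^t(W)=\operatorname{T}(W^t)$ from the Remark, the whole statement reduces to $\|W^t-W_\mathcal{V}^t\|_\square\le t\|W-W_\mathcal{V}\|_\square$, which I would telescope as $W^t-W_\mathcal{V}^t=\sum_{k=0}^{t-1}W^{k}\ast(W-W_\mathcal{V})\ast W_\mathcal{V}^{t-1-k}$: right-multiplication by a row-stochastic factor is cut-norm non-expansive by the same layer-cake argument (its relevant row integrals lie in $[0,1]$), whereas left-multiplication by the iterate $W^{k}$ is exactly where the column mass of the powers must be controlled. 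Here I would lean on the fact that the iterates remain row-stochastic and uniformly bounded ($W^{k}\le M$) together with the conditional-expectation structure of the discretization, namely $\int_{V^i\times V^j}(W-W_\mathcal{V})=0$ on every block, which is the feature that should prevent the left factors from inflating the cut norm. Making this cancellation quantitative is the crux; once it is in place, the triangle inequality over the $t$ telescoping terms closes the argument.
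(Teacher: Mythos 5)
Your proposal has the same skeleton as the paper's own proof: a one-step splitting of the error into a fresh term and a propagated term, followed by induction over $t$ (the paper's Lemma~\ref{lem:graphon_distance} splits into $\Delta_G$ and $\Delta_F$ and its Lemma~\ref{prop:dynamic_convergence} iterates). Your fresh-term bound is correct and complete: $\|\operatorname{T}(U)h\|_1\le 4\|U\|_\square$ for $\|h\|_\infty\le 1$, applied with $U=W-W_\mathcal{V}$ and $h=g_{t-1}$, is exactly the paper's bound on $\Delta_G$ via $\|U\|_{\infty\rightarrow 1}\le 4\|U\|_\square$ (a comparison the paper merely cites and you actually prove), and your justification that $\|g_{t-1}\|_\infty\le 1$ because $W_\mathcal{V}$ is again row-stochastic is the right one.

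The step you flag as the main obstacle is a genuine gap, and your proposal does not close it --- but you should know that the paper does not close it either. The propagation claim $\|\operatorname{T}(W)h\|_1\le\|h\|_1$ is false for general row-stochastic DiKernels: take $W(x,y)=2\cdot\mathbf{1}\{y\le \tfrac12\}$ and $h=\mathbf{1}\{y\le \tfrac12\}$; then $\operatorname{T}(W)h\equiv 1$, so $\|\operatorname{T}(W)h\|_1=1>\tfrac12=\|h\|_1$. As you observe, the $L^1$ bound is governed by the column masses $\int_0^1 W(x,y)\,\mathrm{d}x$, which row-stochasticity does not control (they can be as large as $M$). In the paper this step is the bound on $\Delta_F$, where the kernel is bounded pointwise by $1$ --- an inequality that contradicts the standing assumption that DiKernels take values in $[0,M]$ with $M>1$. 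So your honestly acknowledged gap coincides exactly with a silently flawed step in the published argument.

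What your approach does prove, with one small addition, is the bound with an extra factor $M$: every power $W^j$ is row-stochastic and pointwise bounded by $M$ (since $W^j(x,y)=\int_0^1 W^{j-1}(x,u)W(u,y)\,\mathrm{d}u\le M$), hence its column masses are at most $M$ and $\|\operatorname{T}(W^j)h\|_1\le M\|h\|_1$ uniformly in $j$. Unrolling your recursion as $e_t=\sum_{k=0}^{t-1}\operatorname{T}(W^{t-1-k})\operatorname{T}(W-W_\mathcal{V})g_k$ (equivalently, using your telescoping identity, where right multiplication by the row-stochastic $W_\mathcal{V}^{k}$ is cut-norm non-expansive, as you correctly argue, while left multiplication by $W^{t-1-k}$ expands by at most $M$) yields $\|e_t\|_1\le 4Mt\|W-W_\mathcal{V}\|_\square$. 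This weaker bound suffices for every downstream use in the paper, since only the vanishing of $\|W-W_\mathcal{V}\|_\square$ matters there; whether the stated constant $4t$ without $M$ is attainable --- for instance via the block-cancellation $\int_{V^i\times V^j}(W-W_\mathcal{V})=0$ you invoke --- is established neither by your proposal nor by the paper.
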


In economics, decision makers usually receive utility streams at different moments based on their decisions at a previous stage. 
When taking a decision, economic agents compare their utility streams via discounted evaluation of them to accurately compare the value of money or benefits received today versus those received in the future. 
Therefore, in order to control the linear accumulation of errors we propose a discounted evaluation of the opinions via an utility function. 
We consider that the type of economic agent that gets utility from opinions could be firms, a social-planner and in general decision makers. 
Although, in real-life scenarios opinions are intangible objects from which we could not obtain an utility we consider the utility functions as a way of aggregating these opinions. For example, we can aggregate opinions to compute the average opinion. 
Moreover, if we associate opinions to opposing products or consumption goods, this average can measure market concentration or how likely is the population to consume from a certain good.
To be able to bound the distance between the evaluations of the opinion dynamics coming from the DiKernel $W$ and the DiKernel $W_\mathcal{V}$, we impose Assumption~\ref{assump:lipschitz_payoff} which states Lipschitz conditions over the evaluation function $u$. 
Then, we can conclude that if the evaluation function $u$ satisfy this assumption, the distance between the discounted evaluations of the opinion dynamics is bounded by the distance between the DiKernels that rule the dynamics. We formalize this result in Proposition~\ref{prop:discounted_partition}.

\begin{assumption}[Lipschitz Utility]\label{assump:lipschitz_payoff} 
The utility function $u$ is $\alpha$-Lipschitz. %This is, there is a constant $\alpha>0$ such that for all $t\in\mathbb{N}_+$ and functions $f,f^\prime\in L^1([0,1])$ it holds $|u_i(f)-u_i(f^\prime)|\leq \alpha\|f-f^\prime\|_1$.
\end{assumption} 

\begin{proposition}
\label{prop:discounted_partition}
Let $W$ a DiKernel that takes values on $[0,M]$ and $\mathcal{V}=\{V_j\}_{j\in [J]}$ a partition of $[0,1]$. Consider a utility function $u$ that satisfies Assumption~\ref{assump:lipschitz_payoff}, one initial opinion functions $f\in L^1([0,1])$ and discount factor $\delta\in(0,1)$. Let $\left(f_t\right)_{t\in\mathbb{N}}$ be the dynamic of opinions under $W$ from $f_0$, and let $\left(\tilde{f}_t\right)_{t\in\mathbb{N}}$ be the dynamic of opinions under $W_\mathcal{V}$ from $f_0$.
Then, these dynamics are such that 
$$\left|\sum_{t=1}^{\infty} \delta^t u(f_t)-\sum_{t=1}^{\infty} \delta^t u(\tilde{f}_t)\right|\leq \frac{4\alpha\delta}{(1-\delta)^2}\|W-W_\mathcal{V}\|_\square.$$ 
\end{proposition}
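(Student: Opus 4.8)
The plan is to reduce everything to Proposition~\ref{prop:dynamic_partition} by exploiting the Lipschitz regularity of $u$, and then to sum an arithmetic–geometric series. First I would pass the absolute value through the difference of the two infinite sums using the triangle inequality, obtaining
\[
\left|\sum_{t=1}^{\infty}\delta^t u(f_t)-\sum_{t=1}^{\infty}\delta^t u(\tilde f_t)\right|\leq \sum_{t=1}^{\infty}\delta^t\,\bigl|u(f_t)-u(\tilde f_t)\bigr|.
\]
Since opinions lie in $[-1,1]$ and $W,W_\mathcal{V}$ are row-stochastic, each $f_t$ and $\tilde f_t$ stays uniformly bounded, so the values $u(f_t),u(\tilde f_t)$ are bounded; together with $\delta\in(0,1)$ this guarantees absolute convergence of both series and legitimizes the rearrangement.

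Next, I would invoke Assumption~\ref{assump:lipschitz_payoff}, reading the $\alpha$-Lipschitz condition with respect to the $L^1$ norm on opinion functions, so that each summand satisfies $\bigl|u(f_t)-u(\tilde f_t)\bigr|\leq \alpha\,\|f_t-\tilde f_t\|_1$. Because $f_t=\operatorname{T}^t(W)f_0$ and $\tilde f_t=\operatorname{T}^t(W_\mathcal{V})f_0$ by definition of the two dynamics launched from the common $f_0$, Proposition~\ref{prop:dynamic_partition} applies verbatim and yields $\|f_t-\tilde f_t\|_1\leq 4t\,\|W-W_\mathcal{V}\|_\square$ for every $t$. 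Substituting this into the previous bound gives
\[
\sum_{t=1}^{\infty}\delta^t\,\bigl|u(f_t)-u(\tilde f_t)\bigr|\leq 4\alpha\,\|W-W_\mathcal{V}\|_\square\sum_{t=1}^{\infty}t\,\delta^t.
\]

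Finally, I would evaluate the closed form $\sum_{t=1}^{\infty}t\,\delta^t=\delta/(1-\delta)^2$, obtained by differentiating the geometric series $\sum_{t\geq 1}\delta^t=\delta/(1-\delta)$ and multiplying by $\delta$, which converges precisely because $\delta\in(0,1)$. Plugging this in produces the stated factor $\tfrac{4\alpha\delta}{(1-\delta)^2}$ multiplying $\|W-W_\mathcal{V}\|_\square$, completing the argument. I do not expect a genuine obstacle: the entire analytic content is already contained in Proposition~\ref{prop:dynamic_partition}, and the only points requiring care are (i) confirming that the norm with respect to which $u$ is Lipschitz is the $L^1$ norm, so that the two inequalities chain correctly, and (ii) justifying the interchange of the absolute value with the infinite sum, which follows from the uniform boundedness of the iterated opinions and the summability of $t\delta^t$. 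The conceptual takeaway is that the linear-in-$t$ growth of the per-stage discretization error from Proposition~\ref{prop:dynamic_partition} is exactly tamed by the geometric discount weights, turning an otherwise diverging accumulation into a finite bound.
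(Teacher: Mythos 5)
Your proof is correct and follows essentially the same route as the paper: the paper proves a slightly more general lemma (two DiKernels \emph{and} two initial opinion functions, giving an extra $\tfrac{\alpha\delta}{1-\delta}\|f-g\|_1$ term) via the triangle inequality, the $\alpha$-Lipschitz bound $|u(f_t)-u(\tilde f_t)|\leq \alpha\|f_t-\tilde f_t\|_1$, the linear-in-$t$ bound from Proposition~\ref{prop:dynamic_partition}, and the series $\sum_{t\geq 1}t\delta^t=\delta/(1-\delta)^2$, then specializes to $f=g$. You simply specialize at the outset rather than at the end, which changes nothing of substance.
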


We can also interpret our result from Proposition~\ref{prop:discounted_partition} in the other direction. 
This means that if we start from a DiKernel $W$ that is regular enough, one can choose that it is possible to reach arbitrarily small errors. 
We exploit this idea to establish a convergence result between the discrete and continuous dynamics. 
First, we impose some additional structure on the DiKernel. 
A convenient assumption is for the DiKernel to be Lipschitz continuous. 
But, this excludes block-constant DiKernels. 
Therefore, we introduce the notion of Piecewise Lipschitz Continuity, which allows us to state Theorem~\ref{thm:dynamic_convergence}, the main result of this section.

\begin{assumption}[Piecewise Lipschitz Continuity]\label{assump:piecewise_lipschitz}
The DiKernel $W$ is piecewise $\theta$\nobreakdash-Lipschitz continuous\footnote{A function $W$ that is $\theta$\nobreakdash-Lipschitz is also piecewise $\theta$\nobreakdash-Lipschitz for every partition of $[0,1]$}. That is, there exists $\theta>0$ and a partition $\mathcal{I}=\{I_k\}_{k\in[K]}$ of $[0,1]$ such that for every $(x,y), (x^\prime,y^\prime)\in I_{k_1}\times I_{k_2}$ we have that  $|W(x,y)-W(x^\prime,y^\prime)|\leq \theta\left(|x-x^\prime|+|y-y^\prime|\right)$.
\end{assumption}

\begin{theorem}\label{thm:dynamic_convergence} 
Let $\eta>0$. 
Let $f$ be a function of opinions, and $W$ a DiKernel that satisfies Assumption~\ref{assump:piecewise_lipschitz} for partition $\mathcal{I}$ of $[0,1]$ into $K$ elements. Let $\left(f_t\right)_{t\in\mathbb{N}}$ be the dynamic of opinions under $W$ from $f_0$, and let $\left(\tilde{f}_t\right)_{t\in\mathbb{N}}$ be the dynamic of opinions under $W_\mathcal{V}$ from $f_0$.

Then, there exists a positive integer $n_0$ such that for every $n\geq n_0$, there is a partition $\mathcal{V}=\{V_j\}_{j\in [J]}$ of $[0,1]$ into intervals of measure less than $\frac{1}{n}$ such that for all $t\in\mathbb{N}_+$,  $${\left\|f_t(x)-\tilde{f}_t(x)\right\|}_1\leq \min\left\{2,t\eta\right\}$$
and if $u$ is an utility function that satisfies Assumption~\ref{assump:lipschitz_payoff},
\[
\left|\sum_{t=1}^{\infty} \delta^t u(f_t)-\sum_{t=1}^{\infty} \delta^t u(\tilde{f}_t)\right|\leq \frac{4\alpha\delta}{(1-\delta)^2}\eta.
\]
\end{theorem}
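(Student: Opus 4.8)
The plan is to reduce the entire statement to a single estimate on the cut-norm distance $\|W-W_\mathcal{V}\|_\square$ between the DiKernel $W$ and its block-average discretization $W_\mathcal{V}$, and then feed that estimate into the two already-established bounds, Proposition~\ref{prop:dynamic_partition} and Proposition~\ref{prop:discounted_partition}. Concretely, I would first choose the partition $\mathcal{V}$ so that it refines the Lipschitz partition $\mathcal{I}$, show that $\|W-W_\mathcal{V}\|_\square$ is of order $1/n$, fix $n_0$ so that this is $\le \eta/4$, and finally translate this single inequality into both displayed conclusions.

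\textbf{Construction of $\mathcal{V}$ and the cut-norm estimate.} Given $n\ge n_0$, I would build $\mathcal{V}$ by subdividing each Lipschitz cell $I_k\in\mathcal{I}$ into finitely many subintervals, each of Lebesgue measure strictly less than $1/n$; this is always possible since each $I_k$ is an interval, and it produces an interval partition $\mathcal{V}=\{V^i\}$ that simultaneously refines $\mathcal{I}$ and has mesh $<1/n$. The reason for refining $\mathcal{I}$ is that then every block $V^i\times V^j$ lies entirely inside a single Lipschitz cell $I_{k_1}\times I_{k_2}$, so Assumption~\ref{assump:piecewise_lipschitz} applies on the whole block: for any $(x,y),(u,v)\in V^i\times V^j$ one has $|W(x,y)-W(u,v)|\le\theta(|x-u|+|y-v|)\le\theta(\lambda(V^i)+\lambda(V^j))<2\theta/n$. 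Since $W_\mathcal{V}(x,y)$ is by Definition~\ref{def:dikernel_discretization} precisely the average of $W$ over $V^i\times V^j$, averaging the inequality over $(u,v)$ gives the pointwise bound $|W(x,y)-W_\mathcal{V}(x,y)|<2\theta/n$ on all of $[0,1]^2$. Using that the cut norm is dominated by the $L^1$ norm, I conclude $\|W-W_\mathcal{V}\|_\square\le\|W-W_\mathcal{V}\|_1=\int_{[0,1]^2}|W-W_\mathcal{V}|<2\theta/n$.

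\textbf{Assembling the two bounds.} I would then set $n_0=\lceil 8\theta/\eta\rceil$, so that for every $n\ge n_0$ the partition above satisfies $\|W-W_\mathcal{V}\|_\square<2\theta/n\le\eta/4$. For the first inequality, Proposition~\ref{prop:dynamic_partition} applied with $f=f_0$ yields $\|f_t-\tilde f_t\|_1\le 4t\|W-W_\mathcal{V}\|_\square\le t\eta$; on the other hand, since $W$ and $W_\mathcal{V}$ are both row-stochastic the operators $\operatorname{T}(W)$ and $\operatorname{T}(W_\mathcal{V})$ preserve the range $[-1,1]$, so $f_t$ and $\tilde f_t$ take values in $[-1,1]$ and hence $\|f_t-\tilde f_t\|_1\le 2$ for every $t$. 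Taking the minimum of the two gives $\|f_t-\tilde f_t\|_1\le\min\{2,t\eta\}$. For the utility inequality, Proposition~\ref{prop:discounted_partition} gives directly $\left|\sum_{t=1}^{\infty}\delta^t u(f_t)-\sum_{t=1}^{\infty}\delta^t u(\tilde f_t)\right|\le\frac{4\alpha\delta}{(1-\delta)^2}\|W-W_\mathcal{V}\|_\square\le\frac{4\alpha\delta}{(1-\delta)^2}\eta$, where the last step only uses $\|W-W_\mathcal{V}\|_\square\le\eta/4\le\eta$.

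\textbf{The main obstacle.} The only genuinely delicate step is the cut-norm estimate, and specifically the treatment of cells of $\mathcal{V}$ that would straddle the boundaries between the Lipschitz pieces of $\mathcal{I}$, where the Lipschitz bound fails and one can only control $|W-W_\mathcal{V}|$ by the crude constant $M$. I sidestep this entirely by forcing $\mathcal{V}$ to refine $\mathcal{I}$, so that no block crosses a Lipschitz boundary. If one instead insists on the uniform partition $\mathcal{P}_n$ (matching the notation $W_{(n)}$), one must add a correction term accounting for the at most $K-1$ ``bad'' rows and columns, of total measure $\le(K-1)/n$, contributing a further $2M(K-1)/n$ to the estimate; this still vanishes, but changes the threshold to $n_0=\lceil 8(\theta+M(K-1))/\eta\rceil$. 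Either route gives the same conclusion, so I would present the clean refinement argument and note the uniform case as a remark.
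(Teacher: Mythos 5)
Your proposal is correct, and it follows the paper's overall skeleton — bound $\|W-W_\mathcal{V}\|_\square$ by $O(1/n)$, then invoke Proposition~\ref{prop:dynamic_partition} and Proposition~\ref{prop:discounted_partition} — but it handles the key cut-norm estimate by a genuinely different and cleaner route. The paper proves a lemma (Lemma~\ref{lem:distance_partition2}) valid for \emph{every} partition of mesh less than $\frac{1}{n}$, splitting blocks into those contained in a single Lipschitz cell and those straddling a boundary, which yields the bound $\frac{2\theta}{n}+M\frac{K^2}{n^2}$ and forces a threshold $n_0>\frac{8\theta+\sqrt{64\theta^2+16K^2M\eta}}{2\eta}$ obtained from a quadratic. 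You instead exercise the freedom the theorem's existential statement grants: by choosing $\mathcal{V}$ to refine $\mathcal{I}$, every block lies inside one Lipschitz cell, you get the pointwise bound $|W-W_\mathcal{V}|<\frac{2\theta}{n}$, no boundary correction at all, and the simple threshold $n_0=\lceil 8\theta/\eta\rceil$. Two further points in your favor: (i) you make explicit why $\|f_t-\tilde f_t\|_1\leq 2$ (row-stochasticity of both $W$ and $W_\mathcal{V}$ preserves the range $[-1,1]$), which the paper leaves implicit when it writes $\min\{2,t\eta\}$; (ii) your side remark on the uniform-partition variant, with its correction of order $\frac{2M(K-1)}{n}$ from bad rows \emph{and} columns, is actually more careful than the paper's lemma, whose decomposition $\Delta_{\tilde I}+\Delta_{\tilde J}$ only accounts for blocks whose indices are \emph{both} good or \emph{both} bad and silently drops the mixed blocks (one index straddling a Lipschitz boundary, the other not), whose contribution is of order $\frac{MK}{n}$, not $\frac{MK^2}{n^2}$. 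What the paper's arbitrary-partition lemma buys, and your construction does not, is applicability to partitions imposed from outside — e.g.\ the uniform partition behind $W_{(n)}$ and the grouping partition in Corollary~\ref{cor:dimensionality_reduction} — but for Theorem~\ref{thm:dynamic_convergence} itself, which only asserts existence of a suitable partition, your argument is complete and sufficient.
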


%% The proof of this theorem is only in the appendix

The bound established in Theorem~\ref{thm:dynamic_convergence} increases linearly with time until the upper bound of $2$, which is the diameter of the space of opinions.
Therefore, for a given number of communication stages between the agents we obtain convergence. 
The intuition for this result can be stated as follows.
Given a length $t$ of this communication process, if the mesh of the partition is sufficiently small then the vectors of opinions from the continuous dynamics and its discretization will be close.
It is worth mentioning that this convergence result does not give a way of constructing the optimal partition $\mathcal{V}$ for the discretization.
This is a direct consequence of the lack of uniqueness of the discretization process under different partitions. 
Instead, our result only gives a minimal number of  elements a partition such that the bound is satisfied. 
This provides an informed decision maker with the freedom to choose a partition that effectively bounds the distance between the opinion dynamics.
Moreover, similarly to our intuition for Proposition~\ref{prop:discounted_partition} if the decision maker evaluate opinions through a Lipschitz function $u$ we can bound the accumulation of errors. We extend this analysis for the context of lobby competition in Section~\ref{sec:competition}.

\subsection{Reduction of Dimensionality}\label{sec:reduc_dimension}

One immediate application of the previous section is the construction of a canonical way to reduce the dimensionality of discrete DeGroot models.
We start by considering a classic DeGroot model with a large number of agents. We can interpret this DeGroot model as a block-constant DiKernel with the uniform partition. Since this new object is a DiKernel, we can discretize it by considering groups of agents. 
We can think of this aggregation of agents as considering a coarser partition which is compatible with the uniform partition. 
Under this coarser partition we obtain a new constant-block DiKernel with less parameters that approximate the original DiKernel. This last DiKernel being block-constant, it can be interpreted back as a discrete DeGroot models in smaller dimension. 
Following this intuition we say that we reduce the dimensionality of the DeGroot models in the sense that now we can avoid computing the opinion dynamics in the original matrix and instead compute it in the reduced matrix. 
Our process of dimensionality reduction is depicted in Figure~\ref{fig:dimensionality_reduction} and illustrated in Example~\ref{ex:dimensionality_reduction}.
Note that these tools have two main consequences. 
First, they allow us to quantify the bounds between the opinions coming from two DeGroot models of different dimensions. 
Second, they allow us to compare matrices of different dimensions by analyzing them as objects in the space of bounded measurable functions. Both contributions rely on the distance between continuous objects and their discretizations, which are direct consequences of the results established in Section~\ref{sec:cont_disc}.
We formalize this result in Corollary~\ref{cor:dimensionality_reduction}

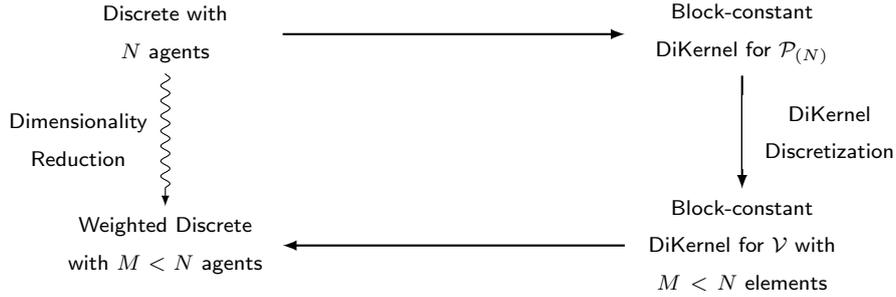
\begin{figure}[H]
    \centering
    \begin{tikzpicture}[every node/.style={font=\sffamily\small, align = center, text width = 3.5cm}]
      \node(C1){Discrete with $n$ agents};
      \node(C2)[right = 5cm of C1] {Block-constant DiKernel for $\mathcal{P}_{(n)}$};
      
      \node(C3)[below = 1.5cm of C2] {Block-constant DiKernel for $\mathcal{V}$ with $m<n$ elements};

      \node(C4)[left = 5cm of C3] {Discrete with $m<n$ groups};

      \path[every node/.style={font=\sffamily\footnotesize}]
      (C1) edge[-latex,thick] (C2)
      edge [-latex,decorate,decoration={snake,amplitude=.6mm,segment length=2mm,post length=2mm}] node(imn)[left ,text width = 2.5cm, fill=white, opacity=0, text opacity=1, align=center]{Dimensionality Reduction} (C4) 
      (C2) edge[-latex,thick] node(imm)[right ,text width = 2.2cm, fill=white, align=center]{DiKernel Discretization} (C3)
      (C3) edge[-latex,thick] (C4);
    \end{tikzpicture}
    \caption{Diagram of the process of dimensionality reduction.}
    \label{fig:dimensionality_reduction}
\end{figure}

\begin{corollary}[Dimensionality Reduction]\label{cor:dimensionality_reduction}
    Let $\widehat{W}$ be a $n\times n$ row-stochastic matrix and $M$ be a partition of $[n]$. Then for each $t\in \mathbb{N}_+$ and every vector of opinions $\hat{f}$, 
    $${\left\|\left(\operatorname{T}^t(W_{(n)})f\right)(x)-\left(\operatorname{T}^t(W_\mathcal{V})f\right)(x)\right\|}_1\leq 4t\|W_{(n)}-W_\mathcal{V}\|_\square,$$ where $W_{(n)}$ is the DiKernel obtained from $\widehat{W}$, $f$ is the opinions function obtained from $\hat{f}$, $\mathcal{V}$ is the partition of $[0,1]$ induced by $M$, and $W_\mathcal{V}$ is the discretization of $W_{(n)}$ under partition $\mathcal{V}$.
\end{corollary}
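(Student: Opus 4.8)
The plan is to recognize that this Corollary is a direct specialization of Proposition~\ref{prop:dynamic_partition} to the block-constant DiKernel arising from $\widehat{W}$, so the bulk of the work is bookkeeping: checking that the objects named in the statement are exactly those to which the Proposition applies. First I would form the uniform partition $\mathcal{P}_{(n)}$ of $[0,1]$ into $n$ intervals of length $1/n$ and build the block-constant DiKernel $W_{(n)}$ from $\widehat{W}$ according to Definition~\ref{def:discrete_cont_Dikernel}. Since $\widehat{W}$ is row-stochastic and $n$ is finite, $W_{(n)}$ is a bounded row-stochastic DiKernel: it takes finitely many nonnegative values and hence maps into some $[0,M]$, which is precisely the standing hypothesis of Proposition~\ref{prop:dynamic_partition}.

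Next I would check that the partition $\mathcal{V}$ of $[0,1]$ induced by the partition $M$ of $[n]$ is a genuine interval partition that is coarser than $\mathcal{P}_{(n)}$. Each group of $M$ is a union of agents, and merging the corresponding length-$1/n$ intervals produces an interval whose Lebesgue measure equals the group's cardinality times $1/n$. Because $\mathcal{V}$ is refined by $\mathcal{P}_{(n)}$, the discretization $W_\mathcal{V}$ of $W_{(n)}$ under $\mathcal{V}$ from Definition~\ref{def:dikernel_discretization} is well defined, and it coincides with the $W_\mathcal{V}$ appearing in the statement: on each coarse cell it equals the $\mathcal{P}_{(n)}$-block average of $W_{(n)}$ over that cell.

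With these identifications in place, I would apply Proposition~\ref{prop:dynamic_partition} verbatim, taking $W := W_{(n)}$ together with the partition $\mathcal{V}$, for an arbitrary function of opinions $f$ (in particular the $f$ obtained from $\hat{f}$ via Definition~\ref{def:discrete_cont_Dikernel}). This yields
\[
{\left\|\left(\operatorname{T}^t(W_{(n)})f\right)(x)-\left(\operatorname{T}^t(W_\mathcal{V})f\right)(x)\right\|}_1\leq 4t\|W_{(n)}-W_\mathcal{V}\|_\square
\]
for every $t\in\mathbb{N}_+$, which is exactly the claim.

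Since the inequality is nothing more than an instance of Proposition~\ref{prop:dynamic_partition}, there is no genuine analytic obstacle; the only point requiring care is confirming that the map $\widehat{W}\mapsto W_{(n)}$ of Section~\ref{sec:disc_cont} and the discretization $W_{(n)}\mapsto W_\mathcal{V}$ of Section~\ref{sec:cont_disc} compose to the objects named in the statement. I would also note that the hypothesis ``$f$ obtained from $\hat{f}$'' is needed only for the dimensionality-reduction interpretation (so that $W_\mathcal{V}$ can be read back as a discrete DeGroot model in smaller dimension via Proposition~\ref{prop:discrete_opinion_dynamics}), and not for the bound itself, which holds for every function of opinions $f$.
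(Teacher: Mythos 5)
Your proposal is correct and matches the paper's approach exactly: the paper treats Corollary~\ref{cor:dimensionality_reduction} as an immediate instance of Proposition~\ref{prop:dynamic_partition} applied to the block-constant DiKernel $W_{(n)}$ built from $\widehat{W}$ and its discretization $W_\mathcal{V}$ under the partition induced by $M$, which is precisely your argument. Your added bookkeeping (boundedness of $W_{(n)}$, that $\mathcal{V}$ is an interval partition refined by $\mathcal{P}_{(n)}$, and the remark that the bound holds for arbitrary $f$) is sound and only makes explicit what the paper leaves implicit.
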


\begin{example}[Dimensionality Reduction] \label{ex:dimensionality_reduction}
Consider a set of 6 agents interacting according to the social network defined by the row-stochastic matrix 
\[
\widehat{W}={ \left(\begin{smallmatrix}
    0 && 1/2 && 1/2 && 0 && 0 && 0\\
    1 && 0 && 0 && 0 && 0 && 0\\
    0 && 0 && 0 && 1/3 && 1/3 && 1/3\\
    0 && 1/4 && 1/4 && 0 && 1/4 && 1/4\\
    0 && 0 && 0 && 0 && 0 && 1\\
    0 && 1/4 && 1/4 && 1/4 && 1/4 && 0 
\end{smallmatrix}\right)}.
\]
We would like to group these 6 agents into 3 groups of 1, 2, and 3 agents, respectively. 
To do this reduction of dimensionality we first define the DiKernel $W$ (represented in Figure~\ref{fig:Dikernel_society}) that represents the matrix $\widehat{W}$ for the equipartition.
% We start by defining $\widehat{W}^\prime= { \left(\begin{smallmatrix}
%     0 && 3 && 3 && 0 && 0 && 0\\
%     6 && 0 && 0 && 0 && 0 && 0\\
%     0 && 0 && 0 && 2 && 2 && 2\\
%     0 && 3/2 && 3/2 && 0 && 3/2 && 3/2\\
%     0 && 0 && 0 && 0 && 0 && 6\\
%     0 && 3/2 && 3/2 && 3/2 && 3/2 && 0 
% \end{smallmatrix}\right)}$, then we can define $W$ which is depicted in Figure~\ref{fig:Dikernel_society}. 
Second, we can define the DiKernel $W_\mathcal{V}$ as the discretization of $W$ with respect to partition $\mathcal{V}=\{[0,\sfrac{1}{6}),[\sfrac{1}{6},\sfrac{1}{2}),[\sfrac{1}{2},1]\}$. 
The DiKernel $W_\mathcal{V}$ is depicted in Figure~\ref{fig:Dikernel_society_groups}.
Then, we can reverse the process to obtain the 
%define $\widehat{W_\mathcal{V}}^\prime = \left(\begin{smallmatrix}0 && 3 && 0 \\ 3 && 0 && 1\\ 0 && 1 && 4/3\end{smallmatrix}\right)$ from DiKernel $W_\mathcal{V}$ and the vector of weights $\hat{p}=\left(\sfrac{1}{6},\sfrac{1}{3},\sfrac{1}{2}\right)$ from the partition $\mathcal{V}$. 
row-stochastic matrix that represents the social network of the 3 groups 
\[
\widehat{W_\mathcal{V}}= \left(\begin{smallmatrix}0 && 1 && 0 \\ 1/2 && 0 && 1/2\\ 0 && 1/3 && 2/3\end{smallmatrix}\right).
\]
augmented by the vector of weights $\hat{p}=\left(\sfrac{1}{6},\sfrac{1}{3},\sfrac{1}{2}\right)$
\end{example}

\begin{figure}[H]
    \centering
    \begin{subfigure}{0.5\linewidth}
    \begin{tikzpicture}
        \begin{axis}[width=0.95\linewidth,grid=both,axis lines=box,y dir = reverse, xlabel={$x$}, ylabel={$y$}, xlabel style={below right}, ylabel style={above left}, ytick={0,0.16,0.33,0.5,0.66, 0.83,1}, yticklabels = {$0$,$\sfrac{1}{6}$,$\sfrac{2}{6}$ ,$\sfrac{3}{6}$,$\sfrac{4}{6}$,$\sfrac{5}{6}$,$1$}, xtick={0,0.16,0.33,0.5,0.66, 0.83,1}, xticklabels = {$0$,$\sfrac{1}{6}$,$\sfrac{2}{6}$ ,$\sfrac{3}{6}$,$\sfrac{4}{6}$,$\sfrac{5}{6}$,$1$}, ymin=0, ymax=1, xmin=0, xmax=1]
        
        \addplot[fill=black!90, draw opacity=0, fill opacity=1]table{
        0.33 0
        0.33 0.16
        0.16 0.16
        0.16 0
        0.33 0
        };

        \addplot[fill=black!95, draw opacity=0, fill opacity=0.8]table{
        0 0.16
        0 0.5
        0.16 0.5
        0.16 0.16
        0 0.16
        };

        \addplot[fill=black!90, draw opacity=0, fill opacity=0.8]table{
        0.33 0.5
        0.33 1
        0.5 1
        0.5 0.5
        0.33 0.5
        };

        \addplot[fill=black!90, draw opacity=0, fill opacity=0.6]table{
        0.5 0.16
        0.5 0.5
        0.66 0.5
        0.66 0.16
        0.5 0.16
        };

        \addplot[fill=black!90, draw opacity=0, fill opacity=0.6]table{
        0.5 0.66
        0.5 1
        0.66 1
        0.66 0.66
        0.5 0.66
        };

        \addplot[fill=black!90, draw opacity=0, fill opacity=1]table{
        0.66 0.83
        0.66 1
        0.83 1
        0.83 0.83
        0.66 0.83
        };
        
        \addplot[fill=black!90, draw opacity=0, fill opacity=0.6]table{
        0.83 0.16
        0.83 0.83
        1 0.83
        1 0.16
        0.83 0.16
        };
        
        \node at (0.08,0.08) {\Large 0};
        \node at (0.24,0.08) {\Large  \color{white} 6};
        \node at (0.66,0.08) {\Large 0};

        \node at (0.08,0.33) {\Large \color{white} 3};
        \node at (0.33,0.33) {\Large 0};
        \node at (0.58,0.33) {\Large $\frac{3}{2}$};
        \node at (0.91,0.5) {\Large  $\frac{3}{2}$};
        \node at (0.75,0.5) {\Large 0};

        \node at (0.16,0.75) {\Large 0};
        \node at (0.41,0.75) {\Large \color{white} 2};
        \node at (0.58,0.58) {\Large 0};
        \node at (0.58,0.83) {\Large $\frac{3}{2}$};
        \node at (0.74,0.91) {\Large \color{white} 6};
        \node at (0.91,0.91) {\Large 0};
        \end{axis}
    \end{tikzpicture}
    \caption{DiKernel $W$ defined from $\widehat{W}$.}
    \label{fig:Dikernel_society}
    \end{subfigure}\hfill\begin{subfigure}{0.50\linewidth}
    \begin{tikzpicture}
        \begin{axis}[width=0.95\linewidth,grid=both,axis lines=box,y dir = reverse, xlabel={$x$}, ylabel={$y$}, xlabel style={below right}, ylabel style={above left}, ytick={0,0.16,0.5,1}, yticklabels = {$0$,$\sfrac{1}{6}$,$\sfrac{1}{2}$ ,$1$}, xtick={0,0.16,0.5,1}, xticklabels = {$0$,$\sfrac{1}{6}$,$\sfrac{1}{2}$,$1$}, ymin=0, ymax=1, xmin=0, xmax=1]

        \addplot[fill=black!95, draw opacity=0, fill opacity=0.8]table{
        0 0.16
        0 0.5
        0.16 0.5
        0.16 0.16
        0 0.16
        };

        \addplot[fill=black!95, draw opacity=0, fill opacity=0.8]table{
        0.16 0
        0.16 0.16
        0.5 0.16
        0.5 0
        0.16 0
        };

        \addplot[fill=black!90, draw opacity=0, fill opacity=0.5]table{    0.16 0.5
        0.16 1
        0.5 1
        0.5 0.5
        0.16 0.5
        };

        \addplot[fill=black!90, draw opacity=0, fill opacity=0.5]table{    0.5 0.16
        0.5 0.5
        1 0.5
        1 0.16
        0.5 0.16
        };

        \addplot[fill=black!90, draw opacity=0, fill opacity=0.7]table{    0.5 0.5
        0.5 1
        1 1
        1 0.5
        0.5 0.5
        };
        
        \node at (0.08,0.08) {\Large 0};
        \node at (0.33,0.08) {\Large \color{white} 3};
        \node at (0.75,0.08) {\Large 0};

        \node at (0.08,0.33) {\Large \color{white} 3};
        \node at (0.33,0.33) {\Large 0};
        \node at (0.75,0.33) {\Large 1};

        \node at (0.08,0.75) {\Large 0};
        \node at (0.33,0.75) {\Large 1};
        \node at (0.75,0.75) {\Large $\frac{4}{3}$};
        \end{axis}
    \end{tikzpicture}
    \caption{DiKernel $W_\mathcal{V}$.}
    \label{fig:Dikernel_society_groups}
    \end{subfigure}
    \caption{Dimensionality Reduction}
\end{figure}
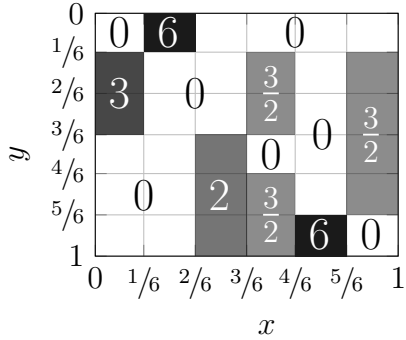
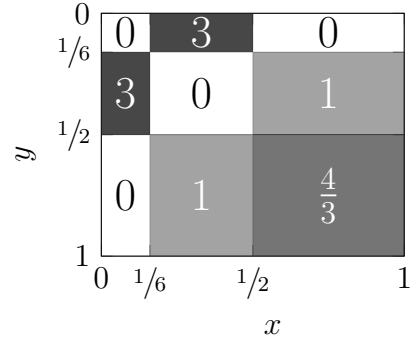

\section{Strategic Influence over Continuous Social Networks}\label{sec:competition}

This section examines a model of influence over continuous social networks. 
We begin by describing a game between two lobbies seeking to influence a continuum of \emph{non-strategic} agents. 
The lobbies aim to bias the agents' initial opinion before letting them update their opinion \textit{\`a la} DeGroot, we call this scenario the ``Lobby Game''. 
We acknowledge that lobbies could exert their influence in a dynamic way by changing the opinion of the agents at different stages during the game. Nevertheless, we focus on the analysis of the static game and leave the dynamic cases as possible future extensions of our model. 

Our analysis begins by examining the set of Nash Equilibria. 
We establish a general existence result and provide examples inspired by the contest literature. 
Then, we characterize the equilibrium for a family of DiKernels and competition operators.
We focus on DiKernels that are defined by a social structure in which all agents assign weights to the influence of others in the same way. 
Finally, to connect our model to real-world economic applications, we introduce the concept of game discretization and study the relationship between the equilibrium of the Lobby Game and its discretized counterpart.

\subsection{The Lobby Game}

We consider a game between two lobbies with opposite interest. 
They compete to bias the opinion of a continuum of non-strategic agents. 
We assume that each lobby chooses its strategy to change the initial opinion of the agents who then interact to update their opinion \textit{\`a la} DeGroot. Moreover, the lobbies get utility from the opinions of the agents at each stage of the DeGroot updating process which are then aggregated by a discounted sum. We focus on a model where this change of initial opinions is local, which reflects the fact that the lobbies choose an influence for each agent.

More formally, let us index the lobbies by $i\in I=\{1,2\}$ and let us assume that Lobby~$1$ aims for the maximal opinion of $1$, whereas Lobby~$2$ aims for the minimal opinion of $-1$.
The agents are indexed by $x\in[0,1]$, they have an initial opinion function $f_0(x)\in[-1,1]$ and they interact according to the DiKernel $W$. 
The timing of the model is as follows: 
\begin{enumerate}[label=\roman*)]
    \item Each lobby $i$ choose a strategy function $s_i$ from its strategy set $S_i$.
    
    We assume that each lobby has a limited budget $b_i$ that he can allocate to influence the agent locally. Formally, a strategy of $i$ is a function $s_i(x)$ in a set $S_i$ such that $\int_0^1s_i(x)\leq b_i$.
    \item The initial function of opinions $f_0$ is updated into a new one defined by the competition operator $\operatorname{C}:L^1([0,1])\times S_1\times S_2\to L^1([0,1])$, where $\operatorname{C}$ is a given operator. 
    
    The competitive operator is defined pointwise. We assume that  the agents have a non\nobreakdash-negative parameter $s_0(x)\in \mathbb{R}_+^*$. 
    This parameter can be interpreted as how sensitive are the agents to the influence of the lobbies, such that the smaller the value of $s_0(x)$, the more sensitive the agents are.  The new opinion of agent $x$ denoted $C(f_0,s_1,s_2)(x)$ is a function of $f_0(x)$,$s_1(x)$, $s_2(x)$ and $s_0(x)$. We will not denote the dependence in $s_0$ since for a given game it will be fixed.

    %It is important to notice that the parameter $s_0$ is also a parameter of this competition operator.
    \item The continuum of agents update their opinion \textit{\`a la} DeGroot in discrete time according to the operator $\operatorname{T}(W)$. 
    \item Utility of Lobby $i$ is given by the utility operator $U_i:L^1([0,1])\times S_1\times S_2\to \mathbb{R}$ where $U_i(f_0,s_1,s_2)=(1-\delta)\sum_{t=1}^{\infty} \delta^t u_i\Bigl(\operatorname{T}^t(W)C(f_0,s_1,s_2)\Bigl)$, where $u_i$ is the stage utility of Lobby~$i$  and $\delta\in(0,1)$ is the discount factor. 
    
    The stage utilities of the lobbies are functions that transform opinions into numbers that the lobbies can interpret as payoff. For example, how close agents are to the Lobby's preferred opinion could reflect intention to consume the product that the Lobby is promoting. We assume that the lobbies only care about the average opinion in society. Let us define, for a given function of opinion $f$, the stage utilities of the lobbies as $u_1(f)=\int_0^1f(x)\psi_1(x)\mathrm{d}x$ and $u_2(f)=-\int_0^1f(x)\psi_2(x)\mathrm{d}x$, where each $\psi_i$ is a non-negative real-valued function. Note that the stage utility of player $i$ is a weighted average of the function of opinions $f$ by $\psi_i$, which could be interpreted as how important is each agent to Lobby $i$. Note that if $\psi_1=\psi_2$ the game becomes a zero-sum game. Moreover, note that the negative sign present in $u_2$ reflects the fact that Lobby~$2$ prefers an average opinion of $-1$. Since the DeGroot model generates a sequence of opinions, we will aggregate these stage utilities by considering the discounted sum of the stage utilities.
\end{enumerate}

Under this framework and timing of the game, we formally define the Lobby Game as in Definition~\ref{def:one_shot}. 
In the rest of the section we study the set Nash Equilibria of the Lobby Game. In particular, we focus on the question of existence, and we study the scenarios under which we can characterize the equilibrium strategies. 
Finally, using the ideas from Sections~\ref{sec:disc_cont} and~\ref{sec:cont_disc}, we study the discretization of the Lobby Game. 
This notion will help us connect our model with real-life economic applications.

\begin{definition}[Lobby Game]\label{def:one_shot} A Lobby Game $\mathcal{G}$ is defined in terms of the set of agents $I=\{1,2\}$, the DiKernel $W$, the Competition Operator $C$, the initial function of opinions $f_0$, the strategy sets $S_i$, the budget $b_i$ the functions $\psi_i$ that determine how important is each agent to the Lobby~$i$, the function $s_0$ that determine how influenciable are the agents, and, the discount factor $\delta$. The normal form of the Lobby Game is $\mathcal{G}(I, S_i, U_i)$. In case of ambiguity, we will include the rest of the parameters of the game as part of the notation. 
\end{definition}

\subsection{Existence of Nash Equilibria}\label{sec:existence}

To study the Nash Equilibria of the game, we establish sufficient conditions on the strategy sets and the competition operator that guarantee the existence of an equilibrium. 
The conditions that we provide are natural in the literature of existence of Nash Equilibria. 
First, for the strategy sets, the Assumption~\ref{assump:strategy_set} impose a structure of convexity and compactness. Second, for the competition operator, the Assumption~\ref{assump:competition_operator} provides the operator with the structure required to solve the optimization problem that the agents are facing. 
These two assumptions, combined, are the sufficient conditions to guarantee the existence of Nash equilibria in the Lobby Game, as established in Theorem~\ref{thm:existence}.
This results follow from \cite{glicksberg1952further}. 
For completeness we provide a proof in the Appendix~\ref{appendix:competition}. 

\begin{assumption}[Strategy Set]\label{assump:strategy_set}
    The strategy sets $S_i$ are convex and compact in a topological vector space.
\end{assumption}
\begin{assumption}[Competition Operator]\label{assump:competition_operator}
    The operator $C:L^1([0,1])\times S_1\times S_2\to L^1([0,1])$ is quasi-concave in $s_1$ and quasi-convex in $s_2$. Moreover, for every $f\in L^1([0,1])$, $C$ is continuous in $s_1$ and $s_2$ in the topology from Assumption~\ref{assump:strategy_set}.
\end{assumption}

\begin{remark}
Let us insist on the fact that the topology ensuring the compacity of the strategy sets (Assumption~\ref{assump:strategy_set}) needs to be compatible with the continuity of the competition operator $C$ (Assumption~\ref{assump:competition_operator}). Since this is necessary for the continuity of the operators $\operatorname{T}(W)$, $u_i$, $u_i\circ \operatorname{T}(W)\circ C$ and $U_i$.
\end{remark}

\begin{theorem}[Existence] \label{thm:existence}
If the game $ \mathcal{G}(I, S_i, U_i, u_i, b_i, s_0, f_0, W, \delta) $ satisfies Assumptions~\ref{assump:strategy_set} and~\ref{assump:competition_operator} then it admits at least one Nash Equilibrium. 
\end{theorem}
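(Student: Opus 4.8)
The plan is to obtain the equilibrium as a fixed point of the joint best-response correspondence via Glicksberg's infinite-dimensional analogue of Kakutani's theorem \citep{glicksberg1952further}. Recall that this theorem delivers a fixed point --- hence a Nash Equilibrium --- once we verify that (i) each strategy set $S_i$ is a nonempty, convex, compact subset of a (Hausdorff, locally convex) topological vector space, (ii) each payoff $U_i$ is continuous on $S_1\times S_2$ in the product topology, and (iii) each $U_i$ is quasi-concave in the own strategy of Lobby~$i$. Condition (i) is exactly Assumption~\ref{assump:strategy_set}, so the work is to extract (ii) and (iii) from Assumption~\ref{assump:competition_operator} together with the explicit form of $U_i$.

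First I would establish continuity. Writing $U_i(f_0,s_1,s_2)=(1-\delta)\sum_{t=1}^{\infty}\delta^t u_i(\operatorname{T}^t(W)C(f_0,s_1,s_2))$, note that each $u_i$ is a bounded linear functional on $L^1([0,1])$ (integration against the bounded weight $\psi_i$), that $\operatorname{T}^t(W)$ is a bounded linear operator on $L^1([0,1])$ whose norm is controlled by row-stochasticity, and that $(s_1,s_2)\mapsto C(f_0,s_1,s_2)$ is continuous by Assumption~\ref{assump:competition_operator}. Hence each summand is continuous in $(s_1,s_2)$. Because opinions stay in $[-1,1]$, every stage payoff is uniformly bounded, so the factor $\delta^t$ makes the series converge uniformly; a uniform limit of continuous functions is continuous, which gives (ii). The remark following Assumption~\ref{assump:competition_operator} already flags that the chosen topology must render this entire chain continuous, and that compatibility is exactly what is invoked here.

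The crux is quasi-concavity, and it is the step I expect to be the main obstacle, because quasi-concavity is fragile under sums and integrals. The key simplification is to collapse the discounted sum into a single positive linear functional. Since $u_1\circ\operatorname{T}^t(W)$ is a positive linear functional, it is represented by integration against a non-negative weight $g_t$; setting $G:=(1-\delta)\sum_{t\geq1}\delta^t g_t\geq 0$ yields $U_1(s_1,s_2)=\int_0^1 G(y)\,C(f_0,s_1,s_2)(y)\,\mathrm{d}y$, and symmetrically $U_2(s_1,s_2)=-\int_0^1 G_2(y)\,C(f_0,s_1,s_2)(y)\,\mathrm{d}y$ with $G_2\geq 0$. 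Because $C$ is defined pointwise and is quasi-concave in $s_1$ by Assumption~\ref{assump:competition_operator}, aggregating against the non-negative weight $G$ preserves this structure, so $s_1\mapsto U_1(s_1,s_2)$ is quasi-concave for every fixed $s_2$; the quasi-convexity of $C$ in $s_2$ combined with the minus sign in $u_2$ makes $s_2\mapsto U_2(s_1,s_2)$ quasi-concave, delivering (iii). The delicate point to argue with care is that it is the pointwise concave structure of the competition operator, rather than mere scalar quasi-concavity term by term, that survives the aggregation by $G$ and by the integral; this is precisely why Assumption~\ref{assump:competition_operator} is stated at the level of the operator $C$ and not of the composed payoff.

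Finally I would assemble the fixed-point argument. For each fixed $s_2$, the best-response set $\argmax_{s_1\in S_1}U_1(s_1,s_2)$ is nonempty by Weierstrass (a continuous function on a compact set attains its maximum), convex by the quasi-concavity just established, and the correspondence $s_2\mapsto\argmax_{s_1\in S_1}U_1(s_1,s_2)$ has closed graph and is upper hemicontinuous by Berge's maximum theorem; the analogous statements hold for Lobby~$2$. The product of the two best-response correspondences is then a nonempty-, convex-, compact-valued, upper hemicontinuous self-correspondence on the convex compact set $S_1\times S_2$, so Glicksberg's theorem yields a fixed point $(s_1^*,s_2^*)$, which is by construction a mutual best response and hence a Nash Equilibrium of $\mathcal{G}$.
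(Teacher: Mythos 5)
Your overall architecture is the same as the paper's: compactness and convexity of $S_1\times S_2$ from Assumption~\ref{assump:strategy_set}, continuity and concavity-type properties of $U_i$ from Assumption~\ref{assump:competition_operator}, then a fixed point of the product best-response correspondence via \cite{glicksberg1952further} (the paper also spells out the Kakutani-style verification: nonemptiness by Weierstrass, closedness by continuity of $U_i$, convexity by concavity of $U_i$). Your continuity argument --- continuity of each summand through the chain $C$, $\operatorname{T}^t(W)$, $u_i$, plus uniform convergence of the discounted series from boundedness of opinions --- is a more careful version of what the paper packs into Proposition~\ref{prop:competition_utility}. Your reduction $U_1(s_1,s_2)=\int_0^1 G(y)\,C(f_0,s_1,s_2)(y)\,\mathrm{d}y$ with $G=(1-\delta)\sum_{t\geq 1}\delta^t g_t\geq 0$ is a genuinely cleaner way to organize the payoff than the paper's compositional argument; it isolates in a single formula exactly how the payoff acts on the competition operator.

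However, the step where you pass from quasi-concavity of $C$ to quasi-concavity of $U_1$ is a genuine gap. Integration against a non-negative weight does not preserve quasi-concavity: quasi-concavity is not stable under sums (for instance $x\mapsto x^3$ and $x\mapsto -x$ are each quasi-concave on $\mathbb{R}$, but $x^3-x$ is not, since its upper level set at $0$ is $[-1,0]\cup[1,\infty)$), and the same failure occurs for integrals of pointwise quasi-concave integrands. Even reading Assumption~\ref{assump:competition_operator} as quasi-concavity for the pointwise order on $L^1([0,1])$, i.e. $C(f_0,\rho s_1+(1-\rho)s_1',s_2)\geq \min\{C(f_0,s_1,s_2),C(f_0,s_1',s_2)\}$ pointwise, the resulting inequality $\int G\min\{a,b\}\leq\min\bigl\{\int Ga,\int Gb\bigr\}$ goes the wrong way, so the chain does not close. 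What does survive aggregation by $G$ is pointwise \emph{concavity} of $C$ in $s_1$ (and convexity in $s_2$) --- which you half-acknowledge in your ``delicate point'' sentence but never actually assume or establish. To be fair, the paper has the same defect: its proofs of Proposition~\ref{prop:competition_utility} and of the theorem silently read Assumption~\ref{assump:competition_operator} as concavity/convexity of $C$, and both worked examples verify genuine concavity rather than quasi-concavity. So your argument is valid exactly under the strengthened reading the paper itself uses; as literally written, with only quasi-concavity, the aggregation step fails.
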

\begin{remark}
The existence result relies on the fact that the operator $\operatorname{T}(W)$ is linear. Therefore, we can extend this analysis to general linear updating rules different from the DeGroot model.
\end{remark}

The Lobby Game can be used to model various economic scenarios.  We provide two different scenarios in Examples~\ref{ex:weighted_average_competition} and~\ref{ex:initial_opinion_plus_competition}. 
These examples showcase specific competition operators and strategy sets that satisfy the necessary assumptions, demonstrating the existence of lobby competition games with at least one Nash Equilibrium. One can observe that in both Examples~\ref{ex:weighted_average_competition} and~\ref{ex:initial_opinion_plus_competition}, the two strategy sets represent similar objects, i.e. bounded functions under some norm, but the precise definitions are different. These two spaces are different because the topologies under which they satisfy Assumption~\ref{assump:strategy_set} must be compatible with the topology under which the competition operators satisfy Assumption~\ref{assump:competition_operator}.

\begin{example}\label{ex:weighted_average_competition}
Consider the lobby game with competition operator given by Equation~(\ref{eq:comp_operator_weighted}).
\begin{equation}\label{eq:comp_operator_weighted}
    \operatorname{C}(f_0,s_1,s_2)(x)=\frac{s_1(x)-s_2(x) + f_0(x)s_0(x)}{s_1(x)+s_2(x)+s_0(x)}.
\end{equation}

This operator is inspired from the literature on contests. 
In the standard model of contest, the winner is chosen proportionally to the effort put by each player. 
As it is defined, the new initial opinion, $\operatorname{C}(f_0,s_1,s_2)$ is the expected opinion if the winner imposes its opinion to the agent. 
In addition, let $\mathcal{S}=\{s\text{ such that } s:[0,1]\to \mathbb{R}, \|s\|_\infty\leq \tilde{M} \text{ and } s \text{ is }\text{a Lipschitz function}\}$.
This structure imply that the lobbies cannot exert unbounded influence and that they are influencing society under some regularity constrains\footnote{All the functions in $\mathcal{S}$ have the same Lipschitz constant.}. 

Note that the set $\mathcal{S}$ is convex and compact for the $\|.\|_\infty$-norm (see appendix). The operator $\operatorname{C}$ is concave in $s_1$, convex in $s_2$ and Lipschitz continuous (see appendix). Since they satisfy Assumptions~\ref{assump:strategy_set} and~\ref{assump:competition_operator}, by Theorem~\ref{thm:existence} the Nash Equilibrium of the game exists. 
\end{example}

\begin{example}\label{ex:initial_opinion_plus_competition}
Consider the lobby game with competition operator given by Equation~(\ref{eq:comp_operator_additive}).

\begin{equation}\label{eq:comp_operator_additive}
\operatorname{C}(f_0,s_1,s_2)(x)=\left\{ \begin{smallmatrix}
1 & \text{if } f_0(x)+\frac{s_1(x)-s_2(x)}{s_0(x)}>1\\
-1 & \text{if } f_0(x)+\frac{s_1(x)-s_2(x)}{s_0(x)}<-1\\
f_0(x)+\frac{s_1(x)-s_2(x)}{s_0(x)} &  \text{otherwise}
\end{smallmatrix}\right. .
\end{equation}

This operator is an additive version of the effort given by the two lobbies with cutting when the opinion become too high or too low. 
In addition, let $\mathcal{S}=\{s\text{ such that } s\in L^2([0,1]), \|s\|_2\leq \tilde{M} \}$. 
This structure imply that the lobbies cannot exert unbounded influence although there is no additional constrains on the shape of the strategy functions. 

The appropriate topology is the $\text{weak}^*$ topology from functional analysis in Banach Spaces.
Under this topology, the set $\mathcal{S}$ is convex and compact, and the operator $\operatorname{C}$ is continuous.
The operator $\operatorname{C}$ is monotone, hence it is quasi-concave in $s_1$ and quasi-convex in $s_2$. 
The set $\mathcal{S}$ and the operator $C$ satisfy Assumptions~\ref{assump:strategy_set} and~\ref{assump:competition_operator}. 
Therefore, by Theorem~\ref{thm:existence} the Nash Equilibrium of the game exists. 
\end{example}

\subsection{Equilibrium Characterization}\label{sec:equilibrium_characterization}

The objective of this section is to provide some element of characterization of $\varepsilon$-equilibrium of the Lobby Game. 
We proceed as follows.  We first provide an analysis of the decision problem where only one lobby has a strategic choice and all the agents are influenced by others in the same way (i.e. $W(x,y)=h(y)$). We refer to this family of DiKernels as the Uni-type DiKernel. 
Then, we use this result to define Best-responses for the family of Uni-type DiKernels which allows us to characterize the equilibrium of these games. We conclude by showing that each of this equilibrium is an $\varepsilon$-equilibrium of the game with DiKernel $W$ if the player are sufficiently patient.

\subsubsection{One-player --  Uni-type DiKernel}
We start by focusing on the case in which $W(x,y)=h(y)$. 
Under this assumption, every agent is influenced by society in the same way. 
This property implies that the social network satisfies some anonymity since the labels of the agents can be interchanged (since for all $x,z\in [0,1]$, $W(x,y)=W(z,y)=h(y)$). 
We refer to this family of DiKernel as the Uni-type DiKernel. 
Although, this assumption might appear restrictive, we will relax this assumption later in this section as the Uni-type DiKernels approximates a more general class of DiKernel $W(x,y)$.

We focus on the Uni-type DiKernel because this assumption allows us to conclude that after the first iteration of the model of DeGroot, the agents reach consensus since the function of opinions becomes constant.
Then, the subsequent iterations of the model of DeGroot would not change the opinion thanks to the row-stochastic property of the DiKernel. 
Moreover, since the opinions are constant, the weighting function $\psi_1$ used by the Lobby~1 does not play a strategic role anymore since $\int_0^1\psi_1(x)\mathrm{d}x=1$. 
Therefore, we can write the Lobby's problem as

\begin{align}
    \max_{s_1} & \left[\int_0^1 h(y)\operatorname{C}(f_0,s_1)(y)\mathrm{d}y\right] %\int_0^1\psi_1(x)\mathrm{d}x 
    \label{eq:lobby_problem_general}\\
\text{subject to} & \int s_1(x) \mathrm{d} x \leq b_1 \text{ and } s_1(x)\geq 0\notag
\end{align}

We can further simplify this optimization problem by incorporating the positivity constraint over $s_1(x)$ into the budget constrain via a change of variables by replacing $s_1$ with $\zeta_1^2$. 
The positivity constraint prevents lobbies from negatively influencing agents. 
Without this constraint, a lobby might reduce the opinion of agents already aligned with its preference to influence others more effectively.
Then, we can rewrite the Lobby's problem as
\begin{align}
    \max_{\zeta_1} & \left[\int_0^1 h(y)\operatorname{C}(f_0,\zeta_1^2)(y)\mathrm{d}y\right] \label{eq:lobby_problem_general2}\\ 
\text{subject to} & \int \left(\zeta_1(x)\right)^2 \mathrm{d} x \leq b_1 \notag
\end{align}

To get a closed-form formula for the strategy we need to focus on a particular case of the competition operator. 
Let us consider the one-player restriction of the operator defined in Example~\ref{ex:weighted_average_competition}, namely, $\operatorname{C}(f_0,\zeta_1^2)(x) = \frac{\zeta_1^2(x)+ s_0(x)f_0(x)}{\zeta_1^2(x)+s_0(x)}$. 
%Note that under this competition operator, the lobby's influence is diluted relative to the parameter function $s_0(x)$. % Now, we will do some algebraic manipulation to simplify the expression $\operatorname{C}(0,s_1)(x)$ and, therefore, the optimisation problem.
% $\frac{s_1^2(x)+ s_0(x)f_0(x)}{s_1^2(x)+s_0(x)}h(x)=\frac{s_1^2(x)+ s_0(x)f_0(x)}{s_1^2(x)+s_0(x)}h(x) - \frac{s_0(x)h(x)}{s_1^2(x)+s_0(x)}+\frac{s_0(x)h(x)}{s_1^2(x)+s_0(x)}$ which by reorganizing terms is simplified to $h(y)+\frac{s_0(x)\left[f_0(x)-1\right]h(x)}{s_1^2(x)+s_0(x)}$. Now, note that $\int_0^1 \frac{s_1^2(y)+ s_0(y)f_0(y)}{s_1^2(y)+s_0(y)}h(y)\mathrm{d}y = \int_0^1 h(y)\mathrm{d}y+\int_0^1 \frac{s_0(y)\left[f_0(y)-1\right]h(y)}{s_1^2(y)+s_0(y)}\mathrm{d}y = 1+\int_0^1 \frac{s_0(y)\left[f_0(y)-1\right]h(y)}{s_1^2(y)+s_0(y)}\mathrm{d}y$, since the DiKernel $W$ is row-stochastic. 
In particular, we can rewrite the lobby's objective function as $\int_0^1 \frac{s_0(y)\left[f_0(y)-1\right]h(y)}{\zeta_1^2(y)+s_0(y)}\mathrm{d}y$. To solve this problem, we treat it as a problem of calculus of variations. Define the Lagrangian as $$\mathcal{L}\left(y,\zeta_1^2,{\zeta_1^2}^\prime\right) = \left[\frac{s_0(y)\left[f_0(y)-1\right]h(y)}{\zeta_1^2(y)+s_0(y)}-\mu\left(\zeta_1^2(y)-b_1\right)\right].$$ 

We compute the First Order Conditions with respect to $\zeta_1^2$ and $\mu$, which are given by Equations~(\ref{eq:zeta_positive}), and~(\ref{eq:budget_constrain}).
\begin{align}
&\zeta_1^2(y)=\sqrt{\frac{s_0(y)\left[f_0(y)-1\right]h(y)}{\mu}}-s_0(y)\text{, or } \zeta_1^2(y)=0\text{, and }  \label{eq:zeta_positive} \\
&\int_0^1\zeta_1^2(y)\mathrm{d}y= b_1\label{eq:budget_constrain}
\end{align}

Since the problem is well-defined, we solve the system of equations to identify the agents for whom the lobby's strategy is positive. 
At equilibrium, the lobby uses its entire budget to maximize influence.
By substituting the first part of Equation~(\ref{eq:zeta_positive}) into~(\ref{eq:budget_constrain}) we define an implicit $\mu(x)$ that characterizes the set $A(x,\mu(x))$ such that $\zeta_1\left(A(x,\mu(x)\right)$ is positive, which gives us the characterization of the one\nobreakdash-player case.

\subsubsection{Two-players - Uni-type DiKernel}

In general, a characterization for the two player case is a complicated task.
Nevertheless, we can build on the one player case for the two competition operators that we introduced. Let us focus on the weighted average competition operator to illustrate the procedure.\\
%for a more general competition operator $\operatorname{C}$. 
The key idea is to rewrite the operator $$\operatorname{C}(f_0,s_1,s_2)(x) = \operatorname{C}(\breve{f_0},s_1)(x)=\frac{s_1(x) +\breve{s_0}(x)\breve{f_0}(x)}{s_1(x)+\breve{s_0}(x)}$$ where $\breve{s_0}(x)=s_2(x)+s_0(x)$ and $\breve{f_0}(x)=\frac{-s_2(x)+s_0(x)f_0(x)}{s_2(x)+s_0(x)}$.
This has exactly the same structure as the competition operator with only one player. Hence, we can follow the same algebraic manipulations as in the previous section to obtain the best response $BR_1(s_2)$. Similarly, one can compute $BR_2(s_1)$ which allow us to characterize the Nash Equilibrium strategies. 

\subsubsection{From Uni-type DiKernels to $\gamma$-mixing DiKernels}

While the family of Uni-type DiKernels simplifies characterizing equilibrium strategies, we aim for a result applicable to a broader class of DiKernels. Section~\ref{sec:conditions_consensus} examines sufficient conditions under which a DiKernel $W(x,y)$ leads to consensus among agents that is obtained as the weighted average of the initial opinions for some $h(y)$. One can interpret $h$ as a Uni-type DiKernel. Hence, we can exploit this property to utilize the characterization developed for the Uni-type DiKernel.

\begin{corollary}[of Proposition~\ref{prop:existence_unitype}]\label{cor:Dikernel_convergence_markov}
Let $W$ be a DiKernel that satisfies Assumption~\ref{assump:toto_mixing} and $u$ be a utility function. Given $\varepsilon>0$, there exists a large enough discount factor $\delta$ such that the discounted payoff under $W$ from $f_0$ and the discounted payoff under $h$ from $f_0$ are close, where $h$ is determined by Proposition~\ref{prop:existence_unitype}.
\end{corollary}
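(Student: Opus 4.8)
The plan is to combine the geometric ergodic convergence furnished by Proposition~\ref{prop:existence_unitype} with the elementary observation that, viewed as a Uni-type DiKernel, $h$ produces consensus in a single step. First I would apply Proposition~\ref{prop:existence_unitype} to the $\gamma$-mixing DiKernel $W$ to extract the density $h$, the consensus value $f^{*}=\int_0^1 h(y)f_0(y)\,\mathrm{d}y$, and constants $\alpha>0$ and $\rho\in(0,1)$ such that $\sup_{x\in[0,1]}|f_t(x)-f^{*}|\leq\alpha\rho^{t}$, where $f_t=\operatorname{T}^t(W)f_0$. Next I would identify the dynamics generated by the Uni-type DiKernel $W_h(x,y)=h(y)$: since $h$ is a probability density, $\int_0^1 h(y)\,\mathrm{d}y=1$, so $(\operatorname{T}(W_h)g)(x)=\int_0^1 h(y)g(y)\,\mathrm{d}y$ is constant in $x$, giving $\tilde f_1=f^{*}$ and hence $\tilde f_t=f^{*}$ for every $t\geq 1$. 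Thus the trajectory under $h$ sits exactly at the consensus value to which the trajectory under $W$ converges.

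With both trajectories in hand, I would write the difference of the normalized discounted payoffs (as in the definition of $U_i$) as $(1-\delta)\sum_{t=1}^{\infty}\delta^{t}\bigl[u(f_t)-u(\tilde f_t)\bigr]=(1-\delta)\sum_{t=1}^{\infty}\delta^{t}\bigl[u(f_t)-u(f^{*})\bigr]$. Using the Lipschitz regularity of $u$ (Assumption~\ref{assump:lipschitz_payoff}, constant $L$) together with $\lambda([0,1])=1$, each term obeys $|u(f_t)-u(f^{*})|\leq L\|f_t-f^{*}\|_1\leq L\sup_{x}|f_t(x)-f^{*}|\leq L\alpha\rho^{t}$. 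Summing the geometric series yields the bound $(1-\delta)\sum_{t=1}^{\infty}\delta^{t}L\alpha\rho^{t}=L\alpha\,(1-\delta)\,\dfrac{\delta\rho}{1-\delta\rho}$ on the absolute difference of the two discounted payoffs.

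Finally I would let $\delta\to 1^{-}$. Because $\rho<1$, the denominator satisfies $1-\delta\rho\to 1-\rho>0$, so the factor $\tfrac{\delta\rho}{1-\delta\rho}$ stays bounded while $(1-\delta)\to 0$; hence the bound tends to $0$, and choosing $\delta$ close enough to $1$ makes it smaller than the prescribed $\varepsilon$. The point worth stressing is that \emph{patience} is the driving mechanism: for $\delta$ near $1$ almost all discounted weight falls on the far-future stages, where the $W$-dynamics have already relaxed to the consensus $f^{*}$ that the Uni-type kernel $h$ attains immediately. Note in particular that without the $(1-\delta)$ normalization the analogous bound would converge to the positive constant $L\alpha\tfrac{\rho}{1-\rho}$ rather than to $0$, so the normalization is essential to the statement.

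The step I expect to require the most care is the passage from convergence of the opinion functions to convergence of the payoffs, i.e. the estimate $|u(f_t)-u(f^{*})|\leq L\alpha\rho^{t}$. This is immediate under Assumption~\ref{assump:lipschitz_payoff}; if one only assumes $u$ continuous on $L^1([0,1])$, one should instead note that $\sup_x|f_t(x)-f^{*}|\to 0$ forces $\|f_t-f^{*}\|_1\to 0$, whence $u(f_t)\to u(f^{*})$, and then invoke an Abel/Tauberian argument to conclude that both normalized payoffs converge to $u(f^{*})$. The remaining ingredient one must not overlook is that the rate $\rho$ from Proposition~\ref{prop:existence_unitype} is genuinely strictly below $1$ (guaranteed by the Doeblin minorization), since this is exactly what keeps $1-\delta\rho$ bounded away from $0$ as $\delta\to 1$.
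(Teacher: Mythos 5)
Your proof is correct and follows essentially the route the paper intends: the corollary is stated as an immediate consequence of Proposition~\ref{prop:existence_unitype} (the paper gives no separate proof), namely that the $W$-dynamics converge geometrically to the consensus $f^{*}$, which is exactly the trajectory produced from stage $1$ on by the Uni-type DiKernel $h$, so the normalized discounted difference $(1-\delta)\sum_{t\geq 1}\delta^{t}\bigl[u(f_t)-u(f^{*})\bigr]$ vanishes as $\delta\to 1^{-}$. Your two flagged points of care --- that $\rho<1$ (from the Doeblin minorization) and that the $(1-\delta)$ normalization present in the game's payoff $U_i$ is what makes large $\delta$ help rather than hurt --- are precisely the right ones.
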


This result allow us to characterize approximations of the equilibrium strategies for a larger family of DiKernels. Following Corollary~\ref{cor:Dikernel_convergence_markov}, any equilibrium of the Lobby Game with the Uni-type DiKernel $h$ is an $\varepsilon$-equilibrium strategies of the Lobby Game with DiKernel $W$ for large enough discount factor.

\subsection{Convergence of strategies: From Continuous to Discrete Games} \label{sec:convergence_strat}

We have established sufficient conditions that guarantee the existence of the Nash Equilibria of the Lobby Game. Moreover, we have studied the class of DiKernel that we can characterize the equilibrium strategies. 
Then, in this section, we would like  to use these results as tools to find equilibria of large discrete Lobby Games. 
The family of large discrete Lobby Games are games closer to what we can observe in real-life economic scenarios. 
In this class of games, the lobbies sometime face a complex optimization problem in the sense that they need to choose a level of influence for each agent or which set of agents they want to influence. 
Therefore, this section aim to simplify the analysis for this family of finite games. 
We use the tools defined in Section~\ref{sec:cont_disc} to connect the continuous and discrete worlds in this game-theorical scenario. 

We start by defining the discretization of the Lobby Game by replacing the DiKernel $W$ for $W_\mathcal{V}$ letting the rest of the elements unchanged. We describe this discretization\footnote{One could think of alternative discretizations to include the discretization of the initial function of opinions. In this paper we only focus on discretizations of the DiKernel.} process formally in Definition~\ref{def:disc_one_shot}.
The main intuition for this discretization process is that lobbies will continue to influence the continuum of agents which interact according to a block-constant DiKernel. The blocks in the DiKernel represent that agents are part of a community and that there is a finite number of communities. 

\begin{definition}[Discretization of One-Shot Game] \label{def:disc_one_shot} 
Let $\mathcal{G}(I, S_i, U_i)$ be the normal form of the Lobby Game. The discretization of the one-shot game according to partition $\mathcal{V}$ is $\tilde{\mathcal{G}}_\mathcal{V}(I, S_i, U_i)$. As in Definition~\ref{def:one_shot}, in case of ambiguity, we will include the rest of the parameters of the game in the notation.% where $I$ is the set of players, $S_i$ the set of strategies of player $i$, and $U_i$ the utility function of player $i$ with stage utility function $u_i$, and $b_i$ is the budget of player $i$. This game depends on the same set of parameters  from its continuous version, namely $f_0$, $s_0$, $W$ and $\delta$.
\end{definition}

We use our technical results from Section~\ref{sec:continuous_social_dynamics} and assume that the stage utility functions of the lobbies satisfy Assumption~\ref{assump:lipschitz_payoff}. 
This fact allows us to bound the distance of the utility functions between the game and its discretization. 
The key point is that if the utility functions are Lipschitz continuous, the linear growth of the error showed in Theorem~\ref{thm:dynamic_convergence} is controlled by the discount factor. 
With these bounds, we state Theorem~\ref{thm:epsilon_nash} which allows us to study the $\varepsilon$\nobreakdash-Nash Equilibria  of the discretization of the game. 
These $\varepsilon$\nobreakdash-Nash Equilibria as defined in Definition~\ref{def:epsilon_nash} are strategies that approximately satisfy the Nash Equilibrium condition.

\begin{definition}[$\varepsilon$\nobreakdash-Nash Equilibrium]\label{def:epsilon_nash}
Consider the Lobby Game $\mathcal{G}(I, S_i, U_i)$. 
The strategy profile $(\sigma_i,\sigma_{-i})\in \prod_{i\in I} S_i$ is an $\varepsilon$\nobreakdash-Nash Equilibrium of the game if for every player $i$, 
$U_i(\sigma_i,\sigma_{-i})\geq U_i(\sigma_i^\prime,\sigma_{-i})-\varepsilon$ for all $\sigma_i^\prime\in S_i$. 
\end{definition}

\begin{theorem}\label{thm:epsilon_nash} Let $\mathcal{G}$ be the game that satisfies Assumptions~\ref{assump:lipschitz_payoff},~\ref{assump:piecewise_lipschitz},~\ref{assump:strategy_set}, and ~\ref{assump:competition_operator}.  For every $\varepsilon>0$, there exists a positive integer $n_0$ such that for every $n\geq n_0$, there is a partition $\mathcal{V}=\{V_j\}_{j\in [J]}$ of $[0,1]$ into intervals of measure less than $\frac{1}{n}$. Let $\tilde{\mathcal{G}}_\mathcal{V}$ be the discretization of the game $\mathcal{G}$ according to partition $\mathcal{V}$. Then, the Nash Equilibria of $\mathcal{G}$ are $\varepsilon$\nobreakdash-Nash Equilibria of $\tilde{\mathcal{G}}_\mathcal{V}$. Conversely, the Nash Equilibria of $\tilde{\mathcal{G}}_\mathcal{V}$ are $\varepsilon$\nobreakdash-Nash Equilibria of $\mathcal{G}$.
\end{theorem}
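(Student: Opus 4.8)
The plan is to reduce everything to a single \emph{uniform} utility-approximation estimate and then run the standard triangle-inequality argument that converts a Nash equilibrium of one game into an approximate equilibrium of a nearby game. Write $\tilde U_i$ for the utility operators of $\tilde{\mathcal{G}}_\mathcal{V}$, which differ from those of $\mathcal{G}$ only in that $\operatorname{T}(W)$ is replaced by $\operatorname{T}(W_\mathcal{V})$. The first observation is that, once a strategy profile $(s_1,s_2)$ is fixed, the post-competition function $g := C(f_0,s_1,s_2)$ is a single fixed function of opinions, and both $U_i(f_0,s_1,s_2)$ and $\tilde U_i(f_0,s_1,s_2)$ are discounted evaluations of the DeGroot dynamics started from $g$, under $W$ and under $W_\mathcal{V}$ respectively. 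Hence the gap between them is exactly the quantity controlled by Theorem~\ref{thm:dynamic_convergence} (whose underlying estimate is Proposition~\ref{prop:discounted_partition}) applied to the initial opinion $g$.

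First I would fix $\varepsilon>0$. Because $W$ satisfies Assumption~\ref{assump:piecewise_lipschitz} and each $u_i$ satisfies Assumption~\ref{assump:lipschitz_payoff}, Theorem~\ref{thm:dynamic_convergence} furnishes $n_0$ and, for each $n\ge n_0$, a partition $\mathcal{V}$ of mesh below $1/n$ with
\[
\left|\sum_{t=1}^\infty \delta^t u_i\bigl(\operatorname{T}^t(W)g\bigr)-\sum_{t=1}^\infty \delta^t u_i\bigl(\operatorname{T}^t(W_\mathcal{V})g\bigr)\right|\le \frac{4\alpha\delta}{(1-\delta)^2}\eta .
\]
The decisive point is that the right-hand side does \emph{not} depend on the initial opinion function: the estimate is uniform over all $g$ valued in $[-1,1]$. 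Since both competition operators under consideration return an opinion function valued in $[-1,1]$ (as one checks for Examples~\ref{ex:weighted_average_competition} and~\ref{ex:initial_opinion_plus_competition}), the bound holds \emph{simultaneously for every strategy profile}. Folding in the normalizing factor $(1-\delta)$ present in $U_i$, this yields
\[
\sup_{(s_1,s_2)}\bigl|U_i(f_0,s_1,s_2)-\tilde U_i(f_0,s_1,s_2)\bigr|\le \frac{4\alpha\delta}{1-\delta}\,\eta =: \varepsilon',
\]
and I would choose $\eta$ (hence $n_0$) so that $\varepsilon'\le \varepsilon/2$.

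With the uniform gap $\varepsilon'$ in hand, the equilibrium transfer is routine. Let $(\sigma_i,\sigma_{-i})$ be a Nash equilibrium of $\mathcal{G}$. For any deviation $\sigma_i'\in S_i$,
\[
\tilde U_i(\sigma_i,\sigma_{-i})\ge U_i(\sigma_i,\sigma_{-i})-\varepsilon'\ge U_i(\sigma_i',\sigma_{-i})-\varepsilon'\ge \tilde U_i(\sigma_i',\sigma_{-i})-2\varepsilon',
\]
where the outer inequalities are the uniform approximation and the middle one is the equilibrium condition in $\mathcal{G}$; since $2\varepsilon'\le\varepsilon$, the profile is an $\varepsilon$-Nash equilibrium of $\tilde{\mathcal{G}}_\mathcal{V}$. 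The converse is identical after swapping the roles of $U_i$ and $\tilde U_i$, which is legitimate because the cut-norm distance $\|W-W_\mathcal{V}\|_\square$, and therefore the whole estimate, is symmetric in the two DiKernels.

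The main obstacle is precisely the uniformity over strategy profiles flagged above: a pointwise-in-$g$ approximation would not suffice, since the deviating strategies $\sigma_i'$ range over the entire infinite-dimensional set $S_i$ and generate a whole family of initial opinions $C(f_0,\sigma_i',\sigma_{-i})$. What rescues the argument is that the bound of Theorem~\ref{thm:dynamic_convergence} is driven only by $\|W-W_\mathcal{V}\|_\square$ and the fixed constants $\alpha,\delta$, with no dependence on the initial datum beyond its values lying in $[-1,1]$; so the only genuine verification needed is that $C$ maps into the space of admissible opinion functions. Notably, no compactness or continuity of $S_i$ is required for this transfer step — Assumptions~\ref{assump:strategy_set} and~\ref{assump:competition_operator} enter only through Theorem~\ref{thm:existence} to guarantee that the equilibria being transferred actually exist.
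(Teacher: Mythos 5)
Your proof is correct and takes essentially the same approach as the paper: the paper's own proof is a one-line citation of Theorems~\ref{thm:dynamic_convergence} and~\ref{thm:existence}, and your argument --- the utility approximation from Theorem~\ref{thm:dynamic_convergence}, uniform over strategy profiles because the bound depends only on $\|W-W_\mathcal{V}\|_\square$, $\alpha$, $\delta$, combined with the standard $2\varepsilon'$ transfer inequality, with Theorem~\ref{thm:existence} entering only to guarantee that the equilibria being transferred exist --- is precisely the expansion of that citation. Your write-up in fact makes explicit the points the paper leaves implicit (the uniformity over the deviation set, the $(1-\delta)$ normalization, and the need for $C$ to map into $[-1,1]$-valued opinion functions).
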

%%Proof only in Appendix

\section{Concluding Remarks}\label{sec:conclusion}
In this paper, we extend the DeGroot model to a setup with a continuum of agents. 
We show that, under certain regularity conditions, the continuous DeGroot model is the limit case of the discrete one.
Additionally, we establish sufficient conditions under which consensus emerges in the continuous DeGroot model.
This extension, in turn, allows us to study economic scenarios that involve interactions between a large number of agents through their continuous approximation.

We propose applications of our model in both non-strategic and strategic settings.
First, we develop a canonical map between discrete and continuous models, enabling us to control the distance between their dynamics.
This map allows us to tackle the problem of dimensionality reduction by comparing discrete problems of different sizes in their continuous form. 
Second, we focus on lobby competition and state the sufficient conditions for the existence of the Nash Equilibrium. 
Under this game-theoretic setup we provide the characterization when agents interact under a social structure in which everyone is influenced in the same way.
Furthermore, we use our technical results from the convergence between the continuous and discrete models of DeGroot to approximate the equilibrium strategies of the discretizations of the Lobby Game.

\bibliographystyle{apalike} 
\bibliography{biblio}
\begin{appendix}

\section{Supplementary Material for Section~\ref{sec:continuous_social_dynamics}}

We start with some technical results that allow us to characterize the convergence of the dynamic of opinions.
These results establish the equivalence of results between the discrete DeGroot models and our proposed continuous version (see Definition~\ref{def:degroot_cont}). One might think that these two dynamics (discrete and continuous) are not equivalent because there could be an accumulation of errors leading to divergence. Nevertheless, our first technical result, Lemma~\ref{lem:graphon_distance}, bound this difference. 
Specifically, it shows that the distance between updated functions of opinions is bounded by the distance between the initial functions of opinions and the DiKernels that determine agent interaction. 
This result is positive in the sense that, if the initial opinions and DiKernels are well behave then the final opinions are not diverging.

\begin{lemma}\label{lem:graphon_distance} Let $W$ and $V$ DiKernels that take values on $[0,M]$. Then for all functions of opinions $f$ and $g$ the following inequality holds $$ {\left\|\left(\operatorname{T}(W)f\right)(x)-\left(\operatorname{T}(V)g\right)(x)\right\|}_1\leq {\|f-g\|}_1+4{\|W-V\|}_\square$$
Where $\|U\|_{\square}=\sup\limits_{A\times B\subseteq{[0,1]}^2} \left|\int_A\int_B U(x,y)\mathrm{d}x \mathrm{d}y\right|$. 
\end{lemma}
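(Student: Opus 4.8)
The plan is to reduce the statement to two independent estimates by inserting the intermediate term $\operatorname{T}(W)g$. Writing
\[
\operatorname{T}(W)f-\operatorname{T}(V)g=\operatorname{T}(W)(f-g)+\operatorname{T}(W-V)g,
\]
the triangle inequality for $\|\cdot\|_1$ gives $\|\operatorname{T}(W)f-\operatorname{T}(V)g\|_1\le \|\operatorname{T}(W)(f-g)\|_1+\|\operatorname{T}(W-V)g\|_1$. I would then argue that the first summand is controlled by $\|f-g\|_1$ (this is where row-stochasticity enters) and the second by $4\|W-V\|_\square$ (this is where the boundedness of opinions and the cut norm enter), and add the two bounds.

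For the first term I would use that $W$ is nonnegative and row-stochastic, so that for each fixed $x$ the slice $W(x,\cdot)$ is a probability density. Hence $|\operatorname{T}(W)(f-g)(x)|\le \int_0^1 W(x,y)\,|f(y)-g(y)|\,\mathrm dy$ pointwise, by the triangle inequality for the averaging measure. Integrating in $x$ and exchanging the order of integration by Tonelli — legitimate since the integrand is nonnegative — bounds $\|\operatorname{T}(W)(f-g)\|_1$ by $\int_0^1 |f(y)-g(y)|\,c(y)\,\mathrm dy$ where $c(y)=\int_0^1 W(x,y)\,\mathrm dx$, and the row-stochastic normalization is what then reduces this to $\|f-g\|_1$; that is, $\operatorname{T}(W)$ acts as a non-expansive averaging operator on $L^1$. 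This step is largely bookkeeping, but controlling the induced column masses $c(y)$ is the point that must be handled with care.

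The second term is the conceptual core and, I expect, the main obstacle, since it is where the cut norm and the sharp constant $4$ must be produced. Because $g$ is an opinion function we have $\|g\|_\infty\le 1$, so it suffices to bound $\sup_{\|g\|_\infty\le 1}\|\operatorname{T}(U)g\|_1$ for $U=W-V$. Writing $\|\operatorname{T}(U)g\|_1=\int_0^1 \phi(x)\bigl(\int_0^1 U(x,y)g(y)\,\mathrm dy\bigr)\mathrm dx$ with $\phi(x)=\sign\bigl(\int_0^1 U(x,y)g(y)\,\mathrm dy\bigr)$, this becomes a bilinear form in $\phi$ and $g$ that is linear in each argument over the unit $L^\infty$-ball, so its supremum is attained at extreme points, i.e.\ with $\phi,g$ valued in $\{-1,+1\}$. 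Setting $S=\{\phi=1\}$, $T=\{g=1\}$ and expanding $\phi=\mathbf 1_S-\mathbf 1_{S^c}$, $g=\mathbf 1_T-\mathbf 1_{T^c}$ splits the form into the four box integrals $\int_S\int_T U$, $\int_S\int_{T^c} U$, $\int_{S^c}\int_T U$, $\int_{S^c}\int_{T^c} U$, each of absolute value at most $\|U\|_\square$ by definition of the cut norm. Summing yields $\|\operatorname{T}(W-V)g\|_1\le 4\|W-V\|_\square$.

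Adding the two estimates gives the claimed inequality. The only genuinely delicate points are the extreme-point reduction to $\{-1,+1\}$-valued functions — which is what guarantees that exactly four box integrals suffice and hence delivers the sharp constant $4$ rather than a larger one — together with the non-expansiveness of $\operatorname{T}(W)$ on $L^1$; everything else is an application of Tonelli's theorem and the triangle inequality.
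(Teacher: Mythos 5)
Your proposal is, up to a harmless symmetry, the paper's own proof. The paper inserts the intermediate term $\operatorname{T}(V)f$ where you insert $\operatorname{T}(W)g$, splits the error into $\|\operatorname{T}(V)f-\operatorname{T}(V)g\|_1$ and $\|\operatorname{T}(W)f-\operatorname{T}(V)f\|_1$ by the triangle inequality, and treats the kernel-difference term exactly as you do: introduce the sign function, view the result as a bilinear pairing of $W-V$ against two functions in the unit $L^\infty$-ball, and bound this by $4\|W-V\|_\square$. Where the paper simply cites the norm comparison $\tfrac{1}{4}\|U\|_{\infty\to 1}\le\|U\|_\square$, you prove it via the extreme-point reduction to $\{\pm 1\}$-valued functions and the four box integrals; that argument is correct, so this half of your proposal is sound and slightly more self-contained than the paper's.

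The genuine gap is in your other term: the claim that row-stochasticity makes $\operatorname{T}(W)$ non-expansive on $L^1$. Row-stochasticity normalizes rows, $\int_0^1 W(x,y)\,\mathrm{d}y=1$ for every $x$, but your Tonelli computation produces the column masses $c(y)=\int_0^1 W(x,y)\,\mathrm{d}x$, which row-stochasticity does not control: they can be as large as $M$. Take $W(x,y)=M\,\mathbf{1}_{[0,1/M]}(y)$ (row-stochastic, valued in $[0,M]$), $f=\mathbf{1}_{[0,1/M]}$, $g\equiv 0$; then $\operatorname{T}(W)(f-g)\equiv 1$, so $\|\operatorname{T}(W)(f-g)\|_1=1$ while $\|f-g\|_1=1/M$, and the claimed contraction fails whenever $M>1$. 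The honest output of your computation is $\|\operatorname{T}(W)(f-g)\|_1\le\bigl(\sup_y c(y)\bigr)\|f-g\|_1\le M\|f-g\|_1$. You should know that the paper's proof stumbles at exactly the same point: it silently replaces the kernel by $1$ inside the integral, which presupposes $M\le 1$, even though the model explicitly allows $M>1$ (its DiKernels are densities, taking values such as $3$ and $6$ in the paper's own examples). Indeed, taking $W=V$ in the example above violates the stated inequality itself, so the lemma is only correct with $M\|f-g\|_1$ in place of $\|f-g\|_1$, or under an added hypothesis such as kernels bounded by $1$ or doubly stochastic. In short, you reproduced the paper's argument faithfully, including its one flawed step; the part you worked out in greatest detail (the constant $4$ from the cut norm) is the part that is right.
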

\begin{proof} 
Let $\Delta_F={\left\|\left(\operatorname{T}(V)f\right)(x)-\left(\operatorname{T}(V)g\right)(x)\right\|}_1$ be the error one incurs in the updating different functions of opinions $f$ and $g$, respectively, using a common DiKernel $V$. In addition, let $\Delta_G = {\left\|\left(\operatorname{T}(W)f\right)(x)-\left(\operatorname{T}(V)f\right)(x)\right\|}_1$ be the error one incurs when updating a common function of opinions $f$ using DiKernels $W$ and $V$, respectively. From the triangle inequality of $\|.\|_1$ we know that
$\left\|\left(\operatorname{T}(W)f\right)(x)-\left(\operatorname{T}(V)g\right)(x)\right\|_1 \leq \Delta_F+\Delta_G$, which allows us to explore these elements independently. We start by bounding $\Delta_G$. Then, 
\begin{align*}
    \Delta_G =& {\left\|\int_0^1 W(x,y)f(y)\mathrm{d}y-\int_0^1 V(x,y)f(y)\mathrm{d}y(x)\right\|}_1\\
    =& {\left\|\int_0^1 \left[W(x,y)-V(x,y)\right]f(y)\mathrm{d}y\right\|}_1\\
    =& \int_0^1\left| \int_0^1 \left[W(x,y)-V(x,y)\right]f(y)\mathrm{d}y\right|\mathrm{d}x
\end{align*}
Define $h(x)=\sign\left( \int_0^1 \left[W(x,y)-V(x,y)\right]f(y)\mathrm{d}y\right)$. Rewrite the previous integral by using $h(x)$ instead of the absolute value. 
\begin{align*}
    \Delta_G =& \int_0^1 h(x) \left(\int_0^1 \left[W(x,y)-V(x,y)\right]f(y)\mathrm{d}y\right)\mathrm{d}x\\
    =& \int_0^1\int_0^1 \left[W(x,y)-V(x,y)\right]f(y)h(x)\mathrm{d}x\mathrm{d}y\\
\end{align*}
Consider the auxiliary norm $\|U\|_{\infty\rightarrow 1}=\sup\limits_{-1\leq h,f\leq 1} \int_0^1\int_0^1 U(x,y)h(x)f(y)$, which is related to $\|.\|_\square$ by $\frac{1}{4}\|U\|_{\infty\rightarrow 1}\leq \|U\|_{\square}\leq \|U\|_{\infty\rightarrow 1}$. Then we can conclude that, 
\begin{equation*}
    \Delta_G = \int_0^1\int_0^1 \left[W(x,y)-V(x,y)\right]f(y)h(x)\mathrm{d}x \mathrm{d}y\leq  \| W-V\|_{\infty\rightarrow 1}\leq 4 \| W-V\|_{\square}
\end{equation*}

Now, we can analyze bound $\Delta_F$. Then, 
\begin{align*}
    \Delta_F= & \int_0^1\left|\int_0^1 V(x,y)\left(f(y)-g(y)\right)\mathrm{d}y\right|\mathrm{d}x \leq  \int_0^1\int_0^1 \left|W(x,y)\left(f(y)-g(y)\right)\right|\mathrm{d}y\mathrm{d}x \\
    \leq & \int_0^1\int_0^1 \left|1\left(f(y)-g(y)\right)\right|\mathrm{d}y\mathrm{d}x  \leq  \int_0^1 \left\|f-g\right\|_1\mathrm{d}x=\left\|f-g\right\|_1
\end{align*}

We can conclude that ${\left\|\left(\operatorname{T}(W)f\right)(x)-\left(\operatorname{T}(V)g\right)(x)\right\|}_1 \leq \Delta_F + \Delta_G \leq \left\|f-g\right\|_1 + 4 \| W-V\|_{\square}$, which is the desired result.
\end{proof}

\begin{lemma}\label{prop:dynamic_convergence} Let $W$ and $V$ DiKernels that take values on $[0,M]$. Then for each $t\in\mathbb{N}_+$ and for all functions of opinions $f$ and $g$ the following inequality holds
$${\left\|\left(\operatorname{T}^t(W)f\right)(x)-\left(\operatorname{T}^t(V)g\right)(x)\right\|}_1\leq \|f-g\|_1 + 4t\|W-V\|_\square$$
\end{lemma}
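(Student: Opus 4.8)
The plan is to prove the inequality by induction on $t$, using Lemma~\ref{lem:graphon_distance} both as the base case and as the driving step of the induction. The base case $t=1$ is exactly the statement of Lemma~\ref{lem:graphon_distance}. The only structural fact I need beyond that is the recursive (semigroup) property $\operatorname{T}^{t+1}(W)=\operatorname{T}(W)\circ\operatorname{T}^t(W)$, which holds by the definition of the iterated operator in Equation~(\ref{eq:degroot_cont_recursive}).

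For the inductive step, suppose the bound holds for some $t\geq 1$. Set $f_t=\operatorname{T}^t(W)f$ and $g_t=\operatorname{T}^t(V)g$, so that $\operatorname{T}^{t+1}(W)f=\operatorname{T}(W)f_t$ and $\operatorname{T}^{t+1}(V)g=\operatorname{T}(V)g_t$. Before reapplying Lemma~\ref{lem:graphon_distance} to the pair $(f_t,g_t)$, I must verify that $f_t$ and $g_t$ are themselves genuine functions of opinions, i.e. that they take values in $[-1,1]$. This is where row-stochasticity is used: since $W\geq 0$ and $\int_0^1 W(x,y)\mathrm{d}y=1$, the value $(\operatorname{T}(W)\phi)(x)$ is a pointwise weighted average of the values of $\phi$, hence lies in $[-1,1]$ whenever $\phi$ does; iterating from $f,g\in[-1,1]$ shows $f_t,g_t$ are valid functions of opinions for every $t$.

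Applying Lemma~\ref{lem:graphon_distance} to $(f_t,g_t)$ then gives
\[
{\left\|\operatorname{T}(W)f_t-\operatorname{T}(V)g_t\right\|}_1\leq {\|f_t-g_t\|}_1+4{\|W-V\|}_\square .
\]
The left-hand side equals ${\|\operatorname{T}^{t+1}(W)f-\operatorname{T}^{t+1}(V)g\|}_1$, and the first term on the right, ${\|f_t-g_t\|}_1={\|\operatorname{T}^t(W)f-\operatorname{T}^t(V)g\|}_1$, is bounded by the inductive hypothesis by ${\|f-g\|}_1+4t{\|W-V\|}_\square$. Combining these yields ${\|f-g\|}_1+4(t+1){\|W-V\|}_\square$, which closes the induction.

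I do not expect a serious obstacle here: the linearity of $\operatorname{T}(W)$ together with the one-step estimate makes the error accumulate purely additively, which is precisely the claimed growth that is linear in $t$. The one point genuinely requiring attention is the observation above that the intermediate iterates remain functions of opinions, so that Lemma~\ref{lem:graphon_distance} is legitimately applicable at each stage; this is guaranteed by the row-stochasticity assumption maintained throughout the paper.
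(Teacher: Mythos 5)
Your proof is correct and follows essentially the same route as the paper: both establish the case $t=1$ via Lemma~\ref{lem:graphon_distance} and then iterate that one-step estimate (the paper peels off one application of $\operatorname{T}$ at a time from the top; you run the same argument as a bottom-up induction, which is equivalent). Your additional check that the intermediate iterates $f_t,g_t$ remain $[-1,1]$-valued (via non-negativity and row-stochasticity of the DiKernels) is a point the paper leaves implicit, and it is indeed needed for Lemma~\ref{lem:graphon_distance} to apply at each stage, so it is a welcome refinement rather than a departure.
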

\begin{proof}
    Let $t=1$. From Lemma~\ref{lem:graphon_distance} we know that ${\left\|\left(\operatorname{T}(W)f\right)(x)-\left(\operatorname{T}(V)g\right)(x)\right\|}_1\leq \|f-g\|_1+4 \|W-V\|_{\square}$, which proves that the inequality holds for $t=1$. We now exploit the recursive property of the DeGroot model to prove that the inequality holds for a general $t$. 
    First, let us express $\left(\operatorname{T}^t(W)f\right)(x)$ as $\left(\operatorname{T}(W)\left(T^{t-1}(W)f\right)\right)(x)=\left(\operatorname{T}(W)f^\prime\right)(x)$ and $\left(\operatorname{T}^t(V)g\right)(x)$ as $\left(\operatorname{T}(V)\left(T^{t-1}(V)g\right)\right)(x)=\left(\operatorname{T}(V)g^\prime\right)(x)$. 
    Then, from Lemma~\ref{lem:graphon_distance} and the result for $t=1$ we know that $ \|\left(\operatorname{T}(W)f^\prime\right)(x)-\left(\operatorname{T}(V)g^\prime\right)(x)\|_1 \leq \|f^\prime-g^\prime\|_1+4 \|W-V\|_{\square}$.
    Note that $ \|f^\prime-g^\prime\|_1=\|\left(T^{t-1}(W)f\right)(x)-\left(T^{t-1}(V)f\right)(x)\|_1$ and we can use the same argument to bound $\|\left(T^{t-1}(W)f\right)(x)-\left(T^{t-1}(V)f\right)(x)\|_1$ with the operators in $t-2$. Since we can iterate this process $t$ times it follows that $\|\left(\operatorname{T}^t(W)f\right)(x)-\left(\operatorname{T}^t(W_{\mathcal{V}})f\right)(x)\|_1\leq \|f-g\|_1 + 4t\|W-V\|_\square$.
\end{proof}

This lemma shows that the distance between the two dynamics grows linearly with a speed controlled by $\|W-V\|_\square$. 
In particular, if we study a problem where what happened at time $t$ is discounted we can control the difference.

\begin{proof}[Proof of Proposition~\ref{prop:dynamic_partition}] Consider Lemma~\ref{prop:dynamic_convergence} when $f=g$.
\end{proof}

\begin{lemma}\label{lem:discounted_two_kernels}
Given $W$ and $V$ two DiKernels that take values on $[0,M]$, a stage utility function $u$ that satisfies Assumption~\ref{assump:lipschitz_payoff}, two initial opinion functions $f,g\in L^1([0,1])$ and discount factor $\delta\in(0,1)$. Then, the dynamic of opinions $f_t$ given by $\operatorname{T}^t(W)f$ and $g_t$ given by $\operatorname{T}^t(V)g$, are such that 

$$\left|\sum_{t=1}^{\infty} \delta^t u(f_t)-\sum_{t=1}^{\infty} \delta^t u(g_t)\right|\leq \frac{\alpha\delta}{1-\delta}\|f-g\|_1+\frac{4\alpha\delta}{\left(1-\delta\right)^2}\|W-V\|_\square.$$
\label{lem:convergence_payoff}
\end{lemma}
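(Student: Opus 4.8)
The plan is to reduce everything to Lemma~\ref{prop:dynamic_convergence} via the triangle inequality and the Lipschitz hypothesis, and then to evaluate two elementary power series. First I would bound the difference of the discounted sums term-by-term: since both series converge absolutely (the opinions are bounded and $\delta\in(0,1)$), the triangle inequality gives
\[
\left|\sum_{t=1}^{\infty} \delta^t u(f_t)-\sum_{t=1}^{\infty} \delta^t u(g_t)\right|\leq \sum_{t=1}^{\infty} \delta^t \left|u(f_t)-u(g_t)\right|.
\]
Next I would invoke Assumption~\ref{assump:lipschitz_payoff}, so that $u$ is $\alpha$-Lipschitz for the $\|\cdot\|_1$ norm, yielding $|u(f_t)-u(g_t)|\leq \alpha\|f_t-g_t\|_1$ for each $t$.

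The central step is to control $\|f_t-g_t\|_1$ uniformly in $t$ using the already-established Lemma~\ref{prop:dynamic_convergence}. Since $f_t=\operatorname{T}^t(W)f$ and $g_t=\operatorname{T}^t(V)g$, that lemma gives exactly
\[
\|f_t-g_t\|_1\leq \|f-g\|_1 + 4t\|W-V\|_\square.
\]
Substituting this into the term-by-term bound and separating the two contributions, I would obtain
\[
\sum_{t=1}^{\infty}\delta^t\alpha\bigl(\|f-g\|_1+4t\|W-V\|_\square\bigr)
=\alpha\|f-g\|_1\sum_{t=1}^{\infty}\delta^t+4\alpha\|W-V\|_\square\sum_{t=1}^{\infty}t\,\delta^t.
\]

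Finally I would evaluate the two geometric-type series, namely $\sum_{t=1}^{\infty}\delta^t=\frac{\delta}{1-\delta}$ and $\sum_{t=1}^{\infty}t\,\delta^t=\frac{\delta}{(1-\delta)^2}$, both of which converge for $\delta\in(0,1)$. Plugging these in produces precisely $\frac{\alpha\delta}{1-\delta}\|f-g\|_1+\frac{4\alpha\delta}{(1-\delta)^2}\|W-V\|_\square$, completing the argument. There is no real obstacle here: the proof is a routine combination of the triangle inequality, the Lipschitz hypothesis, and Lemma~\ref{prop:dynamic_convergence}. The only point deserving a word of care is the convergence justifying the term-by-term passage (so that the interchange of absolute value and summation is legitimate) and the appearance of the factor $\sum t\delta^t$, which is exactly what turns the \emph{linear-in-$t$} growth of the per-stage error from Lemma~\ref{prop:dynamic_convergence} into the $(1-\delta)^{-2}$ dependence in the final bound.
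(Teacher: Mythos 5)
Your proposal is correct and follows essentially the same route as the paper's proof: term-by-term triangle inequality, the $\alpha$-Lipschitz bound from Assumption~\ref{assump:lipschitz_payoff}, the per-stage estimate $\|f_t-g_t\|_1\leq \|f-g\|_1+4t\|W-V\|_\square$ from Lemma~\ref{prop:dynamic_convergence}, and the evaluation of $\sum_{t\geq 1}\delta^t$ and $\sum_{t\geq 1}t\,\delta^t$. Your added remark on justifying the term-by-term passage is a minor point of extra care not spelled out in the paper, but the argument is otherwise identical.
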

\begin{proof}
Denote by $f_t=\left(\operatorname{T}^t(W)f\right)$ and $g_t=\left(\operatorname{T}^t(V)g\right)$. Let $\alpha$ be the Lipschitz constant of $u$, given by Assumption~\ref{assump:lipschitz_payoff}.
First, note that 
$\left|\sum_{t=1}^{\infty} \delta^t u(f_t) - \sum_{t=1}^{\infty} \delta^t u(g_t)\right| \leq \sum_{t=1}^{\infty} \delta^t | u(f_t)- u(g_t)|\leq \sum_{t=1}^{\infty} \delta^t \alpha\|f_t-g_t\|_1$.

Now, Lemma~\ref{prop:dynamic_convergence} tell us that $
\|f_t-g_t\|_1\leq \|f-g\|_1 + 4t\|W-V\|_\square$, hence
\[
\left|\sum_{t=1}^{\infty} \delta^t u(f_t) - \sum_{t=1}^{\infty} \delta^t u(g_t)\right| \leq \sum_{t=1}^{\infty} \delta^t | u(f_t)- u(g_t)|\leq \sum_{t=1}^{\infty} \delta^t \alpha\left(\|f-g\|_1 + 4t\|W-V\|_\square\right).
\]
Since $\sum_{t=1}^{\infty} \delta^t=\frac{\delta}{1-\delta}$ and $\sum_{t=1}^{\infty} \delta^t t= \frac{\delta}{\left(1-\delta\right)^2}$, we obtain that
\[
\left|\sum_{t=1}^{\infty} \delta^t u(f_t) - \sum_{t=1}^{\infty} \delta^t u(g_t)\right| \leq \frac{\alpha\delta}{1-\delta}\|f-g\|_1+\frac{4\alpha\delta}{\left(1-\delta\right)^2}\|W-V\|_\square.
\]
\end{proof}

\begin{proof}[Proof of Proposition~\ref{prop:discounted_partition}] Consider Lemma~\ref{lem:discounted_two_kernels} when $f=g$.
\end{proof}

To determine conditions for the convergence of the discrete and the continuous versions of the DeGroot model we focus on the distance between the DiKernels $W$ and $W_{\mathcal{V}}$, where the latter is the discretization of $W$ using the partition $\mathcal{V}$ of $[0,1]$.

In Lemma~\ref{lem:distance_partition2} we prove that the distance between a DiKernel and it discretization converges when considering a finer partition of $[0,1]$, given that the original DiKernel satisfy Assumption~\ref{assump:piecewise_lipschitz}.
Finally, Theorem~\ref{thm:dynamic_convergence} establish the convergence between the continuous DeGroot model that we propose and the discrete version proposed by \cite{degroot1974}.

\begin{lemma}\label{lem:distance_partition2} Let $\mathcal{V}=\{V_j\}_{j\in [J]}$ be a partition of $[0,1]$ into intervals of measure less than $\frac{1}{n}$ and $W$ a DiKernel that satisfies Assumption~\ref{assump:piecewise_lipschitz} then it holds $$\|W-W_\mathcal{V}\|_\square \leq \frac{2\theta}{n}+M\frac{K^2}{n^2}$$
\end{lemma}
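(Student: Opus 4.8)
The plan is to compare $W$ and $W_\mathcal{V}$ block by block, exploiting that by Definition~\ref{def:dikernel_discretization} the function $W_\mathcal{V}$ is exactly the average of $W$ over each rectangle $V^i\times V^j$, so $\int_{V^i\times V^j}(W-W_\mathcal{V})=0$ for every block. Since for any DiKernel $U$ one has the elementary bound $\|U\|_\square\le\int_0^1\int_0^1|U(x,y)|\,\mathrm{d}x\,\mathrm{d}y$, it is enough to control $W-W_\mathcal{V}$ on each block, and the blocks split naturally according to whether or not they respect the partition $\mathcal{I}=\{I_k\}_{k\in[K]}$ of Assumption~\ref{assump:piecewise_lipschitz}.

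First I would classify the intervals of $\mathcal{V}$. Call $V^i$ \emph{aligned} if it is contained in a single element of $\mathcal{I}$ and \emph{crossing} otherwise. Because $\mathcal{I}$ has only $K-1$ interior breakpoints and each breakpoint lies in exactly one interval of $\mathcal{V}$, there are at most $K-1$ crossing intervals; as each has measure less than $\tfrac1n$, their union $B$ satisfies $\lambda(B)\le\frac{K}{n}$. A block $V^i\times V^j$ is contained in a single cell $I_{k_1}\times I_{k_2}$ precisely when both $V^i$ and $V^j$ are aligned.

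On such an aligned block Assumption~\ref{assump:piecewise_lipschitz} applies uniformly, so for $(x,y)\in V^i\times V^j$,
\[
|W(x,y)-W_\mathcal{V}(x,y)|\le\frac{1}{\lambda(V^i)\lambda(V^j)}\int_{V^i\times V^j}|W(x,y)-W(u,v)|\,\mathrm{d}u\,\mathrm{d}v\le\theta\Bigl(\tfrac1n+\tfrac1n\Bigr)=\frac{2\theta}{n},
\]
using $|x-u|,|y-v|<\tfrac1n$. Summing this pointwise estimate over the aligned blocks, whose total measure is at most $1$, contributes the term $\frac{2\theta}{n}$.

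It remains to bound the blocks that cross $\mathcal{I}$. On any block both $W$ and $W_\mathcal{V}$ take values in $[0,M]$, so $|W-W_\mathcal{V}|\le M$; the \emph{corner} blocks, where both $V^i$ and $V^j$ are crossing, fill the region $B\times B$ of measure at most $\bigl(\frac{K}{n}\bigr)^2=\frac{K^2}{n^2}$, contributing at most $M\frac{K^2}{n^2}$ through the $L^1$ bound. The crux, and what I expect to be the main obstacle, is the \emph{singly-crossing} blocks, where exactly one of $V^i,V^j$ crosses $\mathcal{I}$: there the pointwise difference is genuinely of size $O(M)$ and the crossing strips have measure of order $\frac{K}{n}$, so the naive $L^1$ estimate would cost $M\frac{K}{n}$ and overshoot the target. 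To recover the sharper $\frac{K^2}{n^2}$ order one must use the cut-norm structure rather than $L^1$: in the non-crossing coordinate $W$ is $\theta$-Lipschitz, and each whole block integrates $W-W_\mathcal{V}$ to zero, so against arbitrary test sets $A,B$ the contributions along a row (resp. column) of singly-crossing blocks must be shown to cancel down to a boundary term. Making this cancellation precise — and confirming that it truly collapses the singly-crossing contribution below the corner term — is the delicate and most fragile step; once it is established, combining it with the aligned-block and corner-block estimates yields the stated bound $\frac{2\theta}{n}+M\frac{K^2}{n^2}$.
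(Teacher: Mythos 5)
Your treatment of the aligned blocks and of the corner blocks is exactly the paper's: the paper also bounds $\|\cdot\|_\square$ by the $L^1$ norm, extracts $\frac{2\theta}{n}$ from the Lipschitz estimate on blocks contained in a single cell of $\mathcal{I}$, and gets $M\frac{K^2}{n^2}$ from the at most $K$ crossing intervals paired with each other. The genuine gap in your proposal is the step you yourself flag: the singly-crossing blocks. You do not prove the cancellation you appeal to, and in fact no such cancellation exists. The cut-norm supremum runs over arbitrary measurable sets $A,B$, which are free to cut a crossing interval exactly at a breakpoint of $\mathcal{I}$; the fact that $W-W_\mathcal{V}$ integrates to zero over each whole block is then of no use. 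Concretely, take $K=2$, $\mathcal{I}=\{[0,1/2),[1/2,1]\}$, and $W(x,y)=2$ for $y\geq 1/2$, $W(x,y)=0$ for $y<1/2$ (row-stochastic, $M=2$, and piecewise $\theta$-Lipschitz for every $\theta>0$ since it is constant on each cell). Let $\mathcal{V}$ contain the crossing interval $V_{j_0}=[1/2-a,\,1/2+a)$ with $a$ slightly below $\frac{1}{2n}$. Then $W_\mathcal{V}\equiv 1$ on the strip $[0,1]\times V_{j_0}$, so $W-W_\mathcal{V}$ equals $-1$ on $[0,1]\times[1/2-a,1/2)$ and $+1$ on $[0,1]\times[1/2,1/2+a)$. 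Testing with $A=[0,1]$ and $B=[1/2-a,1/2)$ gives $\|W-W_\mathcal{V}\|_\square\geq a\approx\frac{1}{2n}$, which for small $\theta$ and large $n$ strictly exceeds $\frac{2\theta}{n}+M\frac{K^2}{n^2}$. So the singly-crossing contribution genuinely is of order $\frac{MK}{n}$, and no refinement of your sketch can reach the stated bound: the lemma as written is false.

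What your analysis actually uncovers is a flaw in the paper's own proof. There, the displayed decomposition bounding the $L^1$ norm by $\Delta_{\tilde I}+\Delta_{\tilde J}$ silently discards precisely the mixed blocks $V_i\times V_j$ with one aligned and one crossing index: $\Delta_{\tilde I}$ sums over pairs with \emph{both} indices aligned and $\Delta_{\tilde J}$ over pairs with \emph{both} indices crossing, so that inequality is invalid (the integrand is non-negative and the omitted terms are positive in general). The repair is the naive estimate you already wrote down for the mixed strips, which yields
\[
\|W-W_\mathcal{V}\|_\square\;\leq\;\frac{2\theta}{n}+\frac{2MK}{n}+M\frac{K^2}{n^2},
\]
i.e.\ a bound of order $\frac{\theta+MK}{n}$ rather than the claimed $\frac{2\theta}{n}+M\frac{K^2}{n^2}$. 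This weaker bound still tends to $0$ as $n\to\infty$, so Theorem~\ref{thm:dynamic_convergence} and the results downstream of it survive with an adjusted choice of $n_0$; but neither your sketch nor the paper's argument establishes the lemma in the form stated.
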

\begin{proof}
Since $W$ satisfies Assumption~\ref{assump:piecewise_lipschitz} there exists a partition $\mathcal{I}=\{I_k\}_{k\in [K]}$ of $[0,1]$ into $K$ elements. First note that, in the general case the interval partition $\mathcal{V}$ of $[0,1]$ is not a refinement of $\mathcal{I}$. 
Let $\tilde{I}=\left\{i\in[I]|\exists k\in [K] \text{ such that } V_i\subseteq I_k\right\}$ the set of indices of elements of the partition $\mathcal{V}$ that are subsets of some element of the partition $\mathcal{I}$.  We know that in such elements of  the function is Lipschitz. 
Now, consider the indices $j\in[I]$ such that $V_j$ intersect more than one element of the partition $\mathcal{I}$. 
Since each interval $I_k$ takes is of the form $[a_k,a_{k+1})$ for $k<K$ and $[a_k,1]$ for $k=K$, it is clear that each $a_k$ belongs to a unique element of the partition $\mathcal{V}$. Let $\tilde{J}=\left\{i\in I: a_k\in V_i \text{ for some } k\in K \right\}$ be the set of those indices. The sets of indices $\tilde{I}$ and $\tilde{J}$ allow us to split the problem as shown in Equation~\ref{eq:split_norm}. %Formally, $\Delta_{\tilde{I}}=\sum_{i,j\in \tilde{I}}\ \int_{V_i\times V_j}\left|W(x,y)-W_\mathcal{V}(x,y)\right|\mathrm{d}x \mathrm{d}y$ and $\Delta_{\tilde{J}}=\sum_{i,j\in \tilde{J}}\ \int_{V_i\times V_j}\left|W(x,y)-W_\mathcal{V}(x,y)\right|\mathrm{d}x \mathrm{d}y$.

\begin{align}
    \|W-W_{\mathcal{V}}\|_\square & = \sup\limits_{A\times B\subseteq{[0,1]}^2} \left|\int_{A\times B} W(x,y)-W_\mathcal{V}(x,y)\mathrm{d}x \mathrm{d}y\right| \nonumber\\
         & \leq \sup\limits_{A\times B\subseteq{[0,1]}^2} \int_{A\times B }\left|W(x,y)-W_\mathcal{V}(x,y)\right|\mathrm{d}x \mathrm{d}y\nonumber\\ 
         & \leq \int_{[0,1]^2}\left|W(x,y)-W_\mathcal{V}(x,y)\right|\mathrm{d}x \mathrm{d}y\nonumber\\
         &=  \sum_{i,j\in[I]}\ \int_{V_i\times V_j}\left|W(x,y)-W_\mathcal{V}(x,y)\right|\mathrm{d}x \mathrm{d}y\nonumber\\
         &\leq \Delta_{\tilde{I}} + \Delta_{\tilde{J}} \label{eq:split_norm}
\end{align}

We will start by analyzing $\Delta_{\tilde{I}}$ which is the cells where the function is $L$\nobreakdash-Lipschitz. 

\begingroup
\allowdisplaybreaks
\begin{align*}
    \Delta_{\tilde{I}}&=\sum_{i,j\in \tilde{I}}\ \int_{V_i\times V_j}\left|W(x,y)-W_\mathcal{V}(x,y)\right|\mathrm{d}x \mathrm{d}y\\
    & = \sum_{i,j\in \tilde{I}}\ \int_{V_i\times V_j}\left|W(x,y)-\frac{1}{\lambda(V_i)\lambda(V_j)}\int_{V_i\times V_j} W(u,v)\mathrm{d}u \mathrm{d}v\right|\mathrm{d}x \mathrm{d}y\\
         & \leq \sum_{i,j\in \tilde{I}}\ \int_{V_i\times V_j}\frac{1}{\lambda(V_i)\lambda(V_j)}\left[\int_{V_i\times V_j}\left|W(x,y)- W(u,v)\right|\mathrm{d}u \mathrm{d}v\right]\mathrm{d}x \mathrm{d}y\\
         & \leq \sum_{i,j\in \tilde{I}}\ \int_{V_i\times V_j}\frac{1}{\lambda(V_i)\lambda(V_j)}\left[\int_{V_i\times V_j}\theta(|x-u|+|y-v|) \mathrm{d}u \mathrm{d}v\right]\mathrm{d}x \mathrm{d}y\\
         & \leq \sum_{i,j\in \tilde{I}}\ \int_{V_i\times V_j}\frac{1}{\lambda(V_i)\lambda(V_j)}\left[\int_{V_i\times V_j}\frac{2\theta}{n} \mathrm{d}u \mathrm{d}v\right]\mathrm{d}x \mathrm{d}y\\
         & = \frac{2\theta}{n} \sum_{i,j\in \tilde{I}}\ \lambda(V_i)\lambda(V_j) \leq \frac{2\theta}{n} \sum_{i,j\in [I]}\ \lambda(V_i)\lambda(V_j)\\
         & \leq \frac{2\theta}{n}
\end{align*}
\endgroup

Now we will analyze the indices in $\tilde{J}$.
Since both $\mathcal{I}$ and $\mathcal{V}$ are partitions of $[0,1]$ it is clear that for every $a_k$ there is some $V_i$ such that $a_k\in V_i$ and, therefore, $|\tilde{J}|\leq K$. 
Note that the elements of the partition $\mathcal{V}$ which intersects more than one element of the partition $\mathcal{I}$ corresponds to sets where the function is not $\theta$\nobreakdash-Lipschitz or has discontinuity points.
Now, since the DiKernel $W$ is a bounded and measurable function, there is a number $M>0$ such that $|W(x,y)|\leq M$. This number allows us to bound the distance between $W$ and $W_\mathcal{V}$ in these discontinuity points as follows:

\begingroup
\allowdisplaybreaks
\begin{align*}
    \Delta_{\tilde{J}}&=\sum_{i,j\in \tilde{J}}\ \int_{V_i\times V_j}\left|W(x,y)-W_\mathcal{V}(x,y)\right|\mathrm{d}x \mathrm{d}y \leq\sum_{i,j\in \tilde{J}}\ \int_{V_i\times V_j}M\mathrm{d}x \mathrm{d}y\\
    & = M \sum_{i,j\in \tilde{J}}\ \lambda(V_i)\lambda(V_j) \leq M\frac{|\tilde{J}|^2}{n^2}  \leq M\frac{K^2}{n^2}
\end{align*}
\endgroup

Then $\|W-W_\mathcal{V}\|_\square\leq  \Delta_{\tilde{I}} + \Delta_{\tilde{J}} \leq \frac{2\theta}{n}+M\frac{K^2}{n^2}$, which is the desired result.
\end{proof}

%% This proof is only in the appendix
\begin{proof}[Proof of Theorem~\ref{thm:dynamic_convergence}]
%First, let $t=1$ and $\eta> 0$. Using the result of Lemma~\ref{lem:graphon_distance} for graphons $W$ and $W_{\mathcal{V}}$, we know that ${\left\|\left(\operatorname{T}(W)f\right)(x)-\left(\operatorname{T}(W_{\mathcal{V}})f\right)(x)\right\|}_1\leq \|f-f\|_1+4 \| W-W_{\mathcal{V}}\|_{\square}=4\| W-W_{\mathcal{V}}\|_{\square}$ for every $n\in \mathbb{N}_+$. 

Fix the integer $n_0$ such that $n_0>\frac{8\theta+\sqrt{64\theta^2+16K^2M\eta}}{2\eta}$ and consider the DiKernel $W_{\mathcal{V}}$ from partition $\mathcal{V}=\{V_i\}_{i\in [I]}$ of $[0,1]$ such that $\forall i\in [I], \lambda(V_i)\leq\frac{1}{n}$ with $n>n_0$. 
Then using the result from Lemma~\ref{lem:distance_partition2} we know that $\| W-W_{\mathcal{V}}\|_{\square}\leq \frac{2\theta}{n}+\frac{MK^2}{n^2}$.
Then it follows that, ${\left\|\left(\operatorname{T}(W)f\right)(x)-\left(\operatorname{T}(W_{\mathcal{V}})f\right)(x)\right\|}_1\leq 4 \| W-W_{\mathcal{V}}\|_{\square} \leq \frac{8\theta}{n}+\frac{4MK^2}{n^2} \leq \frac{8\theta}{n_0}+\frac{4MK^2}{n_0^2} <\eta$.

Then the result follows from  Proposition \ref{prop:dynamic_partition} and Proposition\ref{prop:discounted_partition}.

\end{proof}

\section{Supplementary Material for Section~\ref{sec:competition}}\label{appendix:competition}

\begin{proposition}\label{prop:competition_utility} If the operator $C:L^1([0,1])\times S_1\times S_2\to L^1([0,1])$ satisfies Assumption~\ref{assump:competition_operator} then for every $i\in I$, the utility function $U_i$ is concave in $s_i$ and continuous in $s_i$ and $s_{-i}$. 
\end{proposition}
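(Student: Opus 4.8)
The plan is to exploit the linearity of both the DeGroot operator $\operatorname{T}(W)$ and the stage utilities $u_i$ in order to collapse the entire discounted dynamic into a single sign-definite bounded linear functional acting on the output of the competition operator $C$. For $i=1$ the stage utility $u_1(f)=\int_0^1 f(x)\psi_1(x)\,\mathrm{d}x$ is linear in $f$, and for each $t$ the map $g\mapsto \operatorname{T}^t(W)g$ is linear from $L^1([0,1])$ to itself; hence $g\mapsto u_1\!\left(\operatorname{T}^t(W)g\right)$ is a linear functional, and it is positive (order-preserving) because $\psi_1\geq 0$ and $W\geq 0$. I would then set
\[
L_1(g)=(1-\delta)\sum_{t=1}^\infty \delta^t\, u_1\!\left(\operatorname{T}^t(W)g\right),
\]
and verify it is a well-defined positive bounded linear functional: since opinions remain in $[-1,1]$ and $\int_0^1\psi_1=1$, each term is bounded by $1$, so the series converges absolutely. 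With this notation $U_1(f_0,s_1,s_2)=L_1\!\left(C(f_0,s_1,s_2)\right)$, i.e. the only nonlinearity left is $C$ itself. The case $i=2$ is identical except that $u_2=-\int f\psi_2$, so $U_2=-L_2\circ C$ with $L_2$ positive and linear.

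\textbf{Concavity.} Once $U_i$ is written as a sign-definite linear functional composed with $C$, concavity in the own strategy is immediate from the corresponding property of $C$ in Assumption~\ref{assump:competition_operator}. For $i=1$: if $C$ is concave in $s_1$, so that $C(f_0,\lambda s_1+(1-\lambda)s_1',s_2)\geq \lambda C(f_0,s_1,s_2)+(1-\lambda)C(f_0,s_1',s_2)$ pointwise, then applying the positive linear functional $L_1$ preserves the inequality and distributes over the convex combination, yielding concavity of $U_1$ in $s_1$. For $i=2$, $C$ is convex in $s_2$, and composing a convex map with the negative linear functional $-L_2$ again gives a function that is concave in $s_2$. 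The single structural fact used is that a positive linear functional preserves concavity and a negative one turns convexity into concavity.

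\textbf{Continuity.} Continuity of $U_i$ in $(s_1,s_2)$, in the topology of Assumption~\ref{assump:strategy_set}, follows by composition and uniform convergence. By Assumption~\ref{assump:competition_operator}, $(s_1,s_2)\mapsto C(f_0,s_1,s_2)$ is continuous into $L^1([0,1])$; each $\operatorname{T}^t(W)\colon L^1\to L^1$ is a bounded, hence continuous, linear operator since $\|\operatorname{T}(W)g\|_1\le M\|g\|_1$ using $W\le M$; and $u_i$ is a continuous linear functional. Therefore each term $(s_1,s_2)\mapsto \delta^t u_i\!\left(\operatorname{T}^t(W)C(f_0,s_1,s_2)\right)$ is continuous. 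Moreover, because the output of $C$ lies in $[-1,1]$ and $W$ is row-stochastic, $\operatorname{T}^t(W)C$ stays in $[-1,1]$ and $|u_i(\operatorname{T}^t(W)C)|\le 1$; thus the $t$-th summand is bounded by $(1-\delta)\delta^t$ uniformly in $(s_1,s_2)$. The Weierstrass $M$-test then yields uniform convergence of the series, and a uniform limit of continuous functions is continuous, giving joint continuity of $U_i$ and in particular continuity in $s_i$ and $s_{-i}$.

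\textbf{Main obstacle.} The delicate point is the concavity step. Assumption~\ref{assump:competition_operator} only grants \emph{quasi}-concavity/quasi-convexity of $C$, and a positive linear functional does \emph{not} in general preserve quasi-concavity (applying $L_1$ to $\min\{C(s_1),C(s_1')\}$ gives a lower bound on the wrong side). Obtaining genuine concavity of $U_i$ as stated therefore needs the stronger concavity of $C$ in $s_1$ and convexity in $s_2$ that in fact hold for the operator of Example~\ref{ex:weighted_average_competition}; I would either record this stronger hypothesis explicitly, or, if only quasi-concavity is available, weaken the conclusion to quasi-concavity of $U_i$, which is still exactly what the Glicksberg argument underlying Theorem~\ref{thm:existence} requires. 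The remaining bookkeeping — boundedness of $L_i$ and compatibility of the chosen topology with the continuity of $u_i$ and $\operatorname{T}^t(W)$ — is routine given the boundedness of $W$ and of the opinions.
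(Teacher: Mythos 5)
Your argument is correct given concavity of $C$ in $s_1$ and convexity in $s_2$, and it is essentially the paper's own proof: the paper likewise passes concavity/convexity of $C$ through the positive linear maps $\operatorname{T}^t(W)$ and $u_i$ (with the sign of $u_2$ converting convexity in $s_2$ into concavity of $U_2$) and through the discounted sum; your packaging of the whole sum into a single positive bounded linear functional $L_i$, and the explicit Weierstrass $M$-test for continuity, merely make precise what the paper compresses into ``the continuity of $U_i$ follows an analogous argument.''

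The obstacle you flag, however, is a real gap --- in the paper, not in your write-up. Assumption~\ref{assump:competition_operator} grants only quasi-concavity in $s_1$ and quasi-convexity in $s_2$, yet the proofs of Proposition~\ref{prop:competition_utility} and of Theorem~\ref{thm:existence} both silently use genuine concavity/convexity of $C$; as you observe, positive linear functionals do not preserve the quasi- versions, so the proposition as stated does not follow from the assumption as stated. The mismatch is not innocuous: the operator of Example~\ref{ex:initial_opinion_plus_competition} (truncated additive influence) is monotone, hence quasi-concave and quasi-convex, but it is neither concave nor convex, so that example is not actually covered by this proposition's proof. One caveat on your proposed repairs: alternative (b) --- keeping the quasi- hypothesis and weakening the conclusion to quasi-concavity of $U_i$ --- fails for exactly the reason you yourself give, since quasi-concavity of $C$ does not survive integration either; only alternative (a), strengthening Assumption~\ref{assump:competition_operator} to concavity in $s_1$ and convexity in $s_2$ (as holds for the operator of Example~\ref{ex:weighted_average_competition}), makes the proposition and the existence theorem go through as proved.
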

\begin{proof}
Note that the composition of the operator $C$ with operator $\operatorname{T}(W)$ preserves the concavity in $s_1$ and convexity in $s_2$ since the operator $\operatorname{T}(W)$ is linear.
Now for Player~1 if we compose $\operatorname{T}(W)\circ C$ with $u_1$ the concavity in $s_1$ is preserved since $u_1$ is an integral operator, on the other side for Player~2 if we compose $\operatorname{T}(W)\circ C$ with $u_2$ the resulting operator is concave in $s_2$ thanks to the negative sign in $u_2$. 
Then we can conclude that $U_i$ is concave in $s_i$ since $U_i$ is a linear operator which is the discounted sum of $u_i\circ \operatorname{T}(W)\circ C$. The continuity of $U_i$ follows an analogous argument. 
\end{proof}

%% Existence Theorem: proof only in the appendix
\begin{proof}[Proof of Theorem~\ref{thm:existence}]
We aim at applying a generalized version proposed by \cite{glicksberg1952further} of the Nash Theorem established in \cite{nash1950equilibrium}. 
First, from Assumption~\ref{assump:strategy_set} we know that the strategy set of the lobbies is compact and using the Cartesian product it must follow that $S_1\times S_2$ is also nonempty, convex and compact. 
Second, from Assumption~\ref{assump:competition_operator} we know that the operator $C:L^1([0,1])\times S_1\times S_2\to L^1([0,1])$ is concave in $s_1$ and convex in $s_2$ then, by Proposition~\ref{prop:competition_utility}, we know that $U_i$ is concave in $s_i$. Since we have that for each $i\in I$ the space of strategies $S_i$ is compact and convex and the utility function $U_i(s_i, s_{-i})$ is continuous in $s_i$ and $s_{-i}$, and concave in $s_i$ we can apply the generalization of Nash Theorem established by \cite{glicksberg1952further}. 

For completeness we present the proof applying Kakutani's fixed\nobreakdash-point theorem to the best\nobreakdash-response correspondence.
We need to verify if the best-response correspondence satisfies the necessary conditions to apply the fixed-point theorem. 
Let us define the best\nobreakdash-response correspondences $B_i:S_{-i}\to 2^{S_i}$ defined by $B_i(s_{-i})=\argmax_{s_i\in S_i} U_i(f_0,s_i,s_{-i})$. 
Note that for every $s_{-i}\in S_{-i}$ the set $B(s_{-i})$ is nonempty since $S_{i}$ is compact and the function $U_i$ is continuous in $\tau$ then $U_i(f,s_i,s_{-i})$ is also compact, so by the Extreme Value Theorem there is $s_i^*\in S_{i}$ such that $U_i(f,s_i^*,s_{-i})\geq U_i(f,s_i,s_{-i})$ for every $s_i\in S_{i}$ and therefore $s_i^*\in B_i(s_{-i})$.
To show that $B_i(s_{-i})$ is closed we must construct a convergent sequence $(p_k)_{k\in \mathbb{N}}$ in $B_i(s_{-i})$ with limit $p$ and show that $p\in B_i(s_{-i})$. 
To do so, note that $p_k\in B_i(s_{-i})$ implies that $U_i(f,p_k,s_{-i})\geq U_i(f,s_i,s_{-i})$ for every $s_i\in S_i$, then by the continuity of $U_i$ we know that $U_i(f,p,s_{-i})\geq U_i(f,s_i,s_{-i})$ and therefore $p\in B_i(s_{-i})$. 
Finally, the convexity follows from the concavity of $U_i$. 

Let us define $B:S_1\times S_2 \to 2^{S_1\times S_2}$ defined by $B(s_1,s_2)=\prod_{i\in I} B_i(s_{-i})$ which is a nonempty, closed and convex set since the Cartesian product preserves these properties.
Then we can apply the fixed-point theorem to conclude that there is $(s_1,s_2)^*\in S_1\times S_2$ such that $(s_1,s_2)^*\in B((s_1,s_2)^*)$, which concludes the proof of existence of the equilibrium of the game.
\end{proof}

%%Proof of Example only in the Appendix
\begin{proof}[Proof of Example~\ref{ex:weighted_average_competition}]
We start by the Arzel\`a\nobreakdash-Ascoli theorem, which shows that $\mathcal{S}$ is compact for the $\|.\|_\infty$-norm.
In addition, this set is also convex since the point-wise convex combination of two bounded $\theta\text{\nobreakdash-Lipschitz}$ functions is also bounded and $\theta\text{\nobreakdash-Lipschitz}$.

Let us now focus on
\[
C(f_0,s_1,s_2)(x)=\frac{s_1(x)-s_2(x) + f_0(x)s_0(x)}{s_1(x)+s_2(x)+s_0(x)}.
\]
We prove first that the operator $C$ is concave in $s_1$ and convex in $s_2$. 
Consider the operator $\tilde{C}:S_1\to L^1([0,1])$ as $\tilde{C}(s_1)=\frac{s_1(x)+\beta}{s_1(x)+\alpha}$ as the simplified form of the operator when the lobby 1 considers every else as a constant and $\beta=-s_2(x)+f_0(x)s_0(x)$ and $\alpha=s_2(x)+s_0(x)$. 
To evaluate the concavity consider $s_1, s_1^\prime\in S_1$ and $\rho\in[0,1]$ and note that $\tilde{C}(\rho s_1+(1-\rho)s_1^\prime)=\frac{(\rho s_1+(1-\rho)s_1^\prime)(x)+\beta}{(\rho s_1+(1-\rho)s_1^\prime)(x)+\alpha} = \frac{\rho s_1(x)+(1-\rho)s_1^\prime(x)+\beta}{\rho s_1(x)+(1-\rho)s_1^\prime(x)+\alpha}$. 
Since the functions $s_1$ and $s_1^\prime$ are real valued-functions, then for every $x\in[0,1]$ there are real numbers $x_1$ and $x_1^\prime$ such that $s_1(x)=x$ and $s_1^\prime(x)=x_1^\prime$.
Then we can rewrite $\tilde{C}(\rho s_1+(1-\rho)s_1^\prime)=\frac{\rho x_1+(1-\rho)x_1^\prime+\beta}{\rho x_1+(1-\rho)x_1^\prime+\alpha}$ which is a concave real-valued function when $\alpha \geq \beta$, which is always the case, since $f_0(x)\leq 1$ for every $x$. 
In addition, $\tilde{C}$ is $\frac{\alpha-\beta}{\alpha^2}\text{\nobreakdash-Lipschitz}$ since $|\tilde{C}(s_1)-\tilde{C}(s_1^\prime)|\leq \left|\frac{s_1+\beta}{s_1+\alpha}-\frac{s_1^\prime+\beta}{s_1^\prime+\alpha}\right|\leq \left|\frac{(\alpha-\beta)s_1-s_1^\prime}{(s_1+\alpha)(s_1^\prime+\alpha)}\right|\leq \frac{(\alpha-\beta)}{\alpha^2}\left|(s_1-s_1^\prime)\right|$. 
Then we can conclude that the operator $C$ is concave in $s_1$. 
Analogously we can prove that the operator $C$ is convex in $s_2$. 
Furthermore, the operator $C$ is Lipschitz in $s_1$ and $s_2$.

Hence,the set $\mathcal{S}$ and the operator $C$ satisfy Assumptions~\ref{assump:strategy_set} and~\ref{assump:competition_operator} then we know by Theorem~\ref{thm:existence} that the equilibrium of the game exists. 
\end{proof}

%%Proof only in Appendix
\begin{proof}[Proof of Theorem~\ref{thm:epsilon_nash}]
    The proof of this result follow from Theorems~\ref{thm:dynamic_convergence} and~\ref{thm:existence}.
\end{proof}

\end{appendix}
\end{document}